\theoremstyle{plain}
\newtheorem{theorem}{Theorem}
\newtheorem{proposition}[theorem]{Proposition}
\newtheorem{lemma}[theorem]{Lemma}
\newtheorem{corollary}[theorem]{Corollary}
\newtheorem{definition}[theorem]{Definition}
\numberwithin{theorem}{section} \numberwithin{equation}{section}
\newcommand{\donothing}[1]{}
\newcommand{\DETAILS}[1]{}
\begin{document}


\newcommand{\nc}{\newcommand}

\nc{\be}{\begin{equation}}
\nc{\la}{\label}
\nc{\ba}{\begin{array}}
\nc{\ea}{\end{array}}
\nc{\bs}{\begin{split}}
\nc{\es}{\end{split}}

\nc{\J}{\mathbb J}

\newcommand{\R}{\mathbb{R}}
\newcommand{\C}{\mathbb{C}}
\newcommand{\Z}{\mathbb{Z}}
\newcommand{\T}{\mathbb{T}}

\newcommand{\cA}{\mathcal{A}}
\newcommand{\cB}{\mathcal{B}}
\newcommand{\cC}{\mathcal{C}}
\newcommand{\cD}{\mathcal{D}}
\newcommand{\cE}{\mathcal{E}}
\newcommand{\cF}{\mathcal{F}}
\newcommand{\cG}{\mathcal{G}}
\newcommand{\cH}{\mathcal{H}}
\newcommand{\cI}{\mathcal{I}}
\newcommand{\cJ}{\mathcal{J}}
\newcommand{\cK}{\mathcal{K}}
\newcommand{\cL}{\mathcal{L}}         
\newcommand{\cM}{\mathcal{M}}         
\newcommand{\cN}{\mathcal{N}}         
\newcommand{\cO}{\mathcal{O}}         
\newcommand{\cP}{\mathcal{P}}         
\newcommand{\cQ}{\mathcal{Q}}
\newcommand{\cR}{\mathcal{R}}
\newcommand{\cS}{\mathcal{S}}
\newcommand{\cT}{\mathcal{T}}
\newcommand{\cU}{\mathcal{U}}
\newcommand{\cV}{\mathcal{V}}
\newcommand{\cW}{\mathcal{W}}
\newcommand{\cX}{\mathcal{X}}
\newcommand{\cY}{\mathcal{Y}}
\newcommand{\cZ}{\mathcal{Z}}
\newcommand{\Lat}{\mathcal{L}}

\newcommand{\Ge}{\mathcal{G}}
\newcommand{\Per}{\mathcal{P}}

\nc{\E}{{\cal E}}

\nc{\e}{\epsilon}

\nc{\G}{\Gamma}
\nc{\g}{\gamma}
\nc{\al}{\alpha}
\nc{\del}{\delta}

\nc{\Lam}{\Lambda}

\nc{\lam}{\lambda}
\nc{\lamb}{\lambda}
\nc{\taul}{\lambda}
\nc{\om}{\omega}
\nc{\Om}{\Omega}
\nc{\Omt}{\tilde{\Omega}}
\nc{\ta}{\tau}
\nc{\w}{\omega}
\nc{\io}{\iota}
\nc{\h}{\theta}
\nc{\z}{\zeta}
\nc{\s}{\sigma}
\nc{\Si}{\Sigma}
\nc{\vphi}{\varphi}

\nc{\bP}{\bar{P}}
\nc{\bQ}{\bar{Q}}
\nc{\bL}{\bar{L}}

\nc{\chit}{\tilde{\chi}}
\nc{\It}{\tilde{I}}
\nc{\alt}{\tilde{\alpha}}
\nc{\Mt}{\tilde{M}}
\nc{\gt}{\tilde{\gamma}}
\nc{\Jt}{\tilde{J}}

\renewcommand{\L}{\mathcal{L}}

\nc{\oP}{\overline P}

\newcommand{\LAT}{\mathcal{L}}

\nc{\zb}{\underbar{z}}
\nc{\pb}{\underbar{p}}
\nc{\bb}{\underbar{b}}
\nc{\zbt}{\underbar{z}(t)}

\nc{\dA}{\nabla_A}

\newcommand{\p}{\partial}
\newcommand{\pt}{\partial_t}
\newcommand{\ptt}{\partial_t^2}
\newcommand{\n}{{\nabla}}

\newcommand{\Curl}{\operatorname{curl}}
\newcommand{\curl}{\operatorname{curl}}
\newcommand{\CURL}{\operatorname{curl}}
\newcommand{\divv}{\operatorname{div}}
\newcommand{\Div}{\operatorname{div}}
\newcommand{\DIV}{\operatorname{div}}
\renewcommand{\div}{\operatorname{div}}
\newcommand{\grad}{\operatorname{grad}}

\newcommand{\Cov}[1]{\nabla_{\!\!#1}}
\newcommand{\COVGRAD}[1]{\nabla_{\!\!#1}}

\newcommand{\COVLAP}[1]{\Delta_{\!#1}}

\renewcommand{\Re}{\operatorname{Re}}
\renewcommand{\Im}{\operatorname{Im}}
\newcommand{\im}{\operatorname{Im}}
\newcommand{\re}{\operatorname{Re}}

\newcommand{\one}{\mathbf{1}}
\newcommand{\id}{{\bfone}}
\nc{\bfone}{{\bf 1}}

\nc{\ran}{\rangle}
\nc{\lan}{\langle}

\newcommand{\ra}{\rightarrow}


\newcommand{\NULL}{\operatorname{Null}}
\newcommand{\RANGE}{\operatorname{Ran}}
\newcommand{\Null}{\operatorname{Null}}
\newcommand{\Ran}{\operatorname{Ran}}
\newcommand{\Tr}{\operatorname{Tr}}
\newcommand{\diag}{\operatorname{diag}}
\newcommand{\dist}{\operatorname{dist}}

\newcommand{\Lpsi}[2]{\mathscr{L}_{#2}^{}(\tau)}
\newcommand{\Hpsi}[2]{\mathscr{H}_{#2}^{}(\tau)}
\newcommand{\LA}[2]{\vec{\mathscr{L}}_{}^{}(\tau)}
\newcommand{\HA}[2]{\vec{\mathscr{H}}_{}^{}(\tau)}

\newcommand{\ls}{\lesssim}

\newcommand{\CELLAVG}[2]{\left\langle #2 \right\rangle_{#1}}
\newcommand{\DOT}[3]{\left\langle #2, #3 \right\rangle_{#1}}
\newcommand{\NORM}[2]{\left\| #2 \right\|_{#1}}

\newcommand{\TwoByOne}[2]{\left( \begin{array}{c} #1 \\ #2 \end{array} \right)}
\newcommand{\TwoByTwo}[4]{\left( \begin{array}{cc} #1 & #2 \\ #3 & #4 \end{array} \right)}
\newcommand{\ThreeByThree}[9]{\left( \begin{array}{c} #1 & #2  & #3  \\ #4 & #5  & #6  \\ #7 & #8  & #9  \end{array} \right)}
\newcommand{\FourByOne}[4]{\left( \begin{array}{c} #1 \\ #2 \\ #3 \\ #4 \end{array} \right)}

\nc{\nj}{(n_j)}
\nc{\nk}{(n_k)}
\nc{\nl}{(n_l)}
\nc{\nr}{(n_r)}
\nc{\nt}{(n_t)}

\nc{\vz}{v_{\zb,\chi}}
\nc{\vzo}{v_{\zb_0,\chi_0}}
\nc{\vzt}{v_{\zb(t),\chi(t)}}
\nc{\psiz}{\psi_{\zb,\chi}}
\nc{\Az}{A_{\zb,\chi}}
\nc{\Bz}{B_{\zb,\chi}}
\nc{\jz}{j_{\zb,\chi}}
\nc{\Tz}{T^{(\zb,\chi)}}
\nc{\Gz}{G^{(\zb,\chi)}}
\nc{\Lz}{L_{\zb,\chi}}
\nc{\Lzo}{L_{\zb_0,\chi_0}}
\nc{\Pz}{P_{\zb,\chi}}
\nc{\bPz}{\bar{P}_{\zb,\chi}}
\nc{\phiz}{\phi_{\zb,\chi,\pb,\zeta}}
\nc{\phizo}{\phi_{\zb_0,\chi_0,\pb_0,\zeta_0}}
\nc{\phizt}{\phi_{\zb(t),\chi(t),\pb(t),\zeta(t)}}
\nc{\wz}{w_{\zb,\chi,\pb,\zeta}}
\nc{\wzo}{w_{\zb_0,\chi_0,\pb_0,\zeta_0}}

%

\title{Magnetic Vortices, 
 Abrikosov Lattices and Automorphic Functions
\thanks{\copyright 2013 by the author. This paper may be reproduced, in its entirety, for non-commercial purposes.}}
\author{
I. M. Sigal\footnote{Department of Mathematics, University of Toronto, Toronto, ON, Canada, M5S 2E4}}
\date{August 2013}

\maketitle

\begin{abstract}
We address the macroscopic theory of superconductivity - the 
Ginzburg-Landau theory. This theory 
is based on the celebrated  Ginzburg - Landau equations. First developed to explain and predict properties of superconductors, these equations form an integral part - Abelean-Higgs component - of the standard model of particle physics and, in general, have a profound influence on physics well beyond their original designation area. 

We present recent results 
and review earlier works  
 involving key solutions of these equations - the magnetic vortices (of Nielsen-Olesen ( Nambu) strings in 
 particle physics) and vortex lattices, their existence, stability and dynamics, and how they relate to the modified theta functions appearing in number theory. Some automorphic functions appear naturally and play a key role in this theory.\\
 

\smallskip
\noindent  Keywords: Ginzburg-Landau equations,  magnetic vortices, superconductivity, Abrikosov vortex lattices, vortex stability, vortex dynamics, bifurcations.

 \bigskip
\noindent This paper is contribution to the International Conference on Applied Mathematics, Modeling and Computational Science (AMMCS-2013),
 Waterloo, Ontario, Canada from August, 2013.\\ 
 
\end{abstract}

\tableofcontents

\section{Introduction}
In this contribution we present some recent results on the Ginzburg-Landau equations of superconductivity and  review appropriate background.
The Ginzburg-Landau equations describe the key mesoscopic and macroscopic properties of
superconductors and form the basis of the phenomenological theory of
superconductivity. They are thought of as the result of
coarse-graining the Bardeen-Cooper-Schrieffer microscopic theory, and were derived
from the latter 
by Gorkov \cite{Gorkov}. (Recently, the rigorous derivation in the case of non-dynamic magnetic fields was achieved in \cite{FHSS}.) 

These equations appear also in particle physics, as the Abelean-Higgs model, which is the simplest, and arguably most
important, ingredient of the standard
model \cite{Witten}.
 Geometrically, they are the simplest equations describing the interaction of the electro-magnetic
field and a complex field, and can be thought of as 
the `Dirichlet'
problem for a connection of $U(1)$-principal bundle and a section of associated vector bundle.

One of the most interesting mathematical and physical
phenomena connected with Ginzburg-Landau equations
is the presence of {\it vortices} in their solutions.
Roughly speaking, a vortex is a spatially localized structure in the solution, characterized by a non-trivial topological degree (winding number). 
It represents a localized defect where the normal state
intrudes, and magnetic flux penetrates. it is called the magnetic vortex.

Vortices exist on their own, 
 or, as 
predicted by A. Abrikosov  \cite{Abr} in 1957, 
they  can be arrayed in a lattice pattern. (In 2003, Abrikosov received the Nobel Prize for this discovery.)

Individual vortices and vortex lattices is the subject of this contribution.  In it we present already classical results on 
the former and recent results on the latter. It can be considered as an update of the review \cite{GST}, from which for convenience of the reader, we reproduce some material. 

Like the latter review, we  do not 
discuss the 
two important areas,  the $\kappa\ra \infty$ regime 
(the quasiclassical limit of the theory) and the linear eigenvalue problem related
to the second critical magnetic field. Fairly extensive reviews of these problems are given in \cite{SS} and \cite{Helffer}, respectively. We also mention the book \cite{bbh} which inspired much of the activity in this area.

\DETAILS{In the last decade or so, vortex solutions have become the object of
intense mathematical study, in several directions.
One direction is to consider the singular limit
(``extreme type II'') $\kappa \to \infty$ (on a bounded
domain), in which vortices become point defects whose locations
are determined by some reduced finite-dimensional problem --
see, for example the books of Bethuel-Brezis-H\'elein~\cite{bbh}
(for a model problem without magnetic field) and
Serfaty-Sandier~\cite{SS}.

The Ginzburg-Landau equations became an object of mathematical
investigation soon after their invention in the early 1950's,
and in the last 15 years this subject flourished, partly due to the
influence of the book \cite{bbh} on the related Ginzburg-Landau
equations of superfluidity. Two aspects of this development were
recently extensively reviewed: the regime of large Ginzburg-Landau
parameter $\kappa$, the quasiclassical limit of the theory,
in \cite{ss}; and the linear eigenvalue problem related
to the second critical magnetic field, in \cite{Helffer}.

In this article, we review a third aspect of the theory:
the key -- {\it vortex} -- solutions, and their dynamics, for any
value of the Ginzburg-Landau parameter $\kappa$. Moreover, we consider
only bulk superconductors filling all $\R^3$, with no variation along
one direction, so that the problem is reduced to one on $\R^2$.
The list of open problems presented in the last section shows  that this area
sits poised for broad and rapid development.}

\medskip

\noindent \textbf{Acknowledgements} \\
The author is are grateful to Stephen Gustafson, Yuri Ovchinnikov and Tim Tzaneteas for many fruitful discussions and collaboration. Author's research is supported in part by NSERC under Grant NA 7901. 


\section{The Ginzburg-Landau Equations}

The Ginzburg-Landau theory (\cite{GL})   gives a macroscopic
description of superconducting materials in terms of a  pair  $(\Psi , A) : \R^d \to \C \times \R^d$, $d=1, 2, 3$, a complex-valued function $ \Psi(x)$, called an {\it order parameter}, so that $|\Psi(x)|^2$ gives the local density
of (Cooper pairs of) superconducting electrons, and the vector field $A(x)$, 
so that $  B(x) := \curl A(x)$ is the magnetic field. In equilibrium, 
they satisfy the system of nonlinear PDE
called the {\it Ginzburg-Landau equations}:
\begin{equation}
\label{GLE}
   \ba{c}
   -\Delta_A \Psi = \kappa^2 (1-|\Psi|^2) \Psi   \\
   \curl^2 A = \Im ( \bar{\Psi} \nabla_A \Psi )
   \ea
\end{equation}
where $\nabla_A = \nabla - iA$, and $\Delta_A = \nabla^2_A$,
the covariant derivative and covariant Laplacian,
respectively, and $\kappa >0$ is a parameter, called the  Ginzburg-Landau parameter,  depends on the material properties of
the superconductor.  For $d=2$, $\curl A := \p_1 A_2 - \p_2 A_1$ is a scalar, and
for scalar $B(x) \in \R$, $\curl B = (\p_2 B, -\p_1 B)$ is a vector.
The vector quantity $  J(x) := \Im ( \bar{\Psi} \nabla_A \Psi )$ is the superconducting current.  (See eg. \cite{Tink,Tilley}). 

\medskip

\noindent
\textit{Particle physics}.
In the Abelian-Higgs model, $\psi$ and $A$ are the Higgs and
$U(1)$ gauge (electro-magnetic) fields, respectively.  Geometrically,
one can think of $A$ as a connection on the principal $U(1)$-bundle
$\R^d  \times U(1),\ d=2, 3$.

\medskip

\noindent
\textit{Cylindrical geometry}.
In the commonly considered idealized situation of a superconductor occupying all space
and homogeneous in one direction, one is led to a problem on $\R^2$
and so may consider $\Psi : \R^2 \to \C$ and  $A : \R^2 \to \R^2$. 
This is the case we deal with in this contribution. 

\subsection{Ginzburg-Landau energy}

The Ginzburg-Landau equations~\eqref{GLE}
are the Euler-Lagrange equations for critical points of the
\textit{Ginzburg-Landau energy functional} (written here for a domain $Q\in \R^2$)
\be \la{GLEn}
  \E_{Q}(\Psi,A) := \frac{1}{2} \int_{Q} \left\{  |\nabla_A \Psi|^2 + (\curl A)^2
  + \frac{\kappa^2}{2}(|\Psi|^2-1)^2 \right\}.
\end{equation}

\medskip

\noindent
{\it Superconductivity:}
In the case of superconductors, the
functional $\E(\psi,A)$ gives the difference in (Helmholtz) free energy (per unit length in the third direction)
between the superconducting and normal states, near the
transition temperature. 

\DETAILS{, and in the absence of an external
magnetic field. In the presence of an external field
$H(x) \hat{k}$ ($\hat{k}$ a unit vector in the direction
in which the superconductor is homogeneous), the term
$(\curl A)^2$ in the energy density is replaced
by $(curl A - H)^2$. If the field strength $H$ is constant
(a common scenario in the literature), this modification
leaves the Euler-Lagrange equations -- the Ginzburg-Landau
equations~\eqref{eq:GL} -- unchanged.}

This energy depends on the temperature (through $\kappa$) and  the average magnetic field, $b=\lim_{Q\ra \R^2}\frac{1}{|Q|}\int_{Q} \curl A$, in the sample, as thermodynamic parameters. Alternatively, one can consider the free energy depending on the temperature and  an applied magnetic field, $h$. For a sample occupying a finite domain $Q$, this leads (through the Legendre transform) to 
the Ginzburg-Landau Gibbs free energy $G_Q(\Psi,A) :=\E_Q(\Psi,A)  -\Phi_Q h,$
where $\Phi_Q=b|Q|=\int_Q \curl A$ is the total magnetic flux through the sample.  
\be \la{GLEn-Gibbs}
  G_{Q}(\Psi,A) := \frac{1}{2} \int_{Q} \left\{  |\nabla_A \Psi|^2 
  + \frac{\kappa^2}{2}(|\Psi|^2-1)^2+ (\curl A-h)^2 \right\}.
\end{equation}
The parameters $b$ or $h$ do not enter the equations \eqref{GLE} explicitly, but they determine the density of vortices, which we describe below.

In what follows we write $\E(\Psi,A)=\E_{\R^2}(\Psi,A)$ and $G(\Psi,A)=G_{\R^2}(\Psi,A)$.

\medskip

\noindent
{\it Particle physics:}
In the particle physics case, the functional $\E(\Psi,A)$ gives
the energy of a static configuration in the $U(1)$
Yang-Mills-Higgs classical gauge theory.


\subsection{Symmetries of the equations}

The Ginzburg-Landau equations~\eqref{GLE} admit several
symmetries, that is, transformations which map solutions to
solutions.

{\it Gauge symmetry}:  for any sufficiently regular function $\gamma : \R^2 \to \R$,
\begin{equation}\label{gauge-sym}
    T^{\rm gauge}_\gamma : (\Psi(x), A(x)) \mapsto (e^{i\gamma(x)}\Psi(x), A(x) + \nabla\gamma(x));
\end{equation}

{\it Translation symmetry}: for any $h \in \R^2$,
\begin{equation}\label{transl-sym}
    T^{\rm trans}_h : (\Psi(x), A(x)) \mapsto (\Psi(x+h), A(x+h));
\end{equation}

{\it Rotation symmetry}: for any $\rho \in SO(2)$, 
\begin{equation}\label{rot-sym}
    T^{\rm rot}_\rho : (\Psi(x), A(x)) \mapsto (\Psi(\rho^{-1} x), \rho^{-1}A((\rho^{-1})^T x)),
\end{equation}

One of the analytically interesting aspects of the
Ginzburg-Landau theory is the fact that, because of the gauge
transformations, the symmetry group is infinite-dimensional.

\subsection{Quantization of flux}

Finite energy states $(\Psi,A)$ are classified by their
topological degree (the winding number of $\psi$ at infinity):
\[
  \deg(\Psi) := {\rm degree}
  \left( \left. \frac{\Psi}{|\Psi|} \right|_{|x|=R}
  : \mathbb{S}^1 \to \mathbb{S}^1 \right),
\]
for $R \gg 1$, since $|\Psi(x)| \to 1$ as $|x| \to \infty$.
For each such state we have the quantization of magnetic flux:
\[
  \int_{\R^2} B(x) dx = 2 \pi \deg (\Psi) \in 2 \pi \Z,
\]
which follows from integration by parts (Stokes theorem)
and the requirement that $|\Psi(x)| \to 1$
and $|\nabla_A \Psi(x)| \to 0$ as $|x| \to \infty$.

For 
vortex lattices (see below) the energy is infinite, but the flux quantization still holds for each lattice cell because of gauge-periodic boundary conditions (see below for details).


\subsection{Homogeneous solutions}

The simplest solutions to the Ginzburg-Landau
equations~\eqref{GLE} are the trivial ones corresponding
to physically homogeneous states:
\begin{enumerate}
\item the perfect superconductor solution, $(\Psi_s, A_s)$, where
$\Psi_s \equiv 1$ and $A_s \equiv 0$
(so the magnetic field $ \equiv 0$),
\item
the normal metal solution, $(\Psi_n, A_n)$,  where $\Psi_n \equiv  0$ and $A_n$ corresponds to a constant magnetic field.
\end{enumerate}
(Of course, any gauge transformation of one of
these solutions has the same properties.)

We see that the perfect superconductor is a solution only when the magnetic  field $B(x)$ is zero. On the other hand, there is a normal solution 
for any  constant magnetic field (to be thought of as determined by applied external magnetic field).


\subsection{Length scales; type I and II superconductors}

Solving the Ginzburg-Landau equations near a flat interface
between normal and superconducting states shows that (in our units),
the magnetic field varies on the length scale $1$, the
{\em penetration depth}, while the order parameter varies on the
length scale $\frac{1}{m_{\kappa}}$, the {\em coherence length},
where $m_\kappa := \min(\sqrt{2}\kappa, 2)$.

The two length length scales $1/m_{\kappa}$ and $1$ coincide at $\kappa=1/\sqrt{2}$. Considering a flat interface between the normal and superconducting states, one can show easily that at this point the surface tension changes sign from positive for $\kappa < 1/\sqrt{2}$ to negative for $\kappa > 1/\sqrt{2}$.

This critical value $\kappa=1/\sqrt{2}$ separates superconductors into
two classes with different properties:

$\kappa < 1/\sqrt{2}$: Type I superconductors,  exhibit first-order (discontinuous, finite size nucleation) phase transitions from the
non-superconducting state to the superconducting state (essentially, all pure metals);

 $\kappa > 1/\sqrt{2}$: Type II superconductors, exhibit
second-order (continuous) phase transitions and the formation of vortex lattices (dirty metals and alloys).

Thus $m_\kappa  <1$ for Type I superconductors and  $m_\kappa > 1$ for Type II superconductors.


\subsection{The self-dual case $\kappa = 1/\sqrt{2}$}

In the {\it self-dual} case $\kappa = 1/\sqrt{2}$ of~\eqref{GLE},
vortices effectively become non-interacting, and there is a rich
multi-vortex solution family. Bogomolnyi~\cite{Bog} found
the topological energy lower bound
\begin{equation}
\label{bog}
  \E(\Psi,A)|_{\kappa = 1/\sqrt{2}} \geq \pi \; | \deg(\Psi) |
\end{equation}
and showed that this bound is saturated (and hence the
Ginzburg-Landau equations are solved) when certain
{\it first-order} equations are satisfied.


\subsection{Critical magnetic fields}
In superconductivity there are several critical magnetic fields,  two of which (the first and the second  critical magnetic fields) are of special importance:

$h_{c1}$ is the field at which the first vortex enters the superconducting sample.
\smallskip

$h_{c2}$ is the field at which a mixed state bifurcates from the normal one. 
\smallskip

\noindent
(The critical field $h_{c1}$ is defined by the condition $G_{}(\Psi_s, A_s)=G_{}((\Psi^{(1)}, A^{(1)})$, where   $(\Psi_s, A_s)$ is the perfect superconductor solution, defined above, and  $(\Psi^{(1)}, A^{(1)})$ is the 1-vortex solution, defined below, while $h_{c2}$, by the condition that the linearization of the l.h.s. of \eqref{GLE} on the normal state $(\Psi_n, A_n)$ has zero eigenvalue. One can show that $h_{c2}=\kappa^2$.) 

 For type I superconductors $h_{c1} > h_{c2}$ and for type II superconductors $h_{c1} < h_{c2}$. In the former case, the vortex states have relatively large energies, i.e. are metastable, and therefore are of little importance.

For type II superconductors, there are two important regimes to consider: 1) average magnetic fields per unit area, $b$, are less than but sufficiently close to  $h_{c2}$,
\be \la{regime1} 0< h_{c2}-b \ll h_{c2}
\end{equation} 
and 2) the external (applied) constant magnetic fields, $h$,  are  greater than but sufficiently close to  $h_{c1}$,
\be \la{regime2} 
 0< h-h_{c1} \ll h_{c1}.
 \end{equation}
The reason the first condition involves $b$, while the second $h$ is that the first condition comes from the Ginzburg-Landau equations (which do not involve $h$), while the second from the Ginzburg-Landau Gibbs free energy. 

%
\DETAILS{For an external field $h_a>h_{c1}$ and with the intervortex interaction neglected, we have $G/|Q|=m(E^{(1)}-\Phi^{(1)} H_a)$, where $m$ 
is the number of vortices per unit area. Thus the behaviour of $G$, as a function of $m$, changes at $H_a=H_{c1}
:=\frac{E^{(1)}}{\Phi^{(1)}}$. If $H_a<H_{c1}$, then the Gibbs energy is lowered by having no vortices, and if $H_a>H_{c1}$, then the Gibbs energy is lowered by having vortices.}  
One of the differences between the regimes \eqref{regime1} and \eqref{regime2} is that $|\Psi|^2$ is small in the first regime (the bifurcation problem) and large in the second one. 
If a superconductor fills in the entire $\R^2$, then in the second regime, the average magnetic field per unit area, $b\ra 0$, as $h \to h_{c1}$. 
 \bigskip


\subsection{Time-dependent equations}

A number of dynamical versions of the Ginzburg-Landau equations
appear in the literature. Here we list the most commonly
studied and physically relevant.

\medskip

\noindent
{\it Superconductivity}.
In the leading approximation, the evolution of a superconductor
is described by the gradient-flow-type equations for the
Ginzburg-Landau energy
\begin{equation}\label{GES}
\begin{cases}
    \gamma \partial_{t\Phi} \Psi = \COVLAP{A}\Psi + \kappa^2(1 - |\Psi|^2)\Psi, \\
 \sigma \partial_{t,\Phi} A   = -\CURL^*\CURL A  + \Im(\bar{\Psi}\COVGRAD{A}\Psi).
\end{cases}
\end{equation}
Here $\Phi$ is the scalar (electric) potential, $\gamma$ a complex number, and $\sigma$ a two-tensor, and $\partial_{t\Phi}$ is the covariant time derivative $\partial_{t,\Phi}(\Psi, A)  = ((\partial_t + i\Phi)\Psi, \partial_t A + \nabla\Phi)$.
The second equation is Amp\`ere's law,  $\CURL  B=J$, with $J+J_N +J_S,$ where $  J_N= -\sigma (\partial_t A + \nabla\Phi)$ (using Ohm's law) is the normal current associated to the electrons not having formed Cooper pairs, and $J_S = \Im(\bar{\Psi}\COVGRAD{A}\Psi)$, the supercurrent. 

 These equations are called the {\it time-dependent Ginzburg-Landau equations}
or the {\it Gorkov-Eliashberg-Schmidt equations} proposed by Schmid (\cite{Schmidt}) and Gorkov and Eliashberg (\cite{GE})  (earlier versions are due to Bardeen and Stephen and Anderson, Luttinger and Werthamer).

\medskip

\noindent
{\it Particle physics}.
The time-dependent  $U(1)$ Higgs model is described by
\begin{equation}
\la{MHeq}
\bs
   \partial_{t\Phi}^2 \Psi &= \Delta_A \Psi  + \kappa^2 (1-|\Psi|^2) \Psi \\
   \partial_{t\Phi}^2 A &= -\CURL^*\CURL  A +  \Im(\bar{\Psi}\COVGRAD{A}\Psi),
\end{split}
\end{equation}
coupled (covariant) wave equations
describing the $U(1)$-gauge Higgs model of
elementary particle physics
(written here in the {\it temporal gauge}).
Equations~(\ref{MHeq}) are sometimes also
called the {\it Maxwell-Higgs equations}.

\medskip

In what follows, we concentrate on  the Gorkov-Eliashberg-Schmidt equations, \eqref{GES} and, for simplicity of notation,  we use the gauge, in which the scalar potential, $\Phi$, vanishes, $\Phi=0$.

\DETAILS{\medskip
 Multiplying the second equation in \eqref{GES} by $\sigma^{-1}$
and taking $\div$ of the result, we obtain
\begin{equation}\label{Phi-eq}
\Delta \Phi=-\partial_t \div A+\div\sigma^{-1}
[\im(\overline{\Psi}\nabla_A \Psi)- \CURL^2 A].
\end{equation}
This gives an equation for
$\Phi$, which can be easily solved. The solution is determined up to a \textit{harmonic function} on $\R^2$. We \textit{fix the solution} (up to a constant) by demanding that $\Phi$ is bounded.
In what follows we always
assume that $\Phi$ is a bounded solution of the equation \eqref{Phi-eq} and, in
particular, is a function of $\Psi$ and $A$, and we do not list it
among unknowns and use the notation $u=(\Psi, A)$.} 


\section{Vortices}


\subsection{$n$-vortex solutions}

A model for a vortex is given, for each degree $n \in \Z$,
by a ``radially symmetric''
(more precisely {\it equivariant}) solution
of the Ginzburg-Landau equations~\eqref{GLE} of the form
\begin{equation}
\label{eq:vort}
   \Psi^{(n)} (x) = f_n (r) e^{in\theta} {\hbox{\quad and \quad}}
   A^{(n)}(x) = a_n (r) \nabla (n\theta) \ ,
\end{equation}
where $(r,\theta)$ are the polar coordinates of $x \in \R^2$.
Note that $\deg(\Psi^{(n)}) = n$.
The pair $(\Psi^{(n)}, A^{(n)})$ is called the $n$-{\it vortex}
({\it magnetic} or {\it Abrikosov} in the case of
superconductors, and {\it Nielsen-Olesen} or {\it Nambu string} in
the particle physics case). For superconductors, this is a mixed
state with the normal phase residing at the point where the vortex
vanishes. The existence of such solutions of the Ginzburg-Landau
equations was already noticed by Abrikosov~\cite{Abr} and proven in \cite{BC}.  

Using  self-duality, and consequent reduction to a first-order equations, 
Taubes~\cite{Taub1, Taub2} has showed that for a given degree $n$, the family of solutions
modulo gauge transformations ({\it moduli space}) is
$2|n|$-dimensional, and the $2|n|$ parameters
describe the locations of the zeros of the scalar field
-- that is, the vortex centers. A review of this theory
can be found in the book of Jaffe-Taubes~\cite{JT}.

The $n$-vortex solution exhibits the length scales discussed
above. Indeed, the following asymptotics for the
field components of the $n$-vortex~\eqref{eq:vort}
were established in~\cite{p} (see also~\cite{JT}):
\be
\label{eq:decay}
  \ba{c}
  J^{(n)}(x) = n \beta_{n} K_1(r)[1 + o(e^{-m_{\kappa}r})] J\hat{x} \\
  B^{(n)}(r) = n \beta_{n} K_1(r)[1 - \frac{1}{2r} + O(1/r^2)] \\
  |1-f_n(r)| \leq c e^{-m_{\kappa} r},\
  |f_n'(r)| \leq c e^{-m_{\kappa} r},
  \ea
\end{equation}
as $r := |x| \ra \infty$, where
$J^{(n)} := \Im(\overline{\Psi^{(n)}} \nabla_{A^{(n)}} \Psi^{(n)})$
is the $n$-vortex supercurrent,
$B^{(n)} := \curl A^{(n)}$ is the $n$-vortex magnetic field,
$\beta_n > 0$ is a constant, and
$K_1$ is the modified Bessel function of order $1$
of the second kind. The length scale of $\Psi^{(n)}$ is $1/m_\kappa$. Since $K_1(r)$ behaves like
$c e^{-r} / \sqrt{r}$ for large $r$, we see
that the length scale for $J^{(n)}$ and $B^{(n)}$ is $1$. (In fact, for $x\ne 0$, $\Psi^{(n)}$ vanishes as $\kappa\ra \infty$.)




\subsection{Stability} 
\label{sec:stab}



\DETAILS{The linearized stability/instability result of
Theorem~\ref{thm:stab} for the exact $n$-vortex solution
$(\psi^{(n)}, A^{(n)})$ has dynamical counterparts,
proved in~\cite{G}, showing that the $n$-vortex is stable
as a solution of the gradient flow equations~\eqref{eq:ge}
or the Maxwell-Higgs equations~\eqref{eq:mh} for all $n$ in the
Type I case ($\kappa < 1/\sqrt{2}$), and for $n = \pm 1$ in the
Type II case ($\kappa > 1/\sqrt{2}$). Otherwise, the $n$-vortex is\
unstable.

To be more precise, we}
We say the $n$-vortex is {\it (orbitally) stable}, 
if for any initial data sufficiently close to the $n$-vortex (
which includes initial momentum field in the~\eqref{MHeq} case), 
the solution remains, for all time, close to {\it an element of the orbit of
the $n$-vortex under the symmetry group}. Here ``close'' can
be taken to mean close in the ``energy space'' Sobolev norm
$H^1$. 

Similarly, for {\it asymptotic stability} : the solution converges, as
$t \to \infty$, to an element of the symmetry orbit
(that is, to a spatially-translated, gauge-transformed
$n$-vortex).


The basic result on vortex stability is the following:

\begin{theorem}[\cite{GS, G}]
\label{thm:stab}

\begin{enumerate}
\item
For Type I superconductors, all $n$-vortices are asymptotically stable.
\item
For Type II superconductors,  the $\pm1$-vortices are stable, while the $n$-vortices with $|n| \geq 2$, are unstable.
\end{enumerate}
\end{theorem}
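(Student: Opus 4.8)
The plan is to reduce both parts to a spectral statement about the Hessian of the Ginzburg--Landau energy at the vortex, and then resolve that statement by combining the rotational symmetry with an analysis anchored at the self-dual point $\kappa = 1/\sqrt2$. Write $v^{(n)} = (\Psi^{(n)}, A^{(n)})$ and let $L^{(n)}$ denote the linearization of the right-hand side of \eqref{GLE} at $v^{(n)}$ (equivalently the second variation $\mathrm{Hess}\,\E(v^{(n)})$ restricted to a gauge-fixing slice). By gauge and translation invariance $L^{(n)}$ annihilates the tangent space $Z_n$ to the symmetry orbit through $v^{(n)}$, which consists of the gauge directions and the two translational modes $\partial_{x_1} v^{(n)},\ \partial_{x_2} v^{(n)}$. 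I would establish: (i) in the stable cases (all $n$ for Type I; $n=\pm1$ for Type II) one has $L^{(n)}\ge 0$ with $\ker L^{(n)} = Z_n$ and a spectral gap above $0$ on $Z_n^\perp$; and (ii) in the Type II case with $|n|\ge 2$, $L^{(n)}$ has a negative eigenvalue on $Z_n^\perp$. Granting (i), orbital stability for the gradient flow \eqref{GES} follows from a Lyapunov argument: one projects the solution onto the symmetry orbit by a modulation/implicit-function construction, uses (i) to get coercivity of $\E$ transverse to the orbit, and uses monotonicity of $\E$ along \eqref{GES}. Granting (ii), instability follows from the standard Lyapunov instability argument along the negative eigenvector: the energy strictly decreases and cannot return to the value $\E(v^{(n)})$ carried by the entire orbit.

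For the spectral statements I would use the rotational symmetry. After fixing a Coulomb-type gauge slice (removing the gauge directions from $Z_n$), $L^{(n)}$ commutes with the combined generator of rotations, so it block-diagonalizes over angular Fourier modes $e^{im\theta}$, each block being a matrix Schr\"odinger operator on the half-line $r\in(0,\infty)$ with coefficients built from $f_n, a_n$ and their derivatives. The asymptotics \eqref{eq:decay} show each block is a relatively compact perturbation of a constant-coefficient operator whose essential spectrum is bounded below by a positive constant (the mass gap); hence only finitely many eigenvalues can be nonpositive, and the problem reduces to controlling the lowest eigenvalue in a few angular sectors. The two translational zero modes sit in the sectors adjacent to $m=n$.

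To pin down the signs I would anchor at $\kappa = 1/\sqrt2$. There the Bogomolnyi/Taubes first-order reduction gives a $2|n|$-dimensional moduli space, so $L^{(n)}$ at $\kappa=1/\sqrt2$ has exactly $2|n|$ zero modes beyond the symmetry modes, located in known angular sectors. First-order perturbation theory in $\kappa$, with the eigenvalue derivative computed from the explicit self-dual profile and related to the sign of the surface tension, shows that these formerly-zero eigenvalues move \emph{up} for $\kappa<1/\sqrt2$ (Type I stability) and that for $\kappa>1/\sqrt2$ at least one moves \emph{down} exactly when $|n|\ge 2$ -- the vortex-splitting mode -- while for $|n|=1$ none do. To reach all $\kappa$ in the claimed ranges rather than a neighbourhood of $1/\sqrt2$, I would add a Perron--Frobenius/ground-state argument sector by sector (the lowest relevant block has a positive eigenfunction, so its bottom eigenvalue is simple and cannot cross $0$ without creating a new zero mode, which is ruled out by the known moduli-space dimension for $|n|\le 1$), together with monotonicity of the relevant eigenvalue branches in $\kappa$. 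For the $|n|\ge2$ instability one may instead bypass perturbation theory and directly exhibit a trial perturbation modelled on pulling two of the $n$ unit vortices apart, checking $\langle \xi, L^{(n)}\xi\rangle<0$ via the repulsive sign of the intervortex interaction in Type II.

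Finally, asymptotic stability in the Type I case (gradient flow \eqref{GES}) follows from the spectral gap: write the solution as $T^{\mathrm{gauge}}_{\gamma(t)}T^{\mathrm{trans}}_{h(t)}v^{(n)} + w(t)$ with $w(t)\perp Z_n$, derive the modulation ODEs for $(\gamma(t),h(t))$, and show that the remainder satisfies $\partial_t w = -L^{(n)}w + (\text{nonlinear and modulation errors})$ with a dissipative generator whose spectrum lies in $\{\mathrm{Re}\,\lambda\ge c>0\}$; a Duhamel/bootstrap argument then gives decay of $w$ and convergence of $(\gamma(t),h(t))$. The main obstacle is the \emph{global} spectral control of the previous paragraph -- ruling out accidental zero or negative eigenvalues of $L^{(n)}$ for $\kappa$ far from $1/\sqrt2$ -- which needs genuine (non-perturbative) information on the matrix Schr\"odinger blocks and, for $|n|=1$, an essentially rigidity-type input; a secondary, more routine difficulty is closing the modulation/decay argument despite the continuous spectrum reaching down to the (positive but finite) mass gap.
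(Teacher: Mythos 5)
Your proposal follows essentially the same route as the paper (and the underlying works \cite{GS, G}): rotational block-decomposition of the Hessian into angular sectors, Perron--Frobenius positivity for the translational-mode sectors, anchoring at the self-dual point where $L^{(n)}_m$ factors as $(F^{(n)}_m)^*F^{(n)}_m$ with $2|n|$ zero modes whose perturbation in $\kappa$ decides the sign, gauge-mode removal, and a modulation/Lyapunov argument to pass from coercivity to orbital and then asymptotic stability for the gradient flow. You also correctly identify the genuine difficulty -- global (non-perturbative) control of the low sectors away from $\kappa=1/\sqrt2$ -- which is exactly where the paper invokes the sector monotonicity $L^{(n)}_m - L^{(n)}_1\ge 0$ and the positivity-cone argument.
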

This stability behaviour was long conjectured (see~\cite{JT}),
based on numerical computations (eg. \cite{jr}) leading to a ``vortex interaction''
picture wherein inter-vortex interactions are always attractive in
the Type-I case, but become repulsive for like-signed vortices in the
Type-II case.

This result agrees with the fact, mentioned above, that the surface tension is positive for $\kappa < 1/\sqrt{2}$ and negative for $\kappa > 1/\sqrt{2}$, so the vortices try to minimize their 'surface' for $\kappa < 1/\sqrt{2}$ and maximize it for $\kappa > 1/\sqrt{2}$.

 Stability for  pinned vortices was proven  in~\cite{GT}.

For the Maxwell-Higgs equations~\eqref{MHeq}, the above result was proven for the orbital stability only (see \cite{G}). The \textit{asymptotic stability} of the $n$-vortex for these equations is not known.

See also \cite{ComSau, Sau} for extensions of these results to domains other than $\R^2$.

To demonstrate the above theorem, we first prove the linearized/energetic stability or instability. To formulate the latter, we observe that the $n$-vortex is a critical point of the Ginzburg-Landau
energy~\eqref{GLEn}, and the second variation of the
energy
\[
  L^{(n)} := \mbox{ Hess } \E (\Psi^{(n)}, A^{(n)})
\]
is the linearized operator for the Ginzburg-Landau
equations~\eqref{GLE} around the $n$-vortex, acting on
the space $X = L^2(\R^2,\C) \oplus L^2(\R^2,\R^2)$.
 ($\mbox{Hess } \E (u)$ is the G\^{a}teaux derivative of the $L^2-$gradient of the l.h.s. of \eqref{GLE} and $u=(\Psi, A)$. Since $\mbox{Hess } \E (u)$  is only real-linear, to apply the spectral theory,  it is convenient to extend it to a complex-linear operator. However, in order not to introduce extra notation, we ignore this point here and {\it deal with $L^{(n)}$ as if it were a complex-linear operator}.) 

The symmetry group of $\E(\Psi,A)$, which is infinite-dimensional
due to gauge transformations, gives rise to an infinite-dimensional
subspace of $\Null(L^{(n)}) \subset X$, which we denote here by
$Z_{sym}$.
We say the $n$-vortex is {\em (linearly) stable} if for some $c > 0$,
\begin{equation}\label{coercL}
  L^{(n)}|_{Z_{sym}^{\perp}} \geq c>0,
\end{equation} 
and {\em unstable} if $L^{(n)}$ has a negative eigenvalue.
By this definition, a stable state is a local energy minimizer which
is a {\it strict} minimizer in directions orthogonal to the
infinitesimal symmetry transformations. An unstable state
is an energy saddle point.

Once  the linearized (spectral)  stability is proven,  the main task in proving the
orbital stability is the construction of a path in the
(infinite dimensional, due to gauge symmetry) symmetry group
orbit of the $n$-vortex, to which the solution remains close.
For the gradient-flow equations~\eqref{GES} the
orbital stability can be easily strengthened to and with little more work, the asymptotic stability can be accomplished.

A few brief remarks on the proof of the key step \eqref{coercL}:

\paragraph{$\bullet$} 
Since the vortices are gauge equivalent under the action of rotation, i.e.,
$$\Psi(R_\alpha x) = e^{in\alpha}\Psi(x),\ R_{-\alpha}A(R_\alpha x) = A(x),$$
where $R_\al$ is counterclockwise rotation in $\R^2$ through the angle $\al$,
the linearized operator $L^{(n)}$ commutes with  
the representation $\rho_n : U(1) \rightarrow Aut([L^2({\R}^2; {\C})]^4)$ of the group $U(1)$,  given by
\[
  \rho_n(e^{i\theta})
  (\xi,  \al)(x) =  (e^{in\theta}\xi, e^{-i\theta}\al) (R_{-\theta}x).
\]
It follows that $L^{(n)}$ leaves invariant the eigenspaces of $d\rho_n(s)$ for any $s \in i{\R} = Lie(U(1))$. (The representation of $U(1)$ on each of these subspaces is multiple to an irreducible one.)
 According to a representation of the
symmetry group, this results in (fiber) block decomposition of $L^{(n)}$, 
\begin{align} \label{X-deco}  X \approx \bigoplus_{m \in \Z} ( L^2_{rad} )^4,
  \quad\quad L^{(n)} \approx \oplus_{m \in \Z} L^{(n)}_m, \end{align} 
  where $L^2_{rad} \equiv L^2({\R}^{+}, rdr)$ and $\approx$ stands for the unitary equivalence,  which is described below.  
One can then study each operator $L^{(n)}_m$, which acts
on (vectors of) radially-symmetric functions.

\paragraph{$\bullet$} 
The gauge adjusted translational zero-modes each lie within a single subspace of the decomposition \ref{X-deco} and correspond, after complexification  and rotation, to the vector
\begin{align} \label{transl-mode}  T = ( \; f_n'(r), \; b^{(n)}(r) f_n(r), \; n a_n'(r)/r,
  \; n a_n'(r)/r \; ), \end{align}   where we defined, for convenience, $b^{(n)}(r) = \frac{n(1-a^{(n)}(r))}{r}$, in the $m = \pm 1$ sectors. The stability proof is built on
Perron-Frobenius-type arguments (``positive zero-mode $\iff$
the bottom of the spectrum''), adapted to the setting of systems.
In particular, $T$ lies in the ``positivity cone'' of
vector-functions with positive components, and we are able to
conclude that
$L^{(n)}_{\pm 1} \geq 0$ with non-degenerate zero-eigenvalue.
\paragraph{$\bullet$} A key component is the exploitation of the special structure,
hinted at by the Bogomolnyi lower bound~\eqref{bog},
of the linearized operator $L^{(n)}$ at the self-dual
value $\kappa = 1/2$ of the Ginzburg-Landau parameter. In fact,
\[
  L_m^{(n)} |_{\kappa = 1/2} = (F^{(n)}_m)^* F^{(n)}_m
\]
for a first-order operator, $F_m$, having $2|n|$ zero-modes
which can be calculated semi-explicitly. These modes
can be thought of as arising from independent relative
motions of vortices, and the fact that they are energy-neutral,
relates to the vanishing of the vortex interaction at $\kappa=1/2$
\cite{Bog,Wein}. Two of the modes arise from translational
symmetry, while careful analysis shows that as
$\kappa$ moves above (resp. below) $1/2$, the
$2|n|-2$ ``extra'' modes become unstable (resp. stable)
directions.


\paragraph{} Technically, it is convenient, on the first step, effectively remove the (infinite-dimensional subspace of) gauge-symmetry
zero-modes, by modifying $L^{(n)}$ to make it coercive in
the gauge directions -- this leaves only the two zero-modes
arising from translational invariance remaining.

\DETAILS{  In what follows we write functions on ${\R}^2$ in polar coordinates, so that
\begin{equation} \label{polar}
  \cH^c(\R^2): = [L^2({\R}^2; {\C})]^4 =  [L^2_{rad} \otimes
  L^2({\bf S}^1; {\C})]^4.
\end{equation}}
 Let $\mathcal{C}$ be the operation of taking the complex conjugate. The results in (fiber) block decomposition of $L^{(n)}$, mentioned above is given in 
 \begin{theorem}[\cite{GS}]\label{thm:decomp}
\begin{enumerate}[(a)]
\item Let $\cH_m := [L^2_{rad} ]^4$ and define $U : 
X \to \cH$, where $ \cH = \bigoplus_{m \in {\bf Z}}\cH_m$, 
so that on smooth compactly supported $v$ it acts by the formula
	\begin{equation*}
		(U v)_m(r) = J_m^{-1} \int_0^{2\pi} \chi_m^{-1}(\theta) \rho_n(e^{i\theta}) v(x) d\theta.
	\end{equation*}
where $\chi_m(\theta)$ are characters of $U(1)$, i.e., all homomorphisms $U(1) \to U(1)$ (explicitly we have $\chi_m(\theta)=e^{im\theta}$) and
\[J_m:\cH_m \ra e^{i(m+n)\theta} L^2_{rad} \oplus   e^{i(m-n)\theta} L^2_{rad} \oplus -i e^{i(m-1)\theta} L^2_{rad} \oplus  i e^{i(m+1)\theta} L^2_{rad} \]
 acting in the obvious way.
Then $U$ extends uniquely to a unitary operator.
	
\item Under $U$ the linearized operator around the vortex, $K_\#^{(n)}$,
decomposes as
\begin{equation}\label{eq:flip}
  U L^{(n)} U^{-1}   = \bigoplus_{m \in {\bf Z}} L_m^{(n)},
\end{equation}
where the operators $ L_m^{(n)}$ act on $\cH_m$ as $J_m^{-1} L^{(n)} J_m$.

\item The operators $K_m^{(n)}$ have the following properties:
\DETAILS{passing to a rotated version, $M_m^{(n)}$,
of the operator $K_m^{(n)}$,
\[
  M_m^{(n)} \equiv \left\{ \begin{array}{cc}
            R K_m^{(n)} R^T  &  m \geq 0  \\
            R' K_m^{(n)} (R')^T  &  m < 0
            \end{array} \right.
\]
where
\[
  R = \frac{1}{\sqrt{2}} \left( \begin{array}{cccc}
      1 & 1 & 0 & 0 \\
      -1 & 1 & 0 & 0 \\
      0 & 0 & 1 & 1 \\
      0 & 0 & 1 & -1  \end{array} \right),    \;\;\;\;\;\;\;\;\;\;
  R' = \frac{1}{\sqrt{2}} \left( \begin{array}{cccc}
       1 & 1 & 0 & 0 \\
       1 & -1 & 0 & 0 \\
       0 & 0 & 1 & 1 \\
       0 & 0 & 1 & -1  \end{array} \right),
\]
we have}
%
\begin{equation}\label{eq:flip}
  K_m^{(n)} = R K_{-m}^{(n)}R^T,\ \mbox{where}\  R= \TwoByTwo{Q}{0}{0}{Q}, Q = \TwoByTwo{0}{\cC}{\cC}{0},
\end{equation}
\begin{equation}\label{eq:contspec}
  \sigma_{ess}(K_m^{(n)}) = [\min(1,\lambda), \infty),
\end{equation}
\begin{equation}\label{eq:mon}
 \mbox{for}\ |n|=1\ \mbox{and}\  m \geq 2,\ \, L_m^{(n)} - L_1^{(n)} \geq 0\  \, \mbox{with no zero-eigenvalue,}
\end{equation}
\begin{equation}\label{eq:mon}
L_0^{(n)} \geq c > 0\ \quad \mbox{for all}\ \kappa,
\end{equation}
\begin{equation}
\label{L1}
L_1^{(\pm1)} \geq 0 \  \mbox{with non-degenerate zero-mode given by \eqref{transl-mode}.}
\end{equation}
\end{enumerate}
\end{theorem}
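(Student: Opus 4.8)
\emph{Proof strategy.} Parts (a)--(b) are an instance of Peter--Weyl (Fourier) analysis for the compact group $U(1)$ acting on $X$ by the unitary representation $\rho_n$, while part (c) collects spectral facts about the resulting fiber operators, which I would extract from three ingredients: an extra reflection symmetry of the vortex, the asymptotics \eqref{eq:decay}, and a Perron--Frobenius argument in the sectors $m=\pm1$. For (a): since $\rho_n$ is a unitary representation of a compact group, $X$ splits orthogonally into isotypic subspaces indexed by the characters $\chi_m(\theta)=e^{im\theta}$, and the stated $U$ is exactly the map carrying $v$ to the tuple of its isotypic components, re-expressed through the identifications $J_m$. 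Unitarity is then the Plancherel theorem on the circle together with $\int_0^{2\pi}\chi_m\overline{\chi_{m'}}\,d\theta=2\pi\delta_{mm'}$: one checks $\|Uv\|=\|v\|$ for smooth compactly supported $v$ and extends by density, and one identifies the range by computing the action of $\rho_n(e^{i\theta})$ on the four complexified components, which carry base angular momenta $n,-n,-1,1$ --- precisely what the angular factors $e^{i(m\pm n)\theta}$, $e^{i(m\mp1)\theta}$ appearing in $J_m$ record. For (b): the rotational equivariance recorded just before the theorem gives $[L^{(n)},\rho_n(e^{i\theta})]=0$, so $L^{(n)}$ preserves each isotypic subspace and block-diagonalizes, the block on $\cH_m$ being $J_m^{-1}L^{(n)}J_m$ by construction of $U$. (Throughout, $L^{(n)}=K_\#^{(n)}$ is understood to be the gauge-fixed modification of the Hessian, obtained by adding a nonnegative term supported on pure-gauge directions so that only the two translational zero-modes remain.)

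\emph{Symmetry and essential spectrum.} The $n$-vortex ansatz is invariant not only under rotations but also under complex conjugation composed with the reflection $\theta\mapsto-\theta$, since $\overline{f_n(r)e^{in\theta}}=f_n(r)e^{-in\theta}$ and $f_n,a_n$ are real-valued. This antiunitary symmetry interchanges the isotypic subspaces at levels $m$ and $-m$, and transported through $U$ it is exactly the block flip $K_m^{(n)}=RK_{-m}^{(n)}R^T$ with $R=\diag(Q,Q)$, $Q=\TwoByTwo{0}{\cC}{\cC}{0}$. For the essential spectrum, \eqref{eq:decay} gives $f_n\to1$, $a_n\to1$, with the supercurrent and magnetic field decaying exponentially; hence each $K_m^{(n)}$ is, up to an exponentially decaying (relatively compact) $m$-dependent perturbation, a fixed constant-coefficient radial operator on the half-line whose potential at infinity is the diagonal matrix of the two positive mass thresholds of the linearized Ginzburg--Landau system, $1$ and $\lambda$. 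Weyl's theorem (equivalently Persson's formula for the bottom of the essential spectrum) then yields $\sigma_{ess}(K_m^{(n)})=[\min(1,\lambda),\infty)$ for every $m$; in particular this bottom is strictly positive, so each block carries only finitely many eigenvalues at or below $0$.

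\emph{The three inequalities.} For the monotonicity $L_m^{(n)}-L_1^{(n)}\ge0$ (no zero-eigenvalue) when $|n|=1$ and $m\ge2$, I would write the difference explicitly on $\cH_m\cong(L^2_{rad})^4$ and, combining terms with the help of the vortex ODEs, show it is a nonnegative operator with trivial kernel: its $m$-dependent part --- diagonal centrifugal increments of order $(m^2-1)/r^2$ modified by $a_n$, together with the first-order contributions coming from the covariant derivatives --- assembles into a manifestly nonnegative quadratic form once $m\ge2$. For $L_0^{(n)}\ge c>0$: the $m=0$ block carries no symmetry zero-mode (translations live in $m=\pm1$, the infinitesimal rotation of the vortex is a pure gauge mode by equivariance, and the gauge directions were removed by the $\#$-modification), so it suffices to exclude a non-positive eigenvalue; I would do this via the self-dual factorization $L_0^{(n)}|_{\kappa=1/\sqrt2}=(F_0^{(n)})^*F_0^{(n)}\ge0$ (cf. the Bogomolnyi bound \eqref{bog}) together with monotonicity of $L_0^{(n)}$ in $\kappa$, or by a direct Sturm-type ODE comparison, obtaining a bound uniform in $\kappa$. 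Finally, \eqref{L1} is the heart of the matter. Reducing first to $n=+1$ by the complex-conjugation symmetry $L^{(n)}\cong L^{(-n)}$, one checks that the gauge-adjusted translational zero-mode of $L^{(n)}$ lies in the sectors $m=\pm1$ (by equivariance) and coincides, in the rotated frame used in the proof, with the vector $T$ of \eqref{transl-mode}; since $f_n$ and $a_n$ increase monotonically from $0$ at the origin to $1$ at infinity, the components $f_n'$, $b^{(n)}f_n$, $na_n'/r$ of $T$ are strictly positive on $(0,\infty)$. One then verifies that in this frame $L_1^{(\pm1)}$ is a Schr\"odinger-type system on the half-line whose off-diagonal potential entries are non-positive --- a ``cooperative'' coupling --- so that the semigroup $e^{-tL_1^{(\pm1)}}$ is positivity-improving on the cone of componentwise-non-negative functions. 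The generalized Perron--Frobenius theorem then forces a non-negative zero-mode to span the (necessarily simple) ground-state eigenspace, and since $T>0$ solves $L_1^{(\pm1)}T=0$ we conclude $L_1^{(\pm1)}\ge0$ with one-dimensional kernel spanned by $T$.

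\emph{Main obstacle.} The delicate step is the last one: one must rotate coordinates (the role of the matrices $R$, $R'$) precisely so that the reduced first-order system acquires the cooperative sign pattern required for a Perron--Frobenius argument in the presence of the connection (first-order) terms, and then confirm that the positivity-improving property of the associated semigroup really holds on the half-line. Slightly less delicate, but also non-perturbative, is the $\kappa$-uniform bound $L_0^{(n)}\ge c>0$, which rests on the self-dual factorization together with the monotonicity in $\kappa$.
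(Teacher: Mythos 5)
Your proposal follows essentially the same route the paper sketches (with full details deferred to \cite{GS}): $U(1)$-equivariance of the vortex yields the fiber decomposition in (a)--(b), an antiunitary conjugation/reflection symmetry yields the $m\leftrightarrow-m$ flip, the exponential decay \eqref{eq:decay} plus Weyl's theorem yield the essential spectrum, explicit comparison of the fiber operators handles $m\ge 2$, and a Perron--Frobenius argument anchored on the positive translational mode $T$ of \eqref{transl-mode} gives \eqref{L1}. The only step I would not lean on is your first suggested route to $L_0^{(n)}\ge c>0$, namely the self-dual factorization combined with ``monotonicity of $L_0^{(n)}$ in $\kappa$'': since the profiles $(f_n,a_n)$ themselves depend on $\kappa$, the family $L_0^{(n)}(\kappa)$ has no evident monotone structure, so your alternative of a direct comparison argument in the $m=0$ sector (where, as you note, no symmetry zero-modes survive the gauge fixing) is the safer route.
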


Since, by ~(\ref{eq:contspec}) and \eqref{L1},
$ L_1^{(\pm1)}|_{T^{\perp}} \geq \tilde{c} > 0$ and,
 by~(\ref{eq:mon}) and \eqref{L1}, $ L_m^{(\pm1)} \geq c' > 0$
for $|m| \geq 2$, this theorem implies \eqref{coercL}. 
\DETAILS{\begin{corollary}[coercivity of $K_\#$] \la{cor:coercK}
 On the subspace of $\cH^c(\R^2)$ orthogonal
to the translational zero-modes, we have for some $c > 0$,
\begin{equation}\label{coercK}
 K_\# \geq c > 0.
\end{equation}
\end{corollary}}


\section{
Vortex Lattices}

In this section we describe briefly recent results on vortex lattice solutions, i.e. solutions, which display vortices arranged along vertices of a lattice in $\R^2$. 
Since their discovery by Abrikosov in 1957, solutions have been studied in numerous experimental and theoretical works (of the more
mathematical studies, we mention the articles of Eilenberger~\cite{Eil} and Lasher~\cite{Lash}).

The rigorous investigation of Abrikosov solutions began in \cite{Odeh},  soon after their discovery. Odeh has given a detailed sketch of the proof the bifurcation of  Abrikosov solutions at the second critical magnetic field. 
Further details were provided by Barany, Golubitsky, and Tursky \cite{BGT}, 
using equivariant bifurcation theory, and by Tak\'{a}\u{c} \cite{Takac}, 
who obtained results on the zeros of the bifurcating solutions. The proof of existence was completed in \cite{TS} and extended further in \cite{TS2} beyond the cases covered in the works above.

Existence of  Abrikosov solutions at low magnetic fields, near the first  critical magnetic field was given in \cite{ST1}.

Moreover, Odeh has also given a detailed sketch of the  proof, with details filled in in  \cite{Dutour2}, of the existence of  Abrikosov solutions using the variational  minimization of the Ginzburg-Landau energy functional reduced to a fundamental cell of the underlying lattice. However, this proof provides only very limited information about the solutions. 

Moreover,  important and fairly detailed results on asymptotic behaviour of solutions, for $\kappa \ra\infty$ and the applied magnetic fields, $h$, satisfying $h\le \frac{1}{2}\log \kappa+$const (the London limit), were obtained in \cite{AS} (see this paper and the book \cite{SS} for references to earlier works). Further extensions to the Ginzburg-Landau equations for anisotropic and high temperature superconductors can be found in \cite{ABS1, ABS2}.

Among related results, a relation of the Ginzburg-Landau minimization problem, for a fixed, finite domain and in the regime of the Ginzburg-Landau parameter $\kappa\ra\infty$ and external magnetic field, to the Abrikosov lattice variational problem was obtained in \cite{AS, Al2}.   \cite{Dutour}  (see also \cite{Dutour2}) have found boundaries between superconducting, normal and mixed phases.

The proof that the triangular lattices minimize  the Ginzburg-Landau energy functional per the fundamental cell was completed in \cite{TS} using the original   Abrikosov ideas and results on the   Abrikosov 'constant' due to  \cite{ABN, NV}.

The stability of  Abrikosov lattices was shown in  \cite{ST2} for gauge periodic perturbations, i.e. perturbations having the same translational lattice symmetry as the solutions themselves, and in   \cite{ST3} for local, more precisely, $H^1$, perturbations.

Here we describe briefly the existence 
and stability results and the main ingredient entering into their proofs. 



\subsection{Abrikosov lattices}\label{sec:abr-lat}

In 1957,  A. Abrikosov (\cite{Abr}) discovered a class of solutions, $(\Psi, A)$, to \eqref{GLE}, presently known as Abrikosov lattice vortex states (or just Abrikosov lattices), whose physical characteristics, density of  Cooper pairs,  $|\Psi|^2$, the magnetic field,  $\CURL A$,  and the supercurrent,  $J_S = \Im(\bar{\Psi}\COVGRAD{A}\Psi)$, are double-periodic w.r. to a lattice $\LAT$.
(This set of states is invariant under  the symmetries of the previous subsection.) 




For Abrikosov states, for $(\Psi, A)$,  the magnetic flux, $
\int_\Omega \Curl A$, through a lattice cell, $\Om$, is quantized,
\begin{equation}\label{flux-quant}
   \frac{1}{2\pi} \int_\Omega \Curl A =\deg \Psi =  n,
\end{equation}
for some integer $n$.  Indeed, the periodicity of $n_s=|\Psi| ^2$ and $J=\im(\bar \Psi \n_A \Psi)$ imply that $\nabla\varphi - A$, where $\Psi = |\Psi|e^{i\varphi}$, is periodic, provided $\Psi \neq 0$ on $\partial\Omega$.  This, together with Stokes's theorem, $\int_\Omega \Curl A = \oint_{\partial\Omega} A = \oint_{\partial\Omega} \nabla\varphi$ and the single-valuedness of $\Psi$,   imply that $\int_\Omega \Curl A =2\pi n$ for some integer $n$.  
 Using the reflection symmetry of the problem, one can easily check that we can always assume $n \geq 0$.

\DETAILS{Note that if $|\psi| > 0$ on $\partial \Omega$, we may write
$\psi = |\psi|e^{i \phi}$ there, and then the fact that
$J = Im(\bar{\psi}\nabla_A \psi) = |\psi|^2(\nabla \phi - A)$
satisfies periodic boundary conditions on $\partial \Omega$ implies
that $\nabla \phi - A$ does as well, and we are led to the \
flux quantization relation
\[
  \int_{\Omega} curl A
  = \int_{\p \Omega} A \cdot \tau
  = \int_{\p \Omega} \nabla \phi \cdot \tau
  := 2 \pi n \in 2 \pi \Z.
\]}

 Equation \eqref{flux-quant} implies the relation between the average magnetic flux, $b$, per lattice cell, $b = \frac{1}{|\Omega|} \int_\Omega  \CURL A $,  and the area, $|\Om|$, of a fundamental cell
\begin{equation}\label{quant-cond}
     b = \frac{2\pi n}{|\Omega|}.
\end{equation}

Finally, it is clear that the gauge, translation, and rotation symmetries of
the Ginzburg-Landau equations map lattice states to
lattice states. In the case of the gauge and translation symmetries, the lattice with respect to which the solution is
gauge-periodic does not change, whereas with the rotation symmetry, the lattice is rotated as well. The magnetic flux per cell of solutions 
is also preserved under the action of these symmetries.


\DETAILS{\subsection{Energy of lattice states}
Lattice states have infinite total energy, so we consider
the average energy per cell,
\begin{equation*}
    \mathcal{F}(\psi, A) := \frac{1}{|\Omega|} \mathcal{E}_\Omega(\psi, A).
\end{equation*}
Here, $\Omega$ is a primitive cell of the lattice $\L$,
and $|\Omega|$ is its Lebesgue measure.}


\subsection{Existence of Abrikosov lattices}

\DETAILS{We identify $\R^2$ with $\C$, via the map $(x_1, x_2)\ra x_1+i x_2$ and  use translations and  rotations, if necessary, to bringing a lattice $\LAT$, satisfying the quantization condition \eqref{quant-cond},  into the form $$\LAT=\sqrt{\frac{2\pi n}{\im\tau b} }  (\Z+\tau\Z),$$ where $\tau\in \C$  is a complex number with $\Im \tau>0$, called the lattice shape parameter. 

Thus,}
We assume always that  the co-ordinate origin is placed at one of the vertices of the lattice $\LAT$. By the shape of  lattice $\LAT$ we understand the equivalence class $[\LAT]$  of lattices, with equivalence relations given by   rotations and dilatations. 
We will show in Appendix \ref{sec:param} that  lattice shapes can be parametrized by  points $\tau$ in  the fundamental domain,  $\Pi^+/SL(2, \Z)$, of  the modular group  $SL(2, \Z)$ acting on the Poicar\'e half-plane $\Pi^+:=\{\tau \in \C: \Im \tau>0\}$ (see Fig. \ref{fig:PoincareStrip}).
(We denote the corresponding equivalence class by $[\tau]$.)

\begin{figure}[h!]
	\centering 

  \includegraphics[width=2.5in]{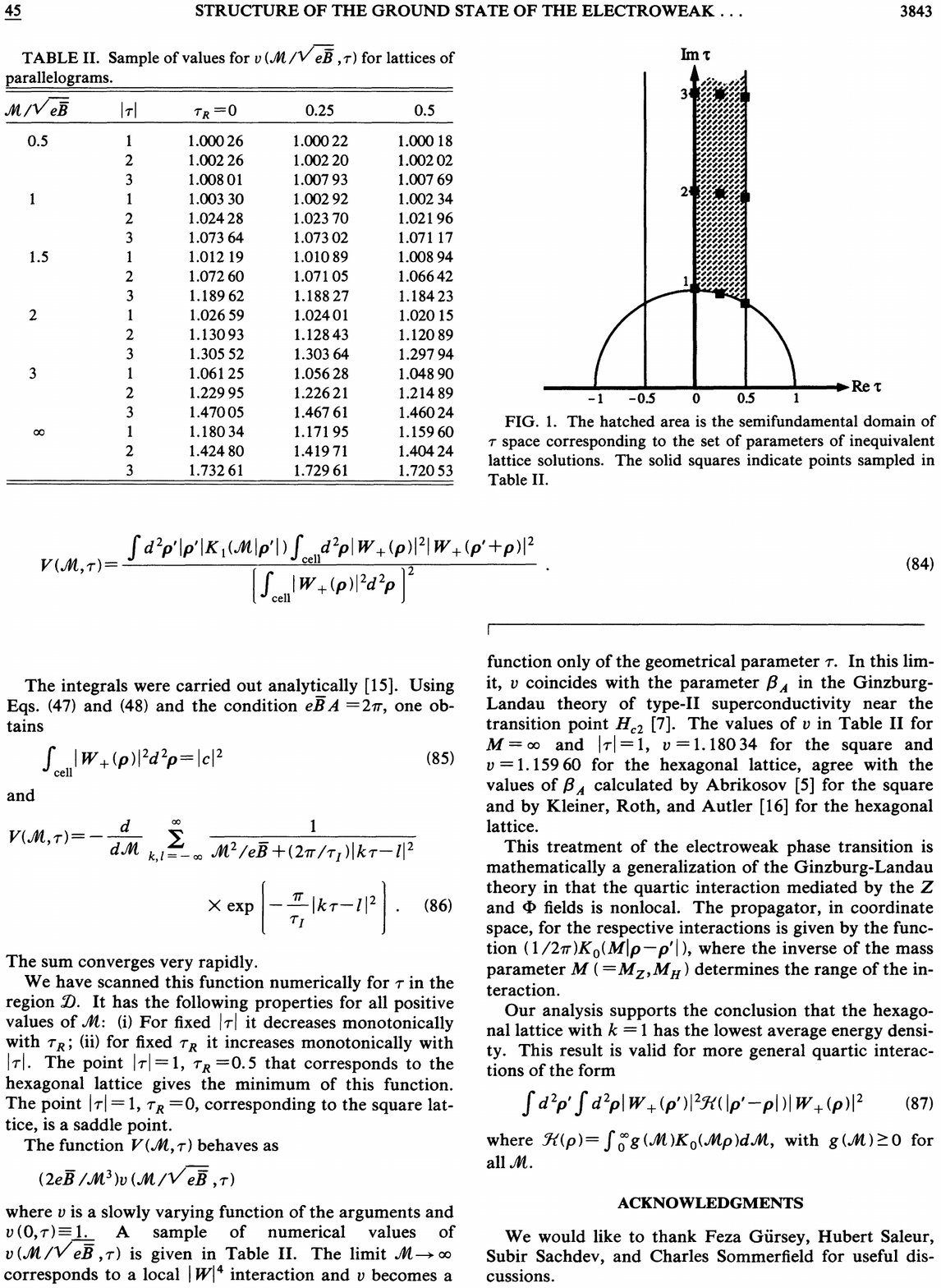} 
  \caption{Fundamental domain  of $\gamma(\tau)$. }\label{fig:PoincareStrip}
\end{figure}  

Due to the quantization relation \eqref{quant-cond}, the parameters $\tau$, $b$, and $n$ determine the lattice $\LAT$ up  to a rotation and a translation. 
As the equations \eqref{GLE} are invariant under rotations and translations, solutions corresponding to translated and rotated lattices are related by 
 symmetry transformations and therefore can be considered equivalent, with equivalence classes 
 determined by triples $\om=(\tau, b, n)$, specifying the underlying lattice has shape $\tau$, the average magnetic flux per lattice cell  $b$, and the number $n$ of quanta of magnetic flux per lattice cell.  With this in mind,    we will say that an 
Abrikosov lattice state $(\Psi, A)$ is of type $\om=(\tau, b, n)$, if it belong to the equivalence class determined by $\om=(\tau, b, n)$. 


Let $\beta(\tau)$  be  the Abrikosov 'constant', defined in \eqref{beta} below. The following critical value of the Ginzburg-Landau parameter $\kappa$ plays an important role in what follows 
\begin{equation}\label{kappac}
     \kappa_c(\tau) := \sqrt{\frac{1}{2}\left(1-\frac{1}{\beta (\tau)}\right)} .
     \end{equation}
Recall that the value of the second critical magnetic field at which the normal material undergoes the transition to the
superconducting state is  that $h_{c2}=\kappa^2$.

For the case $n=1$ of one quantum of flux per unit cell,
the following result establishes the existence of non-trivial
lattice solutions near the normal metal solution:

\begin{theorem}[\cite{Odeh, BGT, Dutour2, TS2}] \label{thm:ALexist1}
Fix a lattice shape $\taul$ and let $b$ satisfy 
\begin{equation}\label{b-cond} |\kappa^2 - b|\ll \kappa^2[(2\kappa^2 - 1)\beta(\tau) + 1] \end{equation}  and   
\begin{equation}\label{b-cond2} \text{either}\ \kappa> \kappa_c(\tau),\ \kappa^2 > b\ \text{  or  }\  \kappa < \kappa_c(\tau),\ \kappa^2 < b.\end{equation}   Then for  $\om =(\tau, b, 1)$  
\begin{itemize} 
\item      there exists a smooth Abrikosov lattice solution $u_\om = (\Psi_\om, A_\om)$ of type $\om$. 
 \end{itemize}     \end{theorem}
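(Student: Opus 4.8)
The plan is to construct the lattice solution by a Lyapunov--Schmidt reduction (equivalently, a bifurcation-from-the-normal-state argument) in a space of gauge-periodic functions adapted to the lattice $\LAT$ of type $\om=(\tau,b,1)$. First I would fix the gauge and set up the function space: identify $\R^2\cong\C$, normalize $\LAT$ using translations and dilations so that it satisfies the quantization condition \eqref{quant-cond} with $n=1$, and work in the Sobolev space of pairs $(\Psi,A)$ that are \emph{gauge-periodic} with respect to $\LAT$ (i.e. $\Psi$ and $A$ transform under lattice translations by the fixed magnetic translation cocycle determined by $b$, so that $|\Psi|^2$, $\curl A$ and $J_S$ are genuinely $\LAT$-periodic). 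Writing $A = A_b + a$, where $A_b$ is a fixed vector potential with $\curl A_b = b$ (the normal-state potential), the normal solution is $(\Psi,a)=(0,0)$, and we seek small nontrivial solutions near it.

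Next I would analyze the linearization of \eqref{GLE} at the normal state. The $\Psi$-equation linearizes to $(-\Delta_{A_b} - \kappa^2)\Psi = 0$; the operator $-\Delta_{A_b}$ on the gauge-periodic space is (a rescaled) Landau Hamiltonian, with pure point spectrum $b(2k+1)$, $k\ge 0$, and the lowest Landau level $b$ has, by the quantization condition with $n=1$, a \emph{one-dimensional} eigenspace, spanned by an explicit theta-function-type section $\psi_0$ (this is precisely where $h_{c2}=\kappa^2$ and the condition $n=1$ enter: one quantum of flux per cell $\Leftrightarrow$ the kernel is one-dimensional). Thus resonance occurs exactly when $\kappa^2 = b$; for $\kappa^2$ near $b$ the operator $-\Delta_{A_b}-\kappa^2$ has a single small eigenvalue $\approx b-\kappa^2$ and is otherwise boundedly invertible. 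The $a$-equation linearizes to $\curl^*\curl\, a = 0$, which on the periodic space (modulo the constant-flux direction, which is fixed by $b$) is also invertible. So the kernel of the full linearization is one (complex, i.e. two real) dimensional, matching the expected moduli dimension $2|n|=2$ (translations of the single vortex).

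Then I would run the Lyapunov--Schmidt reduction: decompose $\Psi = t\psi_0 + \Psi^\perp$ with $\Psi^\perp\perp\psi_0$, solve the infinite-dimensional (``$\perp$'') part of the Ginzburg--Landau equations together with the $a$-equation for $(\Psi^\perp, a)$ as smooth functions of the small parameters $(t,\kappa^2-b)$ by the implicit function theorem (here $\Psi^\perp$ and $a$ are quadratically small, $O(t^2)$), and substitute back to get a scalar bifurcation equation of the form $\big[(\kappa^2-b) + c(\tau,\kappa)\,|t|^2 + o(|t|^2)\big]\,t = 0$. The crucial computation is the sign and size of the cubic coefficient $c(\tau,\kappa)$: a standard second-order perturbation calculation expresses it in terms of the $L^4$-norm of $\psi_0$ over a cell versus its $L^2$-norm, i.e. in terms of the Abrikosov constant $\beta(\tau) = \langle|\psi_0|^4\rangle/\langle|\psi_0|^2\rangle^2$ from \eqref{beta}, and one finds (after accounting for the back-reaction of the magnetic field $a$, which contributes the $(2\kappa^2-1)$ factor) that $c(\tau,\kappa)$ is a positive multiple of $(2\kappa^2-1)\beta(\tau)+1$ with sign governed by whether $\kappa \gtrless \kappa_c(\tau)$ as defined in \eqref{kappac}. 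The smallness hypothesis \eqref{b-cond} guarantees we are in the regime where the $o(|t|^2)$ terms are controlled relative to the cubic term, and the sign hypothesis \eqref{b-cond2} is exactly the condition that $(\kappa^2-b)$ and $-c(\tau,\kappa)$ have the same sign, so that $|t|^2 = -(\kappa^2-b)/c(\tau,\kappa)+\dots > 0$ has a solution; this yields a nonzero $t$, hence a nontrivial Abrikosov lattice solution $u_\om=(\Psi_\om,A_\om)$, which is smooth by elliptic regularity.

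\textbf{Main obstacle.} The routine parts are the setup and the implicit-function-theorem solution of the $\perp$-equations; the delicate point is the explicit evaluation of the bifurcation coefficient $c(\tau,\kappa)$ and showing it reduces to the Abrikosov constant $\beta(\tau)$ with precisely the factor $(2\kappa^2-1)\beta(\tau)+1$. This requires carefully tracking the second-order correction from the magnetic potential $a$ (which is not present in the scalar Ginzburg--Landau / Gross--Pitaevskii toy model), solving an auxiliary linear elliptic system for $a$ in terms of $|\psi_0|^2$, and inserting it back; getting the algebra exactly right is what ties the abstract bifurcation to the threshold $\kappa_c(\tau)$ and the flux window \eqref{b-cond}--\eqref{b-cond2}. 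A secondary technical point is verifying that the chosen gauge-periodic functional-analytic framework is genuinely preserved by the nonlinearity and that all operators involved are Fredholm with the stated one-dimensional kernel, for which the explicit theta-function description of the lowest Landau level is essential.
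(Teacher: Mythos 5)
Your proposal follows essentially the same route as the paper: the gauge-periodic setup with the cocycle \eqref{gs-fixed}, identification of the linearization at the normal state $(0,A^b)$ with the Landau Hamiltonian whose lowest level is one-dimensional for $n=1$ (Proposition \ref{prop:null-set}, via the theta-function description), and a Lyapunov--Schmidt reduction whose cubic coefficient is controlled by $(2\kappa^2-1)\beta(\tau)+1$, which is exactly where \eqref{b-cond}--\eqref{b-cond2} and the threshold \eqref{kappac} come from. The only blemish is a sign-convention slip in your reduced equation $\big[(\kappa^2-b)+c|t|^2+o(|t|^2)\big]t=0$: with $c>0$ for $\kappa>\kappa_c(\tau)$ this would force $\kappa^2<b$, the opposite of \eqref{b-cond2}, so the linear term should read $(b-\kappa^2)$ (equivalently $|t|^2\approx(\kappa^2-b)/c$); your stated conclusion is nevertheless the correct one.
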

\medskip

\noindent {\bf Remark.}
  For  $\kappa> \frac{1}{\sqrt 2}$ and the triangular and square lattices the theorem was proven in \cite{Odeh, BGT, Dutour2, TS} and in the case stated, in  \cite{TS, TS2}.

\begin{theorem}[\cite{TS2}] \label{thm:ALener}
Let  $\kappa> \frac{1}{\sqrt 2}$. Fix a lattice shape $\tau$ and let  
$b$ satisfy   $b< \kappa^2$  and \eqref{b-cond}. Then 
%
 \begin{itemize} 
\item   the global minimizer of the average energy per cell is the solution
    corresponding to the equilateral triangular lattice.
\end{itemize}    \end{theorem}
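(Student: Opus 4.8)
\medskip
\noindent\emph{Proof proposal.}
The plan is to work in the bifurcation regime \eqref{regime1} of Theorem~\ref{thm:ALexist1}, where the order parameter of the Abrikosov solution is small, to compute the Ginzburg--Landau energy per cell of the bifurcating branch to the first non-trivial order, and to observe that the lattice shape $\tau$ enters this leading expression only through the Abrikosov constant $\beta(\tau)$; the minimization over shapes then reduces to the known minimization of $\tau\mapsto\beta(\tau)$. First I would introduce a small parameter $\varepsilon$ with $\varepsilon^2\sim\kappa^2-b$ (positive since $b<\kappa^2$) and, via the Lyapunov--Schmidt reduction already behind Theorem~\ref{thm:ALexist1}, write the type--$\om$ solution, $\om=(\tau,b,1)$, as $\Psi_\om=\varepsilon\,\psi_{0,\tau}+O(\varepsilon^3)$, $A_\om=A^{(b)}+O(\varepsilon^2)$, where $\curl A^{(b)}=b$ and $\psi_{0,\tau}$ spans the one--dimensional kernel (the lowest Landau level) of the operator obtained by linearizing \eqref{GLE} at the normal state $(\Psi_n,A_n)$ with $\curl A_n=b$. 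The function $\psi_{0,\tau}$ is the modified theta function attached to the lattice of shape $\tau$, and this is where $\beta(\tau)$ will enter.

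\textbf{Energy expansion and monotonicity in $\beta(\tau)$.} Next I would substitute this expansion into $\E_\Omega(\Psi,A)/|\Omega|$ and expand. The quadratic part contributes, at order $\varepsilon^2$, the spectral gain $\propto-(\kappa^2-b)\,\varepsilon^2$ of the linearized operator; the quartic term $\tfrac{\kappa^2}{4}\langle|\Psi|^4\rangle$ together with the self-consistent correction to $A$ from Amp\`ere's law contribute, at order $\varepsilon^4$, a term proportional to $\big[(2\kappa^2-1)\beta(\tau)+1\big]\langle|\psi_{0,\tau}|^2\rangle^2\,\varepsilon^4$ --- this is exactly Abrikosov's computation. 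Eliminating $\varepsilon$ (its size being fixed by $b$ through the reduced equation) should yield an expansion of the schematic form
\begin{equation*}
\frac{\E_\Omega(u_\om)}{|\Omega|}=\frac{\E_\Omega(\Psi_n,A_n)}{|\Omega|}-\frac{(\kappa^2-b)^2}{2\kappa^2\big[(2\kappa^2-1)\beta(\tau)+1\big]}\,\big(1+o(1)\big),
\end{equation*}
with the error $o(1)\to0$ as $\kappa^2-b\to0$ \emph{uniformly for $\tau$ in the fundamental domain}. Since $\kappa>1/\sqrt2$, the factor $2\kappa^2-1$ is positive, so the leading correction is a strictly decreasing function of $\beta(\tau)$: among type--$(\tau,b,1)$ solutions, the one minimizing $\beta(\tau)$ has the least energy per cell.

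\textbf{Minimizing the Abrikosov constant.} I would then invoke the results on the Abrikosov constant of \cite{ABN, NV} (as used in \cite{TS}): $\tau\mapsto\beta(\tau)$ on the fundamental domain $\Pi^+/SL(2,\Z)$ attains its strict global minimum at $\tau=e^{i\pi/3}$, the equilateral triangular lattice. Combined with the previous step, this gives, for $\kappa^2-b$ sufficiently small, that the global minimizer of $\E_\Omega/|\Omega|$ over type--$(\tau,b,1)$ Abrikosov states is the triangular-lattice solution, and that it is the unique minimizer.

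\textbf{Main obstacle.} The lattice-sum inequality is imported, so the real work is the uniform-in-$\tau$ energy expansion above. The hard part will be controlling the higher--Landau--level correction to $\Psi_\om$ and the correction to the vector potential with constants that do not degenerate as $\tau$ ranges over the non-compact fundamental domain --- in particular near the cusp $\tau\to i\infty$, where $\beta(\tau)\to\infty$ and the cell geometry degenerates --- and checking that the amplitude equation determining $\varepsilon$ from $b$ is solvable with the stated asymptotics, which is precisely where the smallness condition \eqref{b-cond} (keeping $(2\kappa^2-1)\beta(\tau)+1$ comparable to the scale of the perturbation) and the inequality $b<\kappa^2$ are used. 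Once the expansion is in place with this uniform control, the conclusion is immediate.
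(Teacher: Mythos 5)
Your proposal is correct and follows essentially the same route as the paper: a Lyapunov--Schmidt/bifurcation expansion of the energy per cell near $h_{c2}$ showing that the leading shape-dependence enters only through $(2\kappa^2-1)\beta(\tau)+1$ (so that for $\kappa>1/\sqrt2$ minimizing the energy reduces to minimizing $\beta(\tau)$), followed by the imported result of \cite{ABN,NV} that $\beta$ attains its minimum at $\tau=e^{i\pi/3}$. Your identification of the uniformity in $\tau$ of the error terms as the main technical burden is also consistent with how the paper handles this via the estimate $\tau_b-\tau_*=O(\mu^{1/2})$ on minimizers.
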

(Due to a calculation error, Abrikosov concluded that the
lattice which gives the minimum energy is the square lattice.
The error was corrected by Kleiner, Roth, and Autler~\cite{KRA},
who showed that it is in fact the triangular lattice which
minimizes the energy.)

Now, we formulate the existence result for low magnetic fields, those near the first critical magnetic field $h_{c1}$: Let $\cL_{\om}$ be a lattice specified by a triple $\om=(\tau, b, n)$ and let  $\Om_\om$ denote its elementary cell. We have 
\DETAILS{Let  $\cL\equiv\cL_{\taul \rho}$ be a family of lattices of a fixed shape, $\taul$, and with $\rho$ being the length of the shortest basis vector for $\cL_{\tau \rho}$. 
We assume  $\rho\gg 1$.  Then, by  the flux quantization,	$b = \frac{2\pi n}{|\Omega|},$ where $b$ is the average magnetic flux per the fundamental cell, the area of the fundamental cell, $\Om_\om$, of $\cL$ is $\ge \rho^2$ and the average magnetic field $b=O(\rho^{-2})$.} 
\begin{theorem}[\cite{ST1}] \label{thm:ALexist2}
Let $\kappa\ne \frac{1}{\sqrt{2}}$ and fix a lattice shape $\lam$ 
 and $n \ne 0$. 
Then there is  $b_0=b_0(\kappa )\ (\sim (\kappa- 1/\sqrt{2})^{2}) > 0$ such that for $b  \le b_0$, there exists an odd solution Abrikosov lattice solution $u_{\om} \equiv (\Psi_{\om}, A_{\om})$ of~\eqref{GLE}, 
s.t. 
\be  \la{eq:close}
  u_{\om} (x)= u^{(n)}(x-\alpha)+O(e^{-c \rho})\ \text{  on  }\ \Omega_\om+\alpha,\ \forall \alpha \in \cL_{\om}, \end{equation}
where $u^{(n)}:=(\Psi^{(n)}, A^{(n)})$ is the $n-$vortex, $\rho=b^{-1/2}$ and $c>0$, in the sense of the local Sobolev norm of any index.
\end{theorem}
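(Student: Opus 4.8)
The plan is to construct the lattice solution $u_\om$ by a gluing/fixed-point argument, taking as an approximate solution the superposition of individual $n$-vortices centered at the sites of $\cL_\om$. Since $b\to 0$ corresponds to $\rho=b^{-1/2}\to\infty$, neighbouring vortices are separated by a distance $\sim\rho$, and by the exponential decay estimates \eqref{eq:decay} the interaction between them is $O(e^{-c\rho})$. Concretely: first, I would fix a fundamental cell $\Omega_\om$ and build a gauge-periodic approximate solution $u_{\rm app}$ by patching translates $u^{(n)}(x-\alpha)$, $\alpha\in\cL_\om$, together with suitable cutoff functions and gauge transformations (the gauge phases are needed so that the patched object satisfies the gauge-periodic boundary conditions with the correct flux $n$ per cell; this uses the flux quantization \eqref{flux-quant} and \eqref{quant-cond}). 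One then checks, using \eqref{eq:decay}, that $u_{\rm app}$ solves \eqref{GLE} up to an error of size $O(e^{-c\rho})$ in the relevant (gauge-periodic) Sobolev norms.

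Second, I would set up the problem for the correction $v = u_\om - u_{\rm app}$ as $F(u_{\rm app}+v)=0$, where $F$ is the (gauge-fixed) Ginzburg-Landau map, and invert the linearization $L := DF(u_{\rm app})$ on the space of gauge-periodic fields orthogonal to the gauge zero-modes. Here is where Theorem \ref{thm:stab} and the block-decomposition machinery of Theorem \ref{thm:decomp} enter: away from $\kappa=1/\sqrt2$, the single-vortex linearized operator $L^{(n)}$ is coercive on the complement of its translational zero-modes (for $|n|=1$) or has a controlled finite-dimensional unstable/neutral subspace (for $|n|\ge2$), with a spectral gap of size $\sim(\kappa-1/\sqrt2)^2$ — this is exactly the scale appearing in the hypothesis $b_0\sim(\kappa-1/\sqrt2)^2$. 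On the periodic lattice, $L$ is, up to $O(e^{-c\rho})$ errors, a direct sum of copies of $L^{(n)}$ over the vortices in a cell; so $L$ is invertible on the orthogonal complement of the span of the (gauge-adjusted, periodized) translational modes $T$ of \eqref{transl-mode}, with inverse bounded by $C(\kappa-1/\sqrt2)^{-2}$. One must also handle the finitely many translational near-zero-modes: either by a Lyapunov-Schmidt reduction, solving a finite-dimensional equation for the vortex positions (the reflection/oddness symmetry imposed in the statement pins the centers at lattice sites and kills this reduced equation), or by working in a symmetry-restricted space where these modes are excluded.

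Third, a standard contraction-mapping/implicit-function argument: writing the equation as $v = -L^{-1}\big(F(u_{\rm app}) + N(v)\big)$ with $N$ the superlinear remainder, the smallness of the residual $\|F(u_{\rm app})\| = O(e^{-c\rho})$ against the inverse bound $\sim(\kappa-1/\sqrt2)^{-2}$ forces $\rho$ large enough, i.e. $b\le b_0(\kappa)\sim(\kappa-1/\sqrt2)^2$, and yields a unique small solution $v$ with $\|v\| = O(e^{-c\rho})$, which is precisely estimate \eqref{eq:close}. Smoothness of $u_\om$ follows from elliptic regularity, and the oddness is preserved because the whole construction can be carried out in the odd (reflection-symmetric) subspace.

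The main obstacle I expect is the second step: controlling the linearized operator $L$ on the lattice uniformly. Two issues compound. (i) Unlike the single-vortex case, on $\R^2/\cL_\om$ one does not have the clean $SO(2)$-equivariant fiber decomposition of Theorem \ref{thm:decomp}, so one must instead prove an \emph{approximate} block-diagonalization: show that $L$ differs from $\bigoplus L^{(n)}$ (one summand per vortex, with periodization) by an operator of norm $O(e^{-c\rho})$, which requires careful exponential-decay bookkeeping on the gauge phases and cutoffs and, crucially, control of the essential/continuous spectrum of $L^{(n)}$ (the bound $\sigma_{ess}(L^{(n)}_m)=[\min(1,\kappa^2),\infty)$ of \eqref{eq:contspec}) so that the periodized operator's spectrum near $0$ consists only of the translational modes. (ii) The spectral gap of $L^{(n)}|_{T^\perp}$ degenerates as $\kappa\to1/\sqrt2$ (it is $O((\kappa-1/\sqrt2)^2)$, tied to the self-dual structure $L^{(n)}_m|_{\kappa=1/\sqrt2}=(F^{(n)}_m)^*F^{(n)}_m$ and the vanishing of vortex interactions there), which is why the theorem excludes $\kappa=1/\sqrt2$ and why $b_0$ must shrink quadratically; getting the \emph{sharp} power $(\kappa-1/\sqrt2)^2$ in $b_0$ requires a quantitative perturbation analysis of the self-dual operator, in the spirit of the $\kappa$-variation argument sketched after \eqref{L1}.
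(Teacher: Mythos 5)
Your proposal is correct in outline and shares the paper's core strategy---build an approximate solution out of $n$-vortices, invert the linearization off the symmetry zero-modes, and close with a Lyapunov--Schmidt/contraction argument---but the technical setup is genuinely different, and the difference matters exactly at the point you flag as the main obstacle. You glue an infinite array of vortices on $\R^2$ and must then prove an approximate block-diagonalization of the linearized operator over the whole lattice, with careful exponential bookkeeping of cutoffs, gauge phases, and the essential spectrum. The paper instead first reduces the PDE to a \emph{single fundamental cell} $\Omega_\om$ with the gauge-periodic boundary conditions \eqref{gaugeperbc}, proves the lifting Lemma \ref{lem:lifting} (a cell solution with those boundary conditions extends, by matching of traces and elliptic regularity across cell boundaries, to a smooth gauge-periodic solution of \eqref{GLE} on all of $\R^2$), and only then runs the perturbation argument: the approximate solution is \emph{one} $n$-vortex placed at the centre of $\Omega_\om$, adjusted near $\partial\Omega_\om$ so as to satisfy \eqref{gaugeperbc} with residual $O(e^{-c\rho})$ by the decay estimates \eqref{eq:decay}, and the linearized operator lives on the compact cell, hence has discrete spectrum whose low-lying part is an exponentially small perturbation of the single-vortex operator. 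There is no infinite direct sum and no global block-diagonalization to control; the cell-plus-lifting formulation buys you item (i) of your ``main obstacle'' essentially for free, while the reflection (oddness) symmetry plays the role you assign it, pinning the vortex centre and trivializing the reduced translational equation.

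One caveat on your item (ii): for $n=\pm1$ the operator $L^{(1)}|_{T^{\perp}}$ does \emph{not} lose its spectral gap as $\kappa\to 1/\sqrt{2}$ (there are no self-dual moduli beyond the two translations), yet the theorem still requires $b_0\sim(\kappa-1/\sqrt{2})^{2}$ for every $n\ne 0$. So the quadratic shrinking of $b_0$ cannot be attributed solely to the degeneration of the gap coming from the $2|n|-2$ extra self-dual modes; you would need to trace where the $\kappa$-dependence actually enters the residual and interaction estimates near the self-dual point (where the scalar and magnetic decay rates coincide and the leading inter-vortex interaction nearly cancels), rather than read it off from the single-vortex spectral theory alone.
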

In the next two subsections we present a discussion of some key general notions. After this, we outline the proofs of the results above.

\subsection{Abrikosov lattices as gauge-equivariant states}\label{sec:equiv}
A key point in proving both theorems is to realize that a state $(\Psi, A)$ is an Abrikosov lattice if and only if  $(\Psi, A)$ is gauge-periodic or  gauge-equivariant (with respect to a lattice $\LAT$) in the sense that
there exist (possibly multivalued) functions $g_s : \R^2 \to \R$, $s \in \LAT$, such that
\begin{equation}\label{gauge-per'}
	T^{\rm trans}_s (\Psi, A) =T^{\rm gauge}_{g_s} (\Psi, A).
\end{equation}
Indeed, if state  $(\Psi, A)$ satisfies \eqref{gauge-per'}, then all associated physical quantities are $\cL-$periodic, i.e. $(\Psi, A)$ is an Abrikosov lattice. 
In the opposite direction, if $(\Psi, A)$ is an Abrikosov lattice, then $\curl A(x)$ is periodic w.r.to $\cL$,  and therefore   $A(x + s) = A(x) +\n g_s(x)$,  for some functions $g_s(x)$. 
Next, we write $\Psi(x)=|\Psi(x)|e^{i \phi(x)}$. Since $|\Psi(x)|$ and $J(x)= |\Psi(x)|^2 (\n \phi(x)-  A(x))$ are  periodic w.r.to $\cL$,  we have that   $\n \phi(x + s) = \n \phi(x) +\n \tilde g_s(x)$, which implies that   $\phi(x + s) =\phi(x) + g_s(x)$, where $g_s(x)=\tilde g_s(x)+ c_s$, for some constants $c_s$. 

Since $T^{\rm trans}_s$ is a commutative group, we see that  the family of functions $g_s$ has the important cocycle property
\begin{equation}\label{cocycle-cond}
    g_{s+t}(x) - g_s(x+t) - g_t(x) \in 2\pi\Z.
\end{equation}
This can be seen by evaluating the effect of translation by $s+t$ in two different ways. We call $g_s(x)$ the {\it gauge exponent}. 
It can be shown (see Appendix \ref{sec:gs}) that by a gauge transformation, we can pass from a  exponential  $g_s$ satisfying the cocycle condition \eqref{cocycle-cond}   is equivalent to  the exponent 
\begin{equation}\label{gs-spec'} \frac{b}{2} s \wedge x + c_s,\ \text{  with  $c_s$   satisfying }\ c_{s+t} - c_s - c_t - \frac{1}{2} b s \wedge t \in 2\pi\Z,\end{equation} for 
and  $b$ 
  satisfying  $b |\Omega| \in 2\pi\Z$. 
 For more discussion of $g_s$, see Appendix \ref{sec:gs}. 
 

\medskip
\noindent {\bf Remark.} 
 Relation \eqref{cocycle-cond} for Abrikosov lattices was isolated in \cite{ST2}, where it played an important role. This condition is well known in algebraic geometry and number theory (see e.g. \cite{Gun2}).  However, there the associated vector potential (connection on the corresponding principal bundle) $A$ is not considered there.



\subsection{Abrikosov function} \label{sec:ABR}

Let the number of the magnetic flux quant per the lattice cell be $n=1$. Let $\lan f \ran_\Om$  denote the average, $\lan f \ran_\Om = \frac{1}{|\Omega|} \int_{\Omega} f ,$ of a function $f$ over  $\Omega\subset \R^2$. For a lattice $\LAT$, considered as a  group of lattice translations, $\hat\LAT$ denotes the dual group, i.e. the group of characters, $\chi : \LAT \to U(1)$. Furthermore,  let  $\cT:= \R^2/\LAT$ and for $\LAT\in [\tau]$, we introduce the normalized lattice  $\LAT_\tau =\sqrt\frac{2\pi}{|\cT|}\LAT$. 
 The key role in understanding the energetics of is played by the Abrikosov function, 
\begin{equation} \label{beta}
    \beta(\tau) := \frac{ \lan  |\phi|^4\ran_{\Omega_\tau}}{  \lan  |\phi|^2 \ran_{\Omega_\tau}^2 },
\end{equation}
where   $\Om_\tau$ is a fundamental cell of the lattice  $\LAT_\tau$, 
 and  $\phi$ is 
 the solution to the problem
\begin{align}\label{phi-eqs-resc} (-\COVLAP{A^n}-n)\phi =0, \ \quad \phi (x+s) =e^{i\frac n2  s \wedge x} 
\phi (x),\ \quad \forall s\in \LAT_\tau,	\end{align}
where $A^n(x):= -\frac n2  Jx$ and $J:= \left( \begin{array}{cc} 0 & 1 \\ -1 & 0 \end{array} \right)$, for $n=1$. 
 We will show below (Proposition \ref{prop:null-set}) that, for $n=1$, the problem \eqref{phi-eqs-resc} has a unique solution and therefore 
 $\beta$ is well-defined. It is not hard to see that $\beta$ depends only on the equivalence class of $\LAT$.  

 Definition \eqref{beta} - \eqref{phi-eqs-resc} implies that $\beta( \tau)$ is  symmetric w.r.to the imaginary axis, $\beta (- \bar\tau)= \beta( \tau)$.  Hence it suffices to consider $ \beta( \tau)$ on the $\Re\tau \geq 0$ half of  the fundamental domain,  $\Pi^+/SL(2, \Z)$ (the heavily shaded area on  Fig. \ref{fig:PoincareStrip}).

Moreover,  we can consider \eqref{beta} - \eqref{phi-eqs-resc} on the entire  Poicar\'e half-plane $\Pi^+:=\{\tau \in \C: \Im \tau>0\}$, which allows us to define  $\beta(\tau)$ as a modular  function on $\Pi^+$, 
 \begin{itemize} \item  the function $\beta(\tau)$,  defined on $\Pi^+$,  is invariant under the action of 
 $SL(2, \Z)$.  \end{itemize} 
\DETAILS{\begin{proof} Clearly,  the function  $\beta(\tau)$ is independent of  should not depend on a choice of a basis in $\LAT_\tau$. 
Now, the parameter $\tau$ in the formulae above determines the basis   $\sqrt{\frac{2\pi}{\im\tau} }(1 ,\  \tau)$  of the lattice $\LAT_\tau$ and any two bases in $\LAT_\tau$  are related by a modular map  $\sqrt{\frac{2\pi}{\im\tau} } (\nu_1,\ \nu_2)\ra  \sqrt{\frac{2\pi}{\im\tau} }(\nu_1',\ \nu_2')=\sqrt{\frac{2\pi}{\im\tau} }(\al\nu_1+\beta,\ \g\nu_2 +\del)$, where  $\al, \beta, \g, \del\in \Z$,  and 
$\al \del-\beta \g=1$ (with the matrix  $\left( \begin{array}{cc} \al & \beta \\ \g & \del \end{array} \right)$, an element  of the modular group  $SL(2, \Z)$). In particular, replacing the basis  $\sqrt{\frac{2\pi}{\im\tau} }(1, \tau)$  by the basis $\sqrt{\frac{2\pi}{\im\tau} }( \g \tau+\del,\  \al \tau+\beta )$, results in the shape parameter $\tau$ being mapped as $ \tau\ra g\tau:=\frac{\al\tau+\beta}{\g\tau+\del}$.
  Hence the statement follows.\end{proof}} 

This implies that it suffices to consider $\beta(\tau)$ on  the fundamental domain \eqref{fund-domSL2Z}.

\medskip

\noindent
{\bf Remarks}.
\DETAILS{1) {\bf For $n>1$ one can define the Abrikosov function by the same formula \eqref{beta} {\bf where} $\phi$ should be thought of as the leading approximation to the Abrikosov solution $\Psi_\om$. We conjecture that}
\begin{equation}
    \beta(\tau) = \inf_{\phi} \frac{ \lan  |\phi|^4\ran_{\Omega_\tau}  }{  \lan  |\phi|^2 \ran_{\Omega_\tau}^2 },
\end{equation}
the $\inf$ taken over a suitable class of lattice states.}
\DETAILS{\item All the above results dealt with Abrikosov lattices with one quantum of magnetic flux per lattice cell. Existence of Abrikosov lattices with more than one quantum of magnetic flux per lattice cell
is shown in \cite{OST}. Earlier partial results in this direction
were obtained in \cite{Abr, Ch, Al}.}
1) The term   Abrikosov constant  comes from the physics literature, where one often considers only equilateral triangular or square lattices.

2)  The way we defined the Abrikosov constant $\beta(\tau)$, it  is manifestly independent of $b$. Our definition differs from the standard one by rescaling: the standard definition uses the function   $\phi_b(x) = \phi(\sqrt b x)$, instead of  $\phi(x) $.

\subsection{Comments on the proof of 
Theorem \ref{thm:ALexist1}} \label{sec:approach-exist1}
As was mentioned in Subsection \ref{sec:equiv}, we look for solutions of \eqref{GES}, satisfying  condition \eqref{gauge-per'}, or explicitly as, for  $s \in \LAT$, 
  \begin{equation}\label{gauge-per}
\begin{cases}	\Psi(x+s) = e^{i g_s(x)}\Psi(x),\\
A(x+s)=A(x) + \n g_s(x),
\end{cases}\end{equation}
where $g_s$ satisfies  \eqref{cocycle-cond}. By \eqref{gs-spec'}, 
it can be taken to be 
\begin{equation}\label{gs-fixed}		g_s(x)  = \frac{b}{2} s \wedge x + c_s,
	\end{equation}	
	where $b$ is the average magnetic flux, $b = \frac{1}{|\Omega|} \int_\Omega  \CURL A $ (satisfying \eqref{quant-cond} so that $b s \wedge t \in 2\pi\Z$),  and the $c_s$ satisfy
\begin{equation}\label{cs-condition}	c_{s+t} - c_s - c_t - \frac{1}{2} b s \wedge t \in 2\pi\Z. \end{equation}

\paragraph{The linearized problem.} We expect that as the average flux $b$ decreases below $h_{c2} = \kappa^2$,
a vortex lattice solution 
emerges from the normal material solution $(\Psi_n , A_n)$, where $\Psi_n = 0$ and  $A_n$ is a magnetic potential, with the constant magnetic field $b$. 
Note that $(\Psi_n , A_n)=(0 , A^b)$ satisfies \eqref{gauge-per}, if we take the gauge  $A^b=  - \frac b2  J x$. 
 Linearizing \eqref{GLE} at  $(0 , A^b)$,  
leads to the  linearized problem
\begin{align}\label{phi-eqs} (-\COVLAP{A^b}-\kappa^2)\phi =0, 
	\end{align}
 with 
$\phi (x)$ satisfying 
\begin{align}\label{phi-per} \phi (x+s) =e^{
i\frac b2 s\cdot J x} \phi (x),\ \quad \forall s\in \LAT.
	\end{align}
(The second equation in \eqref{GLE} leads to  $\curl a=0$ which gives, modulo gauge transformation, $a=0$.)
We show that this problem has $n$ linearly independent solutions, provided	 $b|\Om|=2\pi n$ and  $b=\kappa^2=h_{c2}$.

Denote by $L^b$ the operator $-\Delta_{A^b}$, defined on the lattice cell $\Omega$ with the lattice
boundary conditions in~\eqref{phi-per}, is self-adjoint, has a purely discrete
spectrum, and evidently satisfies $L^b \ge 0$. We have
\begin{proposition} \label{prop:null-set}
The operator $L^b$ is self-adjoint, with the purely discrete spectrum given by
 the spectrum explicitly as 
	\begin{equation}\label{specLb}
	\sigma(L^b) = \{\, (2k + 1)b : k = 0, 1, 2, \ldots \,\},
	\end{equation}
	and each eigenvalue is of the same multiplicity. 
	
	If $b|\Om|=2\pi n$, then this multiplicity is $n$ and, in particular, we have $$\dim_\C \Null (L^b - b) = n.$$
\end{proposition}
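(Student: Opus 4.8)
The plan is to recognize $L^b=-\Delta_{A^b}$ with $A^b=-\tfrac{b}{2}Jx$ as (a constant multiple of) the Landau Hamiltonian for a charged particle in a constant magnetic field of strength $b$, restricted to the magnetically periodic sections determined by \eqref{phi-per}, and to analyze it via the standard creation/annihilation (``ladder'') operator algebra. First I would introduce the two first-order operators
\be
  \bar\partial_{A^b} := (\nabla_A)_1 + i(\nabla_A)_2, \qquad
  \partial_{A^b} := (\nabla_A)_1 - i(\nabla_A)_2,
\end{equation}
and verify, by a direct computation using $\curl A^b = b$, the commutation relation $[\bar\partial_{A^b},\partial_{A^b}]=2b$ and the factorizations
\be
  -\Delta_{A^b} = \bar\partial_{A^b}^*\,\bar\partial_{A^b} + b = \partial_{A^b}^*\,\partial_{A^b} - b
\end{equation}
on the space of $\LAT$-gauge-periodic functions (the boundary terms in the integration by parts vanish precisely because of \eqref{phi-per}, since $b\, s\wedge t \in 2\pi\Z$). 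From the first factorization one immediately gets $L^b\ge b$ on the orthogonal complement of $\Null(\bar\partial_{A^b})$, and $L^b\ge 0$ overall; more importantly, the ladder operators map eigenspaces of $L^b$ to eigenspaces shifted by $\pm 2b$, so the whole spectrum is generated from the lowest Landau level $\Null(\bar\partial_{A^b})$, giving $\sigma(L^b)=\{(2k+1)b: k=0,1,2,\dots\}$ with all levels of equal multiplicity once one checks that $\bar\partial_{A^b}$ has dense range on the complement of its kernel (equivalently that $\partial_{A^b}$ restricted to each level is injective, which follows again from the commutator identity). Discreteness and self-adjointness are standard: the form domain embeds compactly into $L^2(\Omega)$ by Rellich since $\Omega$ is a bounded cell, and the quadratic form is closed and nonnegative.

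The crux is therefore to compute $\dim_\C \Null(\bar\partial_{A^b})$, i.e. the dimension of the lowest Landau level among $\LAT$-periodic sections, and to show it equals $n$ when $b|\Omega|=2\pi n$. The key step is to write, in the complex coordinate $z=x_1+ix_2$ and with the gauge $A^b=-\tfrac{b}{2}Jx$, any element of the kernel in the form $\phi(z,\bar z) = e^{-\tfrac{b}{4}|z|^2}\,\theta(z)$ with $\theta$ entire; the equation $\bar\partial_{A^b}\phi=0$ is equivalent to $\partial_{\bar z}\theta = 0$. Imposing the quasiperiodicity \eqref{phi-per} translates into the classical functional equations for theta functions of ``weight'' $n$ with respect to the lattice $\LAT$ --- precisely the cocycle \eqref{cocycle-cond}/\eqref{gs-spec'} with $b|\Omega|=2\pi n$. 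The dimension of that space of theta functions is $n$; I would prove this self-containedly by a winding-number/argument-principle count: the number of zeros of a nonzero such $\theta$ inside a fundamental cell, computed as $\tfrac{1}{2\pi i}\oint_{\partial\Omega}\partial_z\log\theta\, dz$, equals $\tfrac{b|\Omega|}{2\pi}=n$ using the quasiperiodicity factors, and then a standard interpolation/Riemann-Roch-type argument (or an explicit Fourier-series construction adapted to the lattice, as in the parametrization of Appendix \ref{sec:param}) shows the solution space has dimension exactly $n$.

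I expect the main obstacle to be the bookkeeping in this last count: getting the quasiperiodicity factors and the gauge conventions exactly consistent with \eqref{phi-per} (the choice $A^b=-\tfrac{b}{2}Jx$ versus other gauges shifts the Gaussian prefactor and the multiplier system), and then rigorously extracting ``$\dim = n$'' rather than merely ``$\dim \le n$'' or ``$\dim \ge n$''. Concretely, the $\le n$ bound is the zero-counting argument above; the $\ge n$ bound requires exhibiting $n$ independent solutions, which is cleanest via the explicit Jacobi-theta construction on $\LAT_\tau$ (reduce to $n=1$ by passing to a sublattice of index $n$, where uniqueness --- needed anyway for the well-definedness of $\beta(\tau)$ in \eqref{beta} --- is the $n=1$ case of the same statement). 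Everything else (self-adjointness, discreteness, the ladder structure, the explicit arithmetic progression of eigenvalues, equal multiplicities) is routine operator theory once the factorizations are in hand; the arithmetic-geometry input is confined to the dimension formula for the lowest level.
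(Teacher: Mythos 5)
Your proposal is correct and, for the operator-theoretic part, is exactly the paper's argument: the same ladder operator $\alpha=(\nabla_{A^b})_1+i(\nabla_{A^b})_2$ with $[\alpha,\alpha^*]=2b$ and $-\Delta_{A^b}-b=\alpha^*\alpha$ yields the spectrum $\{(2k+1)b\}$ with level-independent multiplicity, and the same Gaussian conjugation identifies $\Null\,\alpha$ with a space of entire functions $\theta$ subject to the quasi-periodicity induced by \eqref{phi-per} (your prefactor $e^{-\frac{b}{4}|z|^2}$ is the correct one for the gauge $A^b=-\frac{b}{2}Jx$; it differs from the paper's conjugator only by an entire factor $e^{\pm\frac{b}{4}z^2}$ that gets absorbed into $\theta$). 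The one place you take a genuinely different road is the dimension count for this space of degree-$n$ theta functions: you propose an argument-principle zero count for the upper bound plus an explicit construction (sublattice of index $n$, Jacobi thetas with characteristics) for the lower bound, whereas the paper simply expands $\theta$ in a Fourier series using $\theta(z+1)=\theta(z)$ and reads off from the second quasi-periodicity the recursion $c_{k+n}=e^{in\pi\tau}e^{2ki\pi\tau}c_k$; since $\Im\tau>0$ makes the resulting series converge for every choice of $c_0,\dots,c_{n-1}$, this yields existence and the exact dimension $n$ in a single stroke. Your route is valid but heavier, and, as you yourself flag, it splits into two separate inequalities whose bookkeeping (multiplier system, choice of fundamental cell for the contour integral) is exactly where errors tend to creep in; the Fourier recursion, which you mention only parenthetically as an alternative, is the cleaner choice and is what the paper actually does.
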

\begin{proof} The self-adjointness is standard. Spectral information about $L^b$ can be obtained by
introducing the harmonic oscillator annihilation and creation  operators, $\alpha $ and $\alpha^*$, with 
    \begin{equation}
        \alpha  := (\nabla_{A^b})_1 + i(\nabla_{A^b})_2 =\partial_{x_1} + i\partial_{x_2} + \frac{1}{2} b x_1 + \frac{1}{2} i b x_2.
    \end{equation}
    One can verify that these operators satisfy the following relations:
    \begin{enumerate}
    \item $[\alpha, \alpha^*] = 2\Curl A^b =2b$;
    \item $-\Delta_{A^b} - b = \alpha^*\alpha$.
    \end{enumerate}
    As for the harmonic oscillator (see for example \cite{GS1}), this gives the spectrum explicitly, \eqref{specLb}. This proves the first part of the theorem. 
    


For the second part, a simple calculation gives the following operator equation
\begin{equation*}
  e^{\frac{b}{2}(i x_1 x_2-x_2^2)}\alpha e^{-\frac{b}{2}(i x_1 x_2-x_2^2)} = \partial_{x_1} + i\partial_{x_2}.
\end{equation*}
This immediately proves that $\psi \in \Null \alpha$ if and only if $\xi (x)= e^{\frac{b}{2}(i x_1 x_2-x_2^2)}\psi (x)$ satisfies $\partial_{x_1}\xi + i\partial_{x_2}\xi = 0$.

We identify $\R^2$ with $\C$, via the map $(x_1, x_2)\ra x_1+i x_2$. 
We can choose a basis in $\LAT$  so that $\LAT=r  (\Z+\tau\Z)$, where $\tau\in \C$, $\Im \tau>0$, and $r>0$. By   the quantization condition \eqref{quant-cond}, $r:=\sqrt{\frac{2\pi n}{\im\tau b} }$. Define $z=\frac{1}{r} (x_1+i x_2)$ and 
\begin{align}\label{theta-phi} 
 \theta(z)= e^{\frac{b}{2}(i x_1 x_2-x_2^2)}\phi (x).
\end{align} 
By the above,  the function $\theta$ is entire and, due to the periodicity  conditions on $\phi$, satisfies
\begin{subequations}
            \begin{equation*}
                \theta(z + 1) = \theta(z),
            \end{equation*}
            \begin{equation*}
 \theta(z + \tau) =  e^{ -2inz } e^{ -in\tau z } \theta(z).
            \end{equation*}
\end{subequations}
Hence $\theta$ is the theta function and has the 
absolutely convergent Fourier expansion
        \begin{align}\label{theta-series}            \theta(z) = \sum_{k=-\infty}^{\infty} c_k e^{2kiz}.
        \end{align}
with the coefficients  satisfying 
      $      c_{k + n} = e^{in\pi\tau} e^{2ki\pi\tau} c_k,$ 
which means such functions are determined by
$c_0,\ldots,c_{n-1}$ and therefore form an $n$-dimensional
vector space. This proves Proposition~\ref{prop:null-set}.
\end{proof} 
This 
also gives the form of the leading approximation \eqref{theta-phi} -  \eqref{theta-series} to the true solution.

\paragraph{The nonlinear problem.}
Now let $n=1$.
Once the linearized map is well understood, it is possible
to construct solutions, $u_\om,\ \om=(\tau, b, 1)$, of the Ginzburg-Landau equations
for a given lattice shape parameter $\tau$, and the average magnetic flux $b$ near $h_{c2}$,
via a Lyapunov-Schmidt reduction.


\subsection{Comments on the proof of 
Theorem \ref{thm:ALener}} \label{sec:approach-energy}

The relation between the Abrikosov function and the  average energy, $E_b(\tau) := \frac{1}{ |\Omega^\tau|}\E_{\Omega^\tau}(u_\om)$, of this solution is given by
\begin{proposition}
 In the case $\kappa > \frac{1}{\sqrt{2}}$, the minimizers,  $\tau_b$, of $\tau \mapsto E_b(\tau)$ are related to the minimizer,  $\tau_*$, of $\beta(\tau)$, as $\tau_b - \tau_* =O(\mu^{1/2})$, In particular, $\tau_b \to \tau_*$ as $b \to \kappa^2$.
\end{proposition}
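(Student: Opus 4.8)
The plan is to carry out a Lyapunov--Schmidt reduction in the small parameter
$\mu \sim |\kappa^2 - b|$ (equivalently the distance of $b$ below $h_{c2}=\kappa^2$),
and to track the dependence of the reduced energy on the lattice shape $\tau$.
By Proposition \ref{prop:null-set}, for $n=1$ the kernel of the linearization
$L^b = -\COVLAP{A^b}$ at the normal state, subject to the gauge-periodicity
conditions \eqref{phi-per}, is one-complex-dimensional and spanned by the
theta-type function $\phi$ described in \eqref{theta-phi}--\eqref{theta-series}.
Writing the bifurcating solution as $\Psi_\om = s\,\phi + w$ with $w \perp \phi$
and $s$ a (small) complex amplitude, and solving the second Ginzburg--Landau
equation together with the orthogonal complement of the first for $(A, w)$ as
functions of $s$, one obtains a reduced (scalar) equation for $|s|^2$. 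The
standard computation (already behind Theorem \ref{thm:ALexist1}) gives
$|s|^2 = \dfrac{\kappa^2 - b}{\,(2\kappa^2-1)\beta(\tau) + 1\,}\bigl(1 + o(1)\bigr)$,
valid under \eqref{b-cond}--\eqref{b-cond2}, and, after substituting back,
the average energy per cell takes the form
\begin{equation*}
  E_b(\tau) = E_b^{\rm normal} - \frac{(\kappa^2 - b)^2}{2\bigl[(2\kappa^2-1)\beta(\tau)+1\bigr]}\bigl(1 + o(1)\bigr),
\end{equation*}
where $E_b^{\rm normal}$ is the (shape-independent) energy of the normal state
and the $o(1)$ is uniform in $\tau$ on the fundamental domain.

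The first real step is therefore to make this expansion \emph{uniform in $\tau$}:
the Lyapunov--Schmidt machinery must be set up with constants (inverse of
$L^b$ on the complement of the kernel, bounds on the nonlinearity) controlled
uniformly as $\tau$ ranges over a compact piece of $\Pi^+/SL(2,\Z)$; since
$\beta(\tau)$ is smooth and bounded below there (it is a modular function,
finite on the fundamental domain), this is mostly a matter of quoting the
uniform estimates from \cite{TS2}. The second step is the elementary
observation that, to leading order, $\tau \mapsto E_b(\tau)$ is a strictly
decreasing function of $1/\bigl[(2\kappa^2-1)\beta(\tau)+1\bigr]$, hence (since
$2\kappa^2 - 1 > 0$ for $\kappa > 1/\sqrt2$) a \emph{decreasing} function of
$\beta(\tau)$; so the leading-order minimizer in $\tau$ is exactly the
minimizer $\tau_*$ of $\beta$. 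The third step upgrades this to the quantitative
statement $\tau_b - \tau_* = O(\mu^{1/2})$: one writes
$E_b(\tau) = E_b^{\rm normal} - (\kappa^2-b)^2 G(\tau)\bigl(1 + R_b(\tau)\bigr)$
with $G(\tau) = \tfrac12\bigl[(2\kappa^2-1)\beta(\tau)+1\bigr]^{-1}$ smooth and
$R_b = O(\mu)$ in $C^2$ uniformly in $\tau$; since $\beta$ (hence $G$) is known
to have a non-degenerate minimum at $\tau_*$ (this is where one invokes the
analysis of the Abrikosov function behind Theorem \ref{thm:ALener} and
\cite{ABN,NV}), a Taylor expansion of $\nabla_\tau E_b$ at $\tau_*$ together
with the implicit function theorem gives a critical point $\tau_b$ of $E_b$
with $|\tau_b - \tau_*| \lesssim \|R_b\|_{C^1}^{1/2} = O(\mu^{1/2})$, and this
critical point is the minimizer. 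Letting $b \to \kappa^2$ then yields
$\tau_b \to \tau_*$.

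The main obstacle is the \emph{uniformity in $\tau$} of the remainder $R_b$
in $C^2$ (not merely in $C^0$): controlling $\nabla_\tau$ and $\nabla_\tau^2$
of the Lyapunov--Schmidt remainder requires differentiating the reduction with
respect to the shape parameter, which enters through the lattice cell
$\Omega^\tau$, the periodic boundary conditions, and the kernel function $\phi$
itself. One handles this by a change of variables mapping $\Omega^\tau$ to a
fixed reference cell, turning the $\tau$-dependence into a smooth dependence of
the coefficients of the (now $\tau$-independent domain) elliptic problem, and
then differentiating the resolvent and the fixed-point map in $\tau$; the
bounds are uniform because the smallest nonzero eigenvalue of $L^b$ stays
bounded away from $b$ uniformly on the compact shape region (again by
Proposition \ref{prop:null-set}, the gap is exactly $2b$). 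A secondary point to
be careful about is that the rough expansion only pins down the minimizer up to
the resolution $\mu^{1/2}$ because the $\mu^2$ prefactor multiplies a function
whose Hessian at $\tau_*$ is $O(1)$ while the perturbation is $O(\mu)$ relative
to the leading term --- so the $1/2$ power is genuinely the right (and best
possible from this argument) rate, matching the statement.
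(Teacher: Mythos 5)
Your proposal follows essentially the same route as the paper (which states the proposition without proof and defers the details to \cite{TS2}): a Lyapunov--Schmidt reduction near $h_{c2}$ yielding $E_b(\tau)=E_b^{\rm normal}-\tfrac{(\kappa^2-b)^2}{2[(2\kappa^2-1)\beta(\tau)+1]}\bigl(1+o(1)\bigr)$, the observation that for $\kappa>1/\sqrt 2$ minimizing this over $\tau$ amounts to minimizing $\beta(\tau)$, and a quantitative perturbation argument at the non-degenerate minimizer $\tau_*$ of $\beta$ giving the $O(\mu^{1/2})$ rate. One small slip in the wording: since $E_b$ is decreasing in $G=\tfrac12[(2\kappa^2-1)\beta+1]^{-1}$ and $G$ is decreasing in $\beta$ when $2\kappa^2-1>0$, the composite $E_b$ is an \emph{increasing} (not decreasing) function of $\beta$ --- which is precisely what makes the minimizer of $E_b$ the minimizer of $\beta$, so your conclusion stands as drawn.
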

This result was already found (non-rigorously) by Abrikosov~\cite{Abr}. Thus the problem of minimization of the energy per the lattice cell is reduced to finding the minima of  $\beta(\tau)$ as a function of the lattice shape parameter $\tau$.

Using 
symmetries of $\beta(\tau)$ one can also show (see 
\cite{ST3} and below) that  $\beta( \tau)$ has critical points at the points $\tau=e^{\pi i/3}$ and $\tau=e^{\pi i/2}$. However, to determine minimizers of $\beta( \tau)$ requires a rather delicate analysis, which 
gives
\begin{theorem}[\cite{ABN, NV}]
    The function $\beta(\tau)$ has exactly two critical points, $\tau
    = e^{i\pi/3}$ and $\tau = e^{i\pi/2}$. The first is minimum,
while the second is a maximum.
\end{theorem}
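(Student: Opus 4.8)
The plan is to reduce the statement to a property of the modular function $\beta(\tau)$ on the fundamental domain and then localize its critical points. First I would record the transformation behavior: by the discussion in Section~\ref{sec:ABR}, $\beta$ is $SL(2,\Z)$-invariant, real-analytic on $\Pi^+$, and symmetric under $\tau \mapsto -\bar\tau$, so it suffices to analyze $\beta$ on the closed half $\{\Re\tau \ge 0\}$ of the standard fundamental domain $\Pi^+/SL(2,\Z)$, whose boundary consists of the vertical line $\Re\tau = \tfrac12$ and the arc $|\tau| = 1$. The two candidate points $\tau = e^{i\pi/3}$ (corner, equilateral triangular lattice) and $\tau = e^{i\pi/2} = i$ (square lattice) are exactly the elliptic fixed points of the modular group in this region, hence forced critical points of any modular-invariant real-analytic function — this gives existence of critical points there for free.

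\textbf{The analytic input.} The core of the proof is an explicit series representation of $\beta(\tau)$. Using the theta-function form \eqref{theta-phi}--\eqref{theta-series} of the solution $\phi$ to \eqref{phi-eqs-resc} for $n=1$, one computes $\langle|\phi|^2\rangle_{\Omega_\tau}$ and $\langle|\phi|^4\rangle_{\Omega_\tau}$ by expanding in Fourier modes over the cell and integrating. The denominator is elementary; the numerator produces a lattice sum that, after the standard manipulations (Poisson summation / completing squares in the Gaussian integrals), can be written in closed form in terms of a rapidly convergent series in $q = e^{2\pi i\tau}$. Concretely one arrives at an expression of the shape
\begin{equation}\label{beta-series-plan}
  \beta(\tau) = \sum_{\ell \in \LAT_\tau} e^{-\frac{\pi}{2}|\ell|^2}
  = \sum_{(m,k)\in\Z^2} \exp\!\left(-\frac{\pi}{2\,\Im\tau}\,|m + k\tau|^2\right),
\end{equation}
the Jacobi theta value at the lattice — this is the classical identity (due to \cite{ABN, NV}) identifying the Abrikosov constant with a theta-function lattice sum. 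One then verifies that the right-hand side of \eqref{beta-series-plan} genuinely defines $\beta$ as given in \eqref{beta}; this identification is the first real step of work.

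\textbf{Localizing the critical points.} With \eqref{beta-series-plan} in hand, the argument becomes a study of $\tau \mapsto \sum e^{-\frac{\pi}{2\Im\tau}|m+k\tau|^2}$ on the fundamental domain. On the interior I would show $\nabla\beta \ne 0$ by differentiating the series termwise and using monotonicity of each summand in suitable directions (for instance, along the $\Im\tau$ direction on the vertical boundary segment, and along the radial direction on the unit-circle arc), so that the sum inherits strict monotonicity and no interior critical point survives; the two elliptic points are then the only critical points. To classify them, I would compute the Hessian of $\beta$ at $\tau = e^{i\pi/3}$ and $\tau = i$ directly from \eqref{beta-series-plan}, or more cleanly compare values: one shows $\beta(e^{i\pi/3}) < \beta(i)$ by evaluating the rapidly convergent series numerically-then-rigorously (tail bounds make this a finite computation), and combines this with the fact that $\beta$ attains its extrema on the compact fundamental domain and that $e^{i\pi/3}$, $i$ are the only critical points — forcing $e^{i\pi/3}$ to be the minimum and $i$ the maximum.

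\textbf{Main obstacle.} The delicate part, flagged already in the text (``a rather delicate analysis''), is ruling out \emph{interior} critical points of $\beta$: the termwise-monotonicity argument for the theta series is not automatic because different lattice vectors pull in competing directions, and one must either find a genuinely monotone quantity (e.g. exploit the heat-kernel/log-convexity structure of theta functions, or Montgomery's characterization of the hexagonal lattice as the minimizer of such Gaussian lattice sums) or carry out an effective estimate controlling the gradient away from zero off the elliptic points. Establishing \eqref{beta-series-plan} rigorously and then bootstrapping the known extremality of the hexagonal lattice for Gaussian lattice sums (the content of \cite{ABN, NV}) is where the proof truly lives; the rest is bookkeeping with modular symmetry.
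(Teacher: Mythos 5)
First, a point of reference: the paper itself does not prove this theorem --- it is quoted from \cite{ABN, NV}, and the only step the paper supplies is the symmetry argument showing that $e^{i\pi/3}$ and $e^{i\pi/2}$ are critical points. Your opening move (modular invariance plus the fact that the elliptic fixed points of $SL(2,\Z)$ force the gradient of any smooth invariant function to vanish there) reproduces exactly that step and is sound. Your reduction of $\beta$ to a Gaussian lattice sum is also the right idea and is consistent with the paper: from Theorem 4.8 and the identity $\beta(\tau)=\tfrac12\gamma_0(\tau)$ one gets $\beta(\tau)=\sum_{t\in\LAT_\tau^*}e^{-|t|^2/2}=\sum_{(m,n)\in\Z^2}\exp\bigl(-\tfrac{\pi}{\Im\tau}|m+n\tau|^2\bigr)$. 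Note, however, that your displayed formula is internally inconsistent (the sum over $\ell\in\LAT_\tau$ with weight $e^{-\pi|\ell|^2/2}$ does not equal your second expression) and carries an extra factor $\tfrac12$ in the exponent relative to the correct normalization; this is harmless for Montgomery-type minimality (which holds for every Gaussian width) but it signals that the "first real step of work" has not actually been carried out.

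The genuine gaps are in the second half. (1) Your classification argument for $\tau=i$ is wrong as stated: the fundamental domain is not compact and $\beta\to\infty$ as $\Im\tau\to\infty$ (along the imaginary axis $\beta(iy)=\theta(y)\theta(1/y)$ with $\theta(s)=\sum_n e^{-\pi n^2 s}$, which blows up at the cusp), so one cannot conclude that the larger of the two critical values is a maximum by compactness. In fact $\beta(iy)$ has a \emph{minimum} at $y=1$, so $i$ is a maximum only along the arc $|\tau|=1$ and is a saddle in the two-dimensional sense; any correct proof has to produce this signed Hessian information directly, not infer it from global extremality. (2) The actual content of the theorem --- that there are \emph{no other} critical points --- is precisely what your sketch defers. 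Termwise monotonicity of the theta series fails (as you concede), and Montgomery's theorem gives only that the hexagonal lattice is the global minimizer, not a count of critical points; the uniqueness statement is the "delicate analysis" of \cite{NV}, and invoking it wholesale means the proposal reproduces the citation rather than a proof. So the skeleton is right and matches the intended route, but the two substantive claims (exactly two critical points; the nature of the critical point at $i$) are respectively missing and incorrectly argued.
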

 Hence the second part of Theorem~\ref{thm:ALener} follows.

\subsection{Comments on the proof of Theorem \ref{thm:ALexist2}} \label{sec:approach-exist2}
The idea here is to reduce solving \eqref{GLE} for  $(\Psi, A)$  on the space $\R^2$ to solving it for $(\psi, a)$ on the fundamental cell $\Omega$, satisfying the boundary conditions 
\begin{equation} \label{gaugeperbc}
	\begin{cases}
	\psi(x + s) = e^{ig_s(x)}\psi(x), \\
	a(x + s) = a(x) + \nabla g_s(x), \\
	(\nu\cdot\nabla_a\psi)(x + s) = e^{ig_s(x)}(\nu\cdot\nabla_a\psi)(x), \\
\curl	a(x + s) = \curl a(x) ,\\
x\in  \p_1\Omega/\p_2\Omega\ \mbox{and}\  s=\w_1/\w_2.
		\end{cases}
\end{equation}
 induced by the periodicity condition \eqref{gauge-per}.
Here $\p_1\Omega/\p_2\Omega=$ the left/bottom boundary of $\Omega$, $  \{\w_1, \w_2\}$ is a basis in $\cL$ and $\nu(x)$ is the normal to the boundary at $x$.

To this end we show that, given a continuously differentiable function $(\psi, a)$ on the fundamental cell $\Omega$, satisfying the boundary conditions \eqref{gaugeperbc}, with $g_s$ satisfying  \eqref{cocycle-cond},
 we can lift it to a 
 continuous and continuously differentiable function $(\Psi, A)$  on the space $\R^2$, satisfying the gauge-periodicity conditions \eqref{gauge-per}. Indeed, we define
for any $\alpha \in \cL$,
\be  \la{lifting}
  \Psi (x)= \psi(x-\alpha)e^{i\Phi_\alpha (x)},\   A (x)= a(x-\alpha)+\nabla\Phi_\alpha (x),\   x\in\Omega+\alpha,
\end{equation}
where $\Phi_\alpha (x)$ is a real, possibly multi-valued, function to be determined. (Of course, we can add to it any $\cL-$periodic function.)  We define
\be  \la{Phial}  \Phi_\alpha (x):=   g_{\al}(x-\al),\  \mbox{for}\  x\in \Omega+\alpha.\end{equation}
\DETAILS{Finally, note that
 \begin{itemize}
 \item[(a)] Since  $\Psi, A$ satisfy the gauge-periodicity conditions \eqref{gaugeper}  in the entire space $\R^2$ and are smooth in $\R^2/(\cup_{s\in\cL}\p\Om)$, $\nabla_A\Psi$, $\Delta_A\Psi$ and $\curl^2 A$  are continuous and satisfy  the gauge-periodicity condition \eqref{gaugeper};
\item[(b)]  Since $u \equiv (\psi, a)$ satisfies the Ginzburg-Landau equations \eqref{gle} in $\Om$, then $U \equiv (\Psi, A)$  satisfies \eqref{gle} in $\R^2/(\cup_{t\in\cL}S_t\p\Om)$, where $S_t: x\ra x+t$;
\item[(c)]   Since  $\Psi, A$ satisfy the gauge-periodicity conditions \eqref{gaugeper}   in the entire space $\R^2$, we conclude by the first equation in
 \eqref{gle} that  $\Delta_A\Psi$ is continuous and satisfies the periodicity conditions (in the first equation of) \eqref{gaugeper} in $\R^2$ and therefore, by the Sobolev embedding, theorem so is $\nabla_A\Psi$. Hence, by the second equation in \eqref{GLE}, $\curl^2 A$ is continuous and satisfies the periodicity conditions \eqref{gaugeper} in $\R^2$.  Therefore, by iteration of the above argument (i.e. elliptic regularity),  $\Psi, A$ are smooth functions obeying \eqref{gaugeperbc} and  \eqref{GLE}.
 \end{itemize}}

 \begin{lemma} \la{lem:lifting}
Assume functions $(\psi, a)$ on $\Om$ are twice differentiable, up to the boundary, and obey the boundary conditions \eqref{gaugeperbc} and the Ginzburg-Landau equations \eqref{GLE}. Then the functions $ (\Psi, A)$, constructed in \eqref{lifting} - \eqref{Phial}, are smooth in $\R^2$ and satisfy the periodicity conditions \eqref{gauge-per} and the Ginzburg-Landau equations \eqref{GLE}.
\end{lemma}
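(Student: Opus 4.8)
The plan is to verify, in order, that: (1) the formulas \eqref{lifting}--\eqref{Phial} define a single-valued pair $(\Psi,A)$ on all of $\R^2$; (2) this pair obeys the gauge-periodicity relations \eqref{gauge-per}; (3) $(\Psi,A)\in C^1(\R^2)$, the matching of first derivatives across the cell walls $\bigcup_{\alpha\in\cL}(\partial\Omega+\alpha)$ being exactly what the boundary conditions \eqref{gaugeperbc} supply; and (4) $(\Psi,A)$ is then a classical, hence by elliptic regularity smooth, solution of \eqref{GLE} on $\R^2$.

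Steps (1) and (2) are algebraic consequences of the cocycle identity \eqref{cocycle-cond}. For (1), two branches of the definition compete only on an overlap $(\Omega+\alpha)\cap(\Omega+\alpha')$, which lies in the common face of two adjacent cells, so $\alpha'=\alpha+s$ with $s\in\cL$. Writing $y=x-\alpha'$ and inserting $\psi(y+s)=e^{ig_s(y)}\psi(y)$ from \eqref{gaugeperbc}, the two values of $\Psi(x)$ coincide precisely when $g_{\alpha+s}(y)\equiv g_s(y)+g_\alpha(y+s)\pmod{2\pi}$, which is \eqref{cocycle-cond}; the matching of $A$ follows by applying $\nabla$, whereupon the $2\pi$-ambiguity drops out. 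Step (2) is the same computation with the roles of $s$ and $\alpha$ swapped: for $x\in\Omega+\alpha$ one has $x+s\in\Omega+(\alpha+s)$, and \eqref{cocycle-cond} yields $\Psi(x+s)=e^{ig_s(x)}\Psi(x)$ and $A(x+s)=A(x)+\nabla g_s(x)$. Only $e^{ig_s}$ and $\nabla g_s$ enter, so possible multivaluedness of $g_s$ is harmless.

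For step (3), observe that on the interior of each translated cell $\Omega+\alpha$ the pair $(\Psi,A)|_{\Omega+\alpha}$ coincides with $T^{\rm gauge}_{\Phi_\alpha}$ applied to the translated pair $(\psi(\cdot-\alpha),a(\cdot-\alpha))$, in the notation of \eqref{gauge-sym}, and $\Phi_\alpha$ is smooth (by the normalization \eqref{gs-fixed}, $g_s$ is affine-linear in $x$). Hence $(\Psi,A)$ is $C^2$ up to $\partial(\Omega+\alpha)$ and, since the translation and gauge symmetries take solutions to solutions, it solves \eqref{GLE} on $\mathrm{int}(\Omega+\alpha)$. A direct computation gives $\nabla_A\Psi=e^{i\Phi_\alpha}(\nabla_a\psi)(\cdot-\alpha)$ and $\curl A=(\curl a)(\cdot-\alpha)$ on $\Omega+\alpha$; therefore the third and fourth conditions in \eqref{gaugeperbc} — the gauge-equivariant matching of $\nu\cdot\nabla_a\psi$ and the periodicity of $\curl a$ — say exactly that the covariant normal derivative $\nu\cdot\nabla_A\Psi$ and the magnetic field $\curl A$ extend continuously across the walls. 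Since tangential first derivatives are automatically continuous (being intrinsic to the continuous restrictions of $\Psi$ and $A$ to a wall) and $\Psi,A$ themselves are continuous by (1), all first derivatives match and $(\Psi,A)\in C^1(\R^2)$.

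Step (4) carries the real content, and this is where I expect the main obstacle to lie. Because $\Psi\in C^1(\R^2)$ and the classical identity $-\Delta_A\Psi=\kappa^2(1-|\Psi|^2)\Psi$ holds on each open cell with a right-hand side continuous on all of $\R^2$, the only possible failure of this identity \emph{in the distributional sense on $\R^2$} is a surface distribution along the walls proportional to the jump of $\nu\cdot\nabla_A\Psi$, which vanishes by (3); the same argument, using continuity of $\curl A$, handles $\curl^2A=\Im(\bar\Psi\nabla_A\Psi)$. Hence $(\Psi,A)$ is a weak solution of \eqref{GLE} on $\R^2$; passing to a local Coulomb gauge to render the $A$-equation elliptic, standard interior elliptic regularity and bootstrapping then promote $(\Psi,A)$ to $C^\infty(\R^2)$. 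The point that must be made airtight is precisely this: that the $C^1$ matching provided by \eqref{gaugeperbc} — neither more nor less — is exactly enough to annihilate every distributional contribution of the cell walls in the weak formulation of \eqref{GLE}, so that the purely interior elliptic theory may be applied across them.
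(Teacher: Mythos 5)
Your proposal is correct and follows essentially the same route as the paper: the cocycle identity \eqref{cocycle-cond} gives consistency of the branches and the gauge-periodicity \eqref{gauge-per}, the derivative conditions in \eqref{gaugeperbc} give continuity of $\nu\cdot\nabla_A\Psi$ and $\curl A$ across the cell walls, and the equations plus elliptic regularity then yield smoothness (the paper phrases the last step as continuity of $\Delta_A\Psi$ and $\curl^2 A$ being forced by the globally continuous right-hand sides, rather than as vanishing of the surface distributions, but the two are equivalent). One small overstatement: \eqref{gaugeperbc} controls only $\curl a$ and the tangential derivatives of $a$ across a wall, not the full gradient, so $A$ need not be $C^1(\R^2)$ before the regularity bootstrap --- but your argument never uses more than continuity of $\curl A$ there, so nothing breaks.
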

\begin{proof} If $(\psi, a)$ satisfies the Ginzburg-Landau equations \eqref{GLE} in $\Om$, then $U \equiv (\Psi, A)$, constructed in \eqref{lifting} - \eqref{Phial},  has the following properties
 \begin{itemize}
 \item[(a)]    $(\Psi, A)$  is twice differentiable   
      and  
      satisfies \eqref{GLE} in $\R^2/(\cup_{t\in\cL}S_t\p\Om)$, where $S_t: x\ra x+t$;
\item[(b)]  $(\Psi, A)$  is  continuous  with continuous derivatives ($\nabla_A\Psi$ and $\curl A$) in $\R^2$ and  satisfies the gauge-periodicity conditions \eqref{gauge-per} in 
      $\R^2$.
 \end{itemize}
Indeed, the periodicity condition \eqref{gauge-per}, applied to the cells $\Omega+\alpha-\w_i$ and $\Omega+\alpha$ and the continuity condition on the common boundary of the cells $\Omega+\alpha-\w_i$ and $\Omega+\alpha$ imply that $\Phi_\alpha (x)$ should satisfy the following two conditions:
\be  \la{Phicond1}
  \Phi_\alpha (x)= \Phi_{\alpha-\w_i} (x-\w_i) +g_{\w_i}(x-\w_i),\ \mbox{mod}\ 2\pi,\ x\in \Omega+\alpha,
\end{equation}
\be  \la{Phicond2}
  \Phi_\alpha (x)= \Phi_{\alpha-\w_i} (x) +g_{\w_i}(x-\alpha),\ \mbox{mod}\ 2\pi,\ x\in \p_i\Omega+\alpha,
\end{equation}
where $i=1, 2,$ and, recall, $\{\w_1, \w_2\}$ is a basis in $\cL$ and $\p_1\Omega/\p_2\Omega$ is the left/bottom boundary of $\Omega$. 

To show that \eqref{Phial} satisfies the conditions \eqref{Phicond1} and \eqref{Phicond2}, we note that, due to \eqref{cocycle-cond}, we have $g_{\al}(x-\al)= g_{\al-\w_i}(x-\al) +g_{\w_i}(x-\w_i),\ \mbox{mod}\ 2\pi,\ x\in \Omega+\alpha,$ and $g_{\al}(x-\al)= g_{\al-\w_i}(x-\al+\w_i) +g_{\w_i}(x-\alpha),\ \mbox{mod}\ 2\pi,\ x\in \p_i\Omega+\alpha$, which are equivalent to \eqref{Phicond1} and \eqref{Phicond2}, with \eqref{Phial}.

The second pair of conditions in  \eqref{gaugeperbc} implies that $\nabla_A\Psi$ and $\curl A$ are continuous across the cell boundaries.

By  (a) and (b), the derivatives $\Delta_A\Psi$ and $\curl^2 A$ 
are continuous, up to the boundary,  in $S_t\p\Om,$ for every $ t\in\cL$. 
By \eqref{GLE}, they 
are equal in $\R^2/(\cup_{t\in\cL}S_t\p\Om)$ to functions continuous  in $\R^2$ satisfying there the periodicity condition \eqref{gauge-per}. 
Hence, they are also continuous and satisfy the periodicity condition  \eqref{gauge-per} in $\R^2$. By iteration of the above argument 
(i.e. elliptic regularity),  $\Psi, A$ are smooth functions obeying \eqref{gauge-per} and  \eqref{GLE}.
 \end{proof}
 
 Now, we use the $n-$vortex $(\Psi^{(n)}, A^{(n)})$, placed in the centre of the fundamental cell $\Omega$, to construct an approximate solution $(\psi^{\rm appr}, a^{\rm appr})$ to \eqref{GLE} in $\Omega$, satisfying \eqref{gaugeperbc}, and use it and the Lyapunov-Schmidt splitting technique to show that there is a true solution  $(\psi, a)$  nearby sharing the same properties. After that, we use Lemma \ref{lem:lifting} above to  lift $(\psi, a)$ to a solution $(\Psi, A)$  on the space $\R^2$, satisfying the gauge-periodicity conditions \eqref{gauge-per}. 

\subsection{Stability of Abrikosov lattices} \label{sec:ABR}
The Abrikosov lattices are static solutions to \eqref{GES} and 
their stability w.r. to the dynamics induced by these equations is an important issue. In \cite{ST2}, 
 we considered the stability of the Abrikosov lattices for magnetic fields  close to the second critical magnetic field $h_{c2}=\kappa^2$, under the simplest perturbations, namely those having the same (gauge-) periodicity as the underlying Abrikosov lattices (we call such perturbations \emph{gauge-periodic}) and proved 
for a lattice of arbitrary shape,  $\tau\in \C$, $\Im\tau > 0$, that, under gauge-periodic perturbations, Abrikosov vortex lattice solutions are\begin{itemize}
\item[(i)] \emph{  asymptotically stable for}  $\kappa^2 > \kappa_c(\tau)$;
 \item[(ii)]   \emph{ unstable for} $\kappa^2 < \kappa_c(\tau)$.
\end{itemize}

 This result belies the common belief among physicists and mathematicians that  Abrikosov-type vortex lattice solutions are stable only for triangular lattices and $\kappa> \frac{1}{\sqrt{2}}$, and  it seems this is the first time the threshold \eqref{kappac}
has been isolated.

In \cite{ST1}, similar results are shown to hold also for low magnetic fields close to $h_{c1}$.

Gauge-periodic perturbations are not a common type of perturbations occurring in superconductivity.
Now, we address the problem of the stability of Abrikosov lattices under local or  finite-energy perturbations (defined precisely below). 
We consider  Abrikosov lattices of arbitrary shape, 
  not just triangular or rectangular lattices as usually considered,  and for magnetic fields  close to the second critical magnetic field $h_{c2}=\kappa^2$. 

\paragraph{Finite-energy ($H^1-$) perturbations.}\label{sec:pert}

We now wish to study the stability of these Abrikosov lattice solutions under a class of perturbations that have finite-energy.
More precisely, we fix an Abrikosov lattice solution $u_\om$ and consider perturbations $v : \R^2 \to \C \times \R^2$ that satisfy
\begin{equation}\label{Lambda}
\Lambda_{u_\om}(v) = \lim_{Q\ra\R^2} \big(\E_{Q}(u_\om + v) - \E_{Q}(u_\om)\big) < \infty.
\end{equation}
Clearly, $\Lambda_{u_\om}(v)<\infty$ for all vectors of the form $v=T^{\rm gauge}_\gamma  u_\om -  u_\om$, where $\g \in  H^2(\R^2;\R)$.

 In fact, we will be dealing with the smaller class,  $ H^1_{\textrm{cov}} $,  of perturbations, where $ H^1_{\textrm{cov}} $ 
 is the Sobolev space of order $1$ defined by the covariant derivatives, i.e., $$ H^1_{\textrm{cov}}:=\{v\in L^2(\R^2, \C\times \R^2)\ |\ \|v\|_{H^1}< \infty \},$$ where the norm $\|v\|_{H^1}$ is determined by
 the covariant inner product
\begin{align*}
    \langle v, v' \rangle_{H^1} = \Re \int \bar{\xi}\xi' + \overline{\COVGRAD{A_\om}\xi} \cdot \COVGRAD{A_\om}\xi' + \alpha \cdot \alpha' + \sum_{k=1}^2 \nabla\alpha_k\cdot\nabla\alpha'_k, 
\end{align*}
where $v=( \xi, \al),\ v'=( \xi', \al')$, 
 while  the $L^2-$norm  is given by 
\begin{equation} \label{L2-inner-product}
    \langle v, v' \rangle_{L^2} = \Re \int \bar{\xi}\xi' + \alpha \cdot \alpha'.
\end{equation}
An  explicit representation for the functional   $\Lambda_{u_\om}(v)$, given below shows (see \eqref{Lambda-expres}) that  $\Lambda_{u_\om}(v)<\infty$ for all vectors  $v\in  H^1_{\textrm{cov}} $. 

To introduce the notions of stability and instability, we note that the hessian Hess $\E(u)$ is 
well defined as a differential operator  for say $ u\in u_\om +  H^1_{\textrm{cov}} $ and is a real-linear operator on $H^1_{\textrm{cov}}$.
We define the manifold  $$\mathcal{M}_\om = \{ T^{\rm gauge}_\gamma  u_\om : \g \in H^1 (\R^2, \R) \}$$ of gauge equivalent Abrikosov lattices and the $H^1-$distance, $\dist_{H^1}$, to this manifold.

  \begin{definition}\label{def:stability} We say that the Abrikosov lattice $u_\om$ is {\it asymptotically stable} under $H^1_{\textrm{cov}} -$ perturbations, 
  if there is $\del>0$ s.t. for any initial condition $u_0$ satisfying $\dist_{H^1} (u_0, \mathcal{M}_\om)\le \del$ 
there exists $g(t)\in H^1$, s.t.   the solution $u(t)$ of \eqref{GES}  satisfies 
   $ \| u(t)-T^{\rm gauge}_{g(t)}  u_\om\|_{H^1} \ra 0$,    as $t \ra \infty$.  We say that $u_\om$ is {\it energetically} unstable if 
   the hessian, $\E''(u_\om)$, of $\E(u)$ at $u_\om$ has a negative spectrum. 
 \end{definition}

We restrict the initial conditions $(\Psi_0, A_0)$ for \eqref{GES} 
 satisfying 
\begin{equation} \label{parity0}
T^{\rm refl} (\Psi_0, A_0)=(\Psi_0, A_0).	 
\end{equation}
Note that, by  uniqueness,  the Abrikosov lattice solutions $u_\om = (\Psi_\om, A_\om)$ satisfy $T^{\rm refl} u_\om=u_\om$ and therefore so are the perturbations, $v_0:= u_0-u_\om$, where $u_0:= (\Psi_0, A_0)$:
\begin{equation} \label{parity}
	T^{\rm refl} v_0=v_0.	 
\end{equation}

\DETAILS{solutions $(\Psi, A, \Phi)$ of \eqref{GES} for which $ \Psi$ and $ \Phi$ are even and  $ A$ are odd under the reflections: 
\begin{equation} \label{parity''}
	\cR \Psi = \Psi,\ \cR \Phi = \Phi,\ \cR A = -A. 
\end{equation}
{\bf Relevance of such solutions depends on the parity of $u_\om$, if it has one.}
 {\bf (Does $u_\om= (\Psi_\om, A_\om)$ have the parity? Is there a symmetry breaking here?)}}


\paragraph{Stability result.}\label{sec:main-res}

Recall that $\beta(\tau)$  is  the Abrikosov 'constant', introduced in \eqref{beta}. 
 \DETAILS{We define the perturbation parameter
  \begin{equation}\label{eps} 
	\e = \sqrt{\frac{\kappa^2 - b}{\kappa^2[(2\kappa^2 - 1)\beta(\tau) + 1]}}.
\end{equation}
  The term $(2\kappa^2 - 1)\beta(\tau) + 1$ in the denominator of \eqref{eps} is necessary in order to have a positive expression under the square root and to regulate the size of the perturbation domain.}

\DETAILS{ The main result in this paper is the following
 \begin{theorem}\label{thm:stability}
1) There is a function $\mu_*(\tau,  \kappa)$ (defined in \eqref{mustar-lead}), s.t. for all $b$ sufficiently close to $\kappa^2$,  in the sense that 
 the parameter $\e$
  is sufficiently small, the Abrikosov lattice $u_\om$ is asymptotically stable under finite-energy perturbations 
 for all $(\tau,  \kappa)$ s.t. $\mu_* (\tau,  \kappa) >0$  and is {\it energetically} unstable 
 for all $(\tau,  \kappa)$ s.t. $\mu_*(\tau,  \kappa) <0$.

 2) The  function $\mu_*(\tau,  \kappa)$  is given by
 \begin{align}\label{mustar-lead}
		\mu_*(\tau,  \kappa):=\inf_{k\in\Om^*}\mu_{\tau \kappa k} ,
	\end{align}
	where   \begin{align}\label{muk-lead}
		&\mu_{\tau \kappa  k} =  (\kappa^2 - \frac{1}{2}) 
		[\lan|\phi_0|^2|\phi_k|^2\ran_\cL + \frac{1}{2}\Re \lan(\phi_0)^2(\bar{\phi}_k)^2\ran_\cL  -\frac{1}{2}\lan|\phi_0|^2 \ran_\cL],
	\end{align}}
 \begin{theorem}\label{thm:stability} 
 There exists a modular function  $\gamma(\tau)$ depending on the lattice shape parameters $\tau$,
 such that, for $b$ sufficiently close to $\kappa^2$,  in the sense of  \eqref{b-cond}, 
and, under 
$H^1-$perturbations, satisfying \eqref{parity0},  
  \begin{itemize} \item the Abrikosov lattice $u_\om$ is asymptotically stable 
 for all $(\tau,  \kappa)$ s.t.  $\kappa >\frac{1}{\sqrt 2}$ and $\g (\tau) >0$   
 and 
 \item energetically unstable otherwise.
  \end{itemize} 
 \end{theorem}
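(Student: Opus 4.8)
\bigskip
\noindent\emph{Proof proposal.}
The plan is to reduce the dynamical claim to the sign of $\inf\sigma(L_\om)$ for the Hessian $L_\om:=\E''(u_\om)$, and to determine that sign by Bloch--Floquet theory in the bifurcation regime $b\uparrow\kappa^2$. Since $u_\om$ is a critical point of $\E$, $\Lambda_{u_\om}(v)=\tfrac12\lan v,L_\om v\ran_{L^2}+O(\|v\|_{H^1}^3)$ for $v\in H^1_{\textrm{cov}}$, so energetic instability is precisely the statement that $L_\om$ has negative spectrum on the parity-even part of $H^1_{\textrm{cov}}$, whereas the coercivity $L_\om\ge c>0$ transversal to the symmetry orbit is what drives the asymptotic-stability argument for \eqref{GES} via the scheme of Section~\ref{sec:stab} (coercivity together with parabolic smoothing, using that the parity constraint \eqref{parity0} is preserved by the flow and that the modulation path $g(t)$ can be built in the infinite-dimensional gauge orbit). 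As a preliminary I would remove the gauge kernel of $L_\om$ exactly as in Section~\ref{sec:stab}, passing to the Coulomb gauge slice relative to $u_\om$ (equivalently replacing $L_\om$ by $\tilde L_\om$, made coercive in the gauge directions); the reflection constraint \eqref{parity} then also disposes of the two translational zero-modes, since moving the lattice off its fixed position breaks $T^{\rm refl}u=u$.

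Because $u_\om$ is gauge-periodic with respect to $\cL$ and, for $n=1$, the magnetic translations commute, $\tilde L_\om$ has a direct-integral decomposition $\tilde L_\om\cong\int^\oplus_{k\in\Om^*}\tilde L_\om(k)\,dk$ over the Brillouin zone $\Om^*$, where $\tilde L_\om(k)$ acts on the fundamental cell with $k$-twisted gauge-periodic boundary conditions; hence $\tilde L_\om\ge0$ iff $\tilde L_\om(k)\ge0$ for all $k$. Restricting to parity-even perturbations amounts, via the intertwining $T^{\rm refl}\tilde L_\om(k)(T^{\rm refl})^{-1}=\tilde L_\om(-k)$, to taking $k$ in a fundamental domain of $\Om^*$ modulo $k\mapsto-k$, which alters no infimum over $k$.

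The core step is perturbation theory in the small parameter $\epsilon$, $\epsilon^2\sim(\kappa^2-b)/\{\kappa^2[(2\kappa^2-1)\beta(\tau)+1]\}$, for which $\Psi_\om=O(\epsilon)$ with leading profile a multiple of the solution $\phi_0$ of \eqref{phi-eqs}--\eqref{phi-per}. Writing $\tilde L_\om(k)=\tilde L_0(k)+O(\epsilon^2)$ with $\tilde L_0(k)$ the linearization at the normal state $(0,A^b)$, the operator $\tilde L_0(k)$ has, after the gauge reduction, a single near-zero branch: the $\Psi$-block $-\COVLAP{A^b}-\kappa^2$ on the $k$-twisted cell has lowest eigenvalue $b-\kappa^2=O(\epsilon^2)$, simple because $n=1$, with eigenfunction the twisted lowest-Landau mode $\phi_k$, the remainder of $\sigma(\tilde L_0(k))$ lying above a $k$-uniform positive constant. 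Degenerate first-order perturbation theory on $\mathrm{span}\{\phi_k,\overline{\phi_{-k}}\}$ --- the pairing being forced by the $\Psi$-$\bar\Psi$ coupling of the Hessian, whose off-diagonal entries are $\propto\Psi_\om^2=O(\epsilon^2)$, and after eliminating the $A$-component, which feeds back into $\Psi$ only at higher order --- then yields
\begin{equation*}
\inf\sigma(\tilde L_\om(k))=\epsilon^2\mu_{\tau\kappa k}+o(\epsilon^2),\qquad
\mu_{\tau\kappa k}=(\kappa^2-\tfrac12)\big[\lan|\phi_0|^2|\phi_k|^2\ran_\cL+\tfrac12\re\lan(\phi_0)^2(\bar\phi_k)^2\ran_\cL-\tfrac12\lan|\phi_0|^2\ran_\cL\big],
\end{equation*}
having used $b-\kappa^2=-\epsilon^2\kappa^2[(2\kappa^2-1)\beta(\tau)+1]$ and $\lan|\phi_0|^4\ran_\cL=\beta(\tau)\lan|\phi_0|^2\ran_\cL^2$ to fold in the $b-\kappa^2$ term. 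Setting
\begin{equation*}
\gamma(\tau):=\inf_{k\in\Om^*}\big[\lan|\phi_0|^2|\phi_k|^2\ran_\cL+\tfrac12\re\lan(\phi_0)^2(\bar\phi_k)^2\ran_\cL-\tfrac12\lan|\phi_0|^2\ran_\cL\big],
\end{equation*}
which is a modular function of $\tau$ by the same change-of-basis invariance that makes $\beta(\tau)$ modular, and noting that the $k=0$ bracket is a positive multiple of $3\beta(\tau)\lan|\phi_0|^2\ran_\cL-1>0$, one gets $\inf_k\mu_{\tau\kappa k}=(\kappa^2-\tfrac12)\gamma(\tau)$ when $\kappa>1/\sqrt2$ and $\inf_k\mu_{\tau\kappa k}\le0$ when $\kappa\le1/\sqrt2$. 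Since $k\mapsto\mu_{\tau\kappa k}$ is continuous on the compact $\Om^*$, the remainder is $o(\epsilon^2)$ uniformly in $k$; thus $\tilde L_\om\ge c\epsilon^2>0$ whenever $\kappa>1/\sqrt2$ and $\gamma(\tau)>0$ (giving asymptotic stability), while $\tilde L_\om$, and hence $L_\om$, has a negative eigenvalue in every remaining case (giving energetic instability).

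The main obstacle is making the degenerate expansion rigorous \emph{uniformly in $k\in\Om^*$}: this requires a uniform spectral-gap / resolvent bound separating the near-zero branch of $\tilde L_\om(k)$ from the rest of its spectrum; the Lyapunov-Schmidt expansion of $\Psi_\om$ from Theorem~\ref{thm:ALexist1} carried to the accuracy needed for the coefficient $\mu_{\tau\kappa k}$; and a careful elimination of the $A$-component so that no spurious $O(\epsilon)$ contribution appears. A secondary technical burden is upgrading the coercivity of $\tilde L_\om$ to asymptotic orbital stability for \eqref{GES} in the covariant $H^1$ topology, handling the non-variational normal-current term and the modulation along the gauge orbit; here the gauge-slice construction together with parabolic regularity should suffice.
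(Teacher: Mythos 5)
Your proposal follows essentially the same route as the paper: reduce to the sign of the bottom of the spectrum of the Hessian $L_\om=\E''(u_\om)$ after removing the gauge zero modes, diagonalize $L_\om$ by the Bloch--Floquet fiber decomposition \eqref{L-deco} over the dual group of the lattice, compute the bottom of each fiber by degenerate perturbation theory in the bifurcation parameter $\e$ of \eqref{eps}, and pass from coercivity to asymptotic stability via the energy-difference functional $\Lambda_{u_\om}$ and differential inequalities for the gradient flow. Your observation that the parity constraint \eqref{parity0} is what disposes of the translational (and constant vector-potential) modes in the trivial fiber also matches the role this restriction plays in the paper.

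The one concrete defect is the leading coefficient of the fiber eigenvalue. The paper obtains $\mu_{\om,\chi}(\kappa)=(\kappa^2-\tfrac12)\,\g_\chi(\tau)\,\e^2+O(\e^3)$ with
\[
\g_\chi(\tau)=2\lan|\phi_0|^2|\phi_\chi|^2\ran+\big|\lan\phi_0^2\bar{\phi}_\chi\bar{\phi}_{\chi^{-1}}\ran\big|-\lan|\phi_0|^4\ran,
\qquad \lan|\phi_\chi|^2\ran=1,
\]
whereas your bracket is $\lan|\phi_0|^2|\phi_k|^2\ran+\tfrac12\Re\lan\phi_0^2\bar{\phi}_k^2\ran-\tfrac12\lan|\phi_0|^2\ran$. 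Two things go wrong in the degenerate-perturbation step as you set it up. First, the off-diagonal entry of the $2\times2$ block on $\mathrm{span}\{\phi_k,\overline{\phi_{-k}}\}$ is $\lan\phi_0^2\bar{\phi}_k\bar{\phi}_{-k}\ran$ --- consistently with your own identification of the degenerate subspace it must pair $\phi_k$ with $\phi_{-k}$, not $\phi_k$ with itself --- and it enters the eigenvalues of that block through its modulus, not through $\Re\lan\phi_0^2\bar{\phi}_k^2\ran$ (the two agree only at $k=0$). Second, the contributions you defer --- the elimination of the vector-potential component and the shift $b-\kappa^2$ expressed through $\lan|\phi_0|^4\ran=\beta(\tau)\lan|\phi_0|^2\ran^2$ --- all enter at the same order $\e^2$, and the correct bookkeeping produces the factor $2$ on the first term and $-\lan|\phi_0|^4\ran$ rather than $-\tfrac12\lan|\phi_0|^2\ran$ in the last. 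Since the theorem only asserts the existence of \emph{some} modular function $\g(\tau)$, this does not derail the strategy, but as written your candidate is not the function \eqref{gamma} that actually governs stability, and conclusions drawn from it (in particular where $\g$ changes sign, hence which lattices are stable) would come out wrong; the second-order reduction therefore has to be carried out in full rather than by the heuristic two-level computation you sketch.
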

%
The function  $\g( \tau),\ \im\tau>0,$ appearing in the theorem above is described below. Meantime we make the following important remark.
Since we know that, for $\kappa>1/\sqrt 2$, the triangular lattice has the lowest energy (see Theorem \ref{thm:ALexist1}), this seems to suggest that other lattices should be unstable. 
  The reason that this energetics does not affect the stability under local perturbations can be gleaned from investigating the zero mode of the Hessian of the energy functional associated with different lattice shapes, $\tau$. This mode is obtained by differentiating the  Abrikosov lattice solutions w.r.to $\tau$, which shows that it grows linearly in $|x|$. To rearrange a non - triangular  Abrikosov lattice into  the triangular one, one would have to activate this mode and hence to apply a perturbation,  growing at infinity (at the same rate).

  This also explains why the Abrikosov 'constant'  $\beta(\tau)$ mentioned above, which plays a crucial role in understanding the energetics of the Abrikosov solutions, 
   is not directly related to the stability under local perturbations, the latter is governed by   $\gamma(\tau)$.

\subsection{The function  $\g( \tau)$}  \label{sec:gamma} 
  \begin{theorem}\label{thm:gammak}  The function  $\g( \tau)$ 
  on lattice shapes  $\tau$, entering Theorem \ref{thm:stability}, is given by 
\begin{equation}\label{gamma}
\g( \tau):=\inf_{\chi\in \hat\LAT_\tau}  \g_\chi( \tau),\ \text{  where  }\ 
	\g_\chi( \tau) := 2\lan|\phi_0|^2|\phi_\chi|^2\ran_{\cT_\tau} + |\lan \phi_0^2\bar{\phi}_\chi \bar{\phi}_{\chi^{-1}}\ran_{\cT_\tau} | - \lan |\phi_0|^4 \ran_{\cT_\tau} .\end{equation}
 Here the functions  $\phi_\chi,\ \chi\in \hat\LAT_\tau,$  are unique solutions of the equations
\begin{align}\label{phichi-eqs} (-\COVLAP{A^1}-1)\phi =0,\ \quad \phi (x+s) =e^{i\frac12 s\cdot J x} \chi(s)
\phi (x),\ \quad \forall s\in \LAT_\tau,
	\end{align}
 with $A^1(x):= - \frac12  J x$ and $J:= \left( \begin{array}{cc} 0 & 1 \\ -1 & 0 \end{array} \right)$,  normalized as $\lan|\phi_\chi|^2\ran_{\cT_\tau} =1$.  
\end{theorem}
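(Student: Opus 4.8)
\textbf{Proof proposal for Theorem \ref{thm:gammak}.}

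The plan is to carry out a Lyapunov--Schmidt reduction of the finite-energy stability problem at a vortex lattice $u_\om$ with $\om=(\tau,b,1)$ and $b$ close to $\kappa^2$, and to track the leading-order expression governing the sign of the reduced Hessian. The key starting point is the decomposition of the Hessian $\E''(u_\om)$ acting on $H^1_{\textrm{cov}}$. As for the gauge-periodic case in \cite{ST2}, one first uses the gauge symmetry to project out the (infinite-dimensional) gauge zero modes, leaving an operator whose stability is decided on the orthogonal complement. One then introduces a small parameter $\mu$ comparable to $\kappa^2-b$ (normalized as in \eqref{b-cond}), writes $\Psi_\om = \mu^{1/2}\phi_0 + O(\mu^{3/2})$ where $\phi_0$ is the normalized solution of \eqref{phichi-eqs} with trivial character (this is exactly the leading approximation identified in Proposition \ref{prop:null-set} and \eqref{theta-phi}--\eqref{theta-series}), and expands $\E''(u_\om)$ in powers of $\mu^{1/2}$.

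The crucial structural input is the Bloch-type fiber decomposition dictated by the translation symmetry. Because $u_\om$ is gauge-periodic with respect to $\LAT$, the Hessian, restricted to $H^1_{\textrm{cov}}$ and after the gauge projection, commutes with the magnetic translations and so admits a direct-integral decomposition over the dual torus; equivalently, after rescaling to $\LAT_\tau$, the fibers are indexed by characters $\chi\in\hat\LAT_\tau$, and the relevant fiber operators act on the sections satisfying the twisted periodicity in \eqref{phichi-eqs}. On each fiber the leading (order $\mu$) contribution to the quadratic form, after eliminating the magnetic-field component $\alpha$ via the second Ginzburg--Landau equation (which at leading order forces $\Curl\alpha$ to be determined by $|\phi_0|^2$, contributing the $-\lan|\phi_0|^4\ran$ term and the sign-definite factor $\kappa^2-\tfrac12$), reduces to a scalar quadratic form in the fiber perturbation $\eta=\phi_\chi$-direction. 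Diagonalizing the $2\times 2$ real form coming from the pair $(\eta,\bar\eta)$ — this is where the modulus $|\lan\phi_0^2\bar\phi_\chi\bar\phi_{\chi^{-1}}\ran|$ appears, as the off-diagonal coupling between $\eta$ and $\bar\eta$ — yields the smallest eigenvalue proportional to $\g_\chi(\tau)$ as in \eqref{gamma}. Taking the infimum over all fibers $\chi$ gives $\g(\tau)$, and the positivity/negativity of this infimum is exactly the coercivity/instability dichotomy, so Theorem \ref{thm:stability} follows once the $O(\mu^{3/2})$ remainder is controlled and the infimum is shown to be attained (or bounded below away from zero in the stable regime, using that the fiber operators are $\ge \min(1,\kappa^2 b^{-1})\cdot$const away from the bottom band, analogous to \eqref{eq:contspec}).

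To see that $\phi_\chi$ exists and is unique for each $\chi$, note that \eqref{phichi-eqs} is the null equation for $-\COVLAP{A^1}-1$ with twisted-periodic boundary conditions; the argument of Proposition \ref{prop:null-set} applies verbatim with the extra character $\chi(s)$ inserted into the quasi-periodicity, reducing via the substitution $\theta(z)=e^{\frac12(ix_1x_2-x_2^2)}\phi(x)$ to a holomorphic theta-type function with a one-dimensional solution space (the character shifts the zero of $\theta$ but does not change the dimension count, which remains $n=1$). Finally, modularity of $\g(\tau)$ follows as for $\beta(\tau)$: $\g(\tau)$ is manifestly independent of the choice of basis of $\LAT_\tau$ (the infimum over $\hat\LAT_\tau$ is basis-free, and the integrands are built from $|\phi_0|$, $|\phi_\chi|$ and the invariantly defined pairing), hence invariant under $SL(2,\Z)$, and the reflection $\tau\mapsto-\bar\tau$ symmetry is inherited from \eqref{phichi-eqs}.

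\textbf{Main obstacle.} The hard part will be the rigorous Lyapunov--Schmidt reduction on the \emph{non-compact} domain $\R^2$: unlike the gauge-periodic setting of \cite{ST2}, here one must handle the continuous spectrum of the fiber operators and justify the direct-integral decomposition in $H^1_{\textrm{cov}}$, control the $\chi$-dependence of $\phi_\chi$ uniformly (including the limit as $\chi\to 1$, where the translational zero modes of $u^{(1)}$ reappear), and show that the reduced function $\chi\mapsto\g_\chi(\tau)$ is continuous on the compact group $\hat\LAT_\tau$ so that the infimum is a minimum — only then does $\g(\tau)>0$ genuinely yield a uniform spectral gap and hence asymptotic stability via the gradient-flow structure of \eqref{GES}.
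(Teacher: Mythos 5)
Your proposal is correct and follows essentially the same route as the paper: the magnetic-translation (Bloch) fiber decomposition \eqref{L-deco} of the Hessian $L_\om=\E''(u_\om)$ after projecting out the gauge zero modes, the expansion in the small parameter \eqref{eps} with $\Psi_\om$ approximated by $\phi_0$, the elimination of the field component and the diagonalization of the real-linear $(\eta,\bar\eta)$ coupling producing the modulus term, yielding $\mu_{\om,\chi}(\kappa)=(\kappa^2-\tfrac12)\g_\chi(\tau)\e^2+O(\e^3)$ as in \eqref{muk}, with the theta-function-with-characteristic argument of Proposition \ref{prop:null-set} giving existence and uniqueness of $\phi_\chi$. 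The technical obstacles you flag (continuity in $\chi$, the $\chi\to 1$ limit, justifying the direct integral on $H^1_{\textrm{cov}}$) are exactly the issues deferred to \cite{ST3}.
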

\DETAILS{Recall that we identify $\R^2$ with $\C$, via the map $(x_1, x_2)\ra x_1+i x_2$ and that 
we can choose a basis in $\LAT$  so that $\LAT=r  (\Z+\tau\Z)$, where $\tau\in \C$, $\Im \tau>0$, and $r>0$. By   the quantization condition \eqref{quant-cond}, $r:=\sqrt{\frac{2\pi n}{\im\tau b} }$. 
Denote $\LAT_\tau=\sqrt{\frac{2\pi n}{\im\tau } }  (\Z+\tau\Z)$. 
The dual to it is $\cL^*_\tau=\sqrt{\frac{2\pi}{\im\tau} } i (\Z-\tau\Z)$.  (The dual, or reciprocal, lattice, $\LAT^*$,  of $\LAT$ consists of all vectors $s^* \in \R^2$ such that $s^* \cdot s \in 2\pi\Z$,  for all $s \in \LAT$.)    Let  $\Om_\tau$ is a fundamental cell of the lattice  $\LAT_\tau=\sqrt{\frac{2\pi n}{\im\tau } }  (\Z+\tau\Z)$  and  let  $\Om^*_\tau$ 
be  a fundamental cell of $\cL^*_\tau$, 
  chosen so that $\Omega^*_\tau$ is invariant under reflections, $k \ra -k$. 

  \begin{equation}\label{gamma}
\g( \tau):=\inf_{k\in \Om^*_\tau}  \g_k( \tau),  
\end{equation}
  where  the functions   $\g_k( \tau) ,\ \im\tau>0,\  k\in \Om^*_\tau,$  are defined as
 \begin{equation}\label{gammak-def} 
  \g_k( \tau) := 2\lan|\phi_0|^2|\phi_k|^2\ran_{\Om_\tau} + |\lan \phi_0^2\bar{\phi}_k \bar{\phi}_{-k}\ran_{\Om_\tau} | - \lan |\phi_0|^4 \ran_{\Om_\tau} .
	\end{equation}
Here 
 the functions  $\phi_k,\ k\in \Om^*_\tau,$ are unique 
 solutions of the Bloch-Floquet-type equations
\begin{align}\label{phik-eqs} (-\COVLAP{A^1}-1)\phi =0,\ \quad \phi (x+s) =e^{
i\frac12 s\cdot J x} e^{ ik\cdot s}\phi (x),\ \quad \forall s\in \LAT_\tau,
	\end{align}
 with $A^1(x):= 
- \frac12  J x$, 
normalized as $\lan|\phi_k|^2\ran_{\Om_\tau} =1$. 
}
  For the function  $\g( \tau),\ \im\tau>0,$ defined in  \eqref{gamma}, 
 has the following properties 
  \begin{itemize} 
 \DETAILS{ \item $\g( \tau),\ \im\tau>0,$  
a modular function, symmetric w.r.to the imaginary axis, specifically, 
 \begin{align}\label{gam-propert} &\g( \tau+1)= \g( \tau),\  \g( -\tau^{-1})= \g( \tau),\ \g (- \bar\tau)= \g( \tau); \end{align}  
 \item  
$\g( \tau),\ \im\tau>0,$  independent of the choice of the dual cell $ \Om_\tau^*$;  }
\item $\g( \tau),\ \im\tau>0,$   is symmetric w.r.to the imaginary axis,  
 $\g (- \bar\tau)= \g( \tau);$

\item  $\g( \tau)$ has critical points at   $\tau=e^{i\pi/2}$ and $\tau=e^{i\pi/3}$, provided it is differentiable at these points.
  \end{itemize}  
We see also that 
the  the Abrikosov constant, $\beta(\tau)$, is related to $\g_k( \tau)$   as  $\beta(\tau) 
=\frac12 \g_0( \tau)$.

  The  function $\g (\tau)$ is studied numerically in \cite{ST3}, where the above conjecture is confirmed and 
   is shown that it 
becomes  negative for $\im\tau\ge 1.81$. Moreover, it is computed that
  \begin{align}\label{gam-value} |\g ( \tau) - c| &\le   7.5 \cdot 10^{-3},\ \ \quad \mbox{where}\  \notag\\ & 
c =  0.64\ \mbox{for}\  \tau=e^{i\pi/3}  \quad \mbox{and}\   \quad c= 0.4\ \mbox{for}\  \tau=e^{i\pi/2}.  \end{align}

The definition of $\g( \tau)$ implies that it is  symmetric w.r.to the imaginary axis, $\g (- \bar\tau)= \g( \tau)$.  Hence it suffices to consider $ \g( \tau)$ on the $\Re\tau \geq 0$ half of  the fundamental domain, $\Pi^+/SL(2, \Z)$, 
 of  the modular group  $SL(2, \Z)$ (the heavily shaded area on the Fig. \ref{fig:PoincareStrip} above).
 \DETAILS{
\begin{figure}[h!]
	\centering 

  \includegraphics[width=2.5in]{PoincareStrip.pdf} 
  \caption{Fundamental domain  of $\gamma(\tau)$. }\label{fig:PoincareStrip}
\end{figure}  }
Using the symmetries of $\g( \tau)$, we show in \cite{ST3} that
 the points  $\tau=e^{i\pi/2}$ and $\tau=e^{i\pi/3}$ are critical points of the function $\g( \tau)$, provided it is differentiable at these points. (While the functions  $\g_k( \tau)$ are obviously smooth, derivatives of $\g( \tau)$ might jump. In fact, the numerical computations described in   \cite{ST3} show that  $\g( \tau)$ is likely to have the line of cusps at $\re \tau =0$.)  We {\it conjecture}: 
\begin{itemize}


\item 
For fixed $\re \tau\in [0, 1/2]$, $\g( \tau)$ is a decreasing function of  $\im \tau$. 
\item 
$\g( \tau)$  has  a unique global maximum  at $\tau=e^{\frac{i\pi}{3}}$  and a saddle point at $\tau=e^{\frac{i\pi}{2}}$.
\end{itemize}
 In \cite{ST3}, we confirm this conjecture numerically (numerics is due to Dan Ginsberg, see Figure \ref{fig:plot}  for the result of 
computing $\g( \tau)$  in Matlab, using the default Nelder-Mead algorithm).

 \bigskip

 \begin{figure}[h!]
	\centering 
	{\includegraphics[width=3.1in]{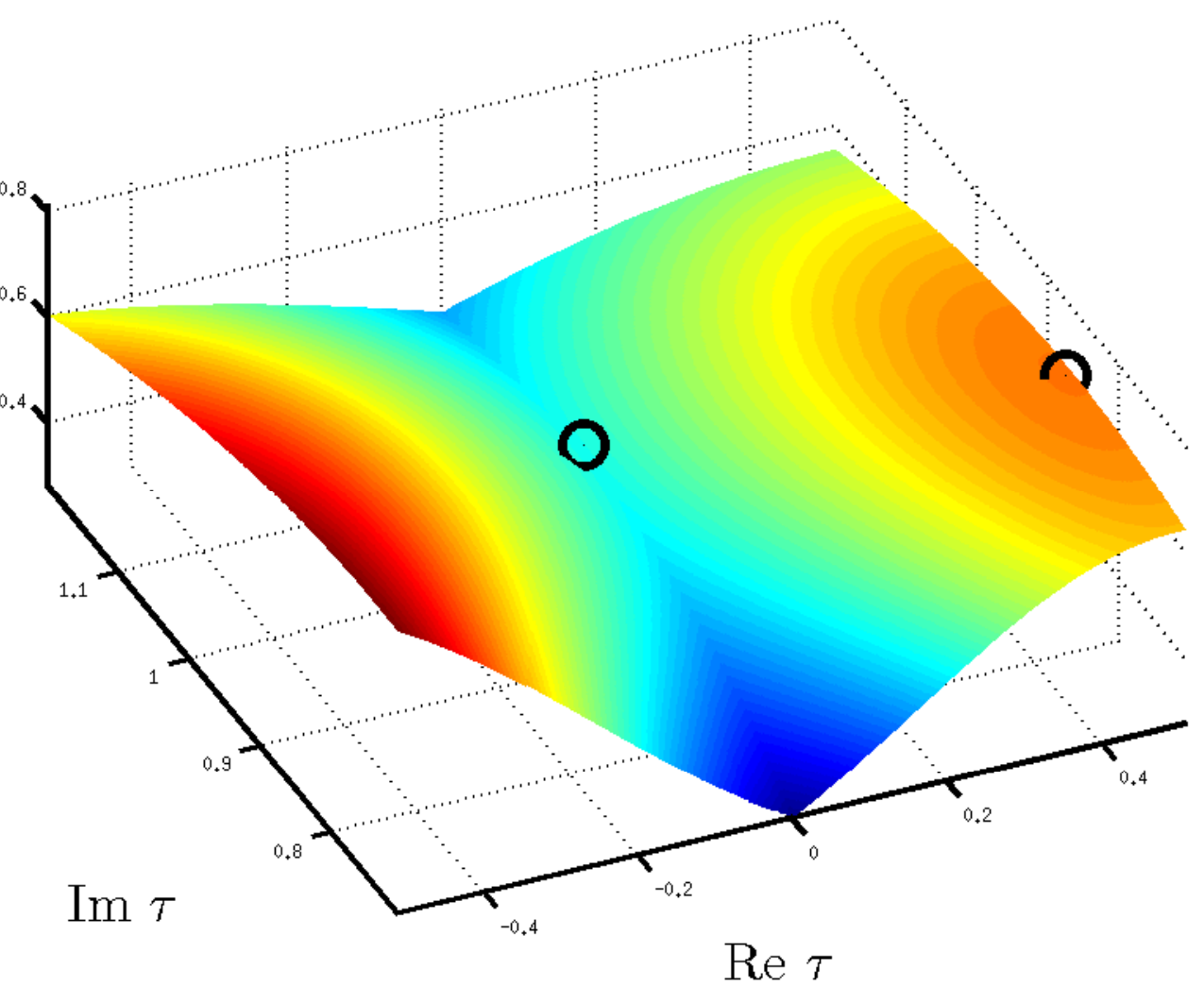}}
	{\includegraphics[width=3.1in]{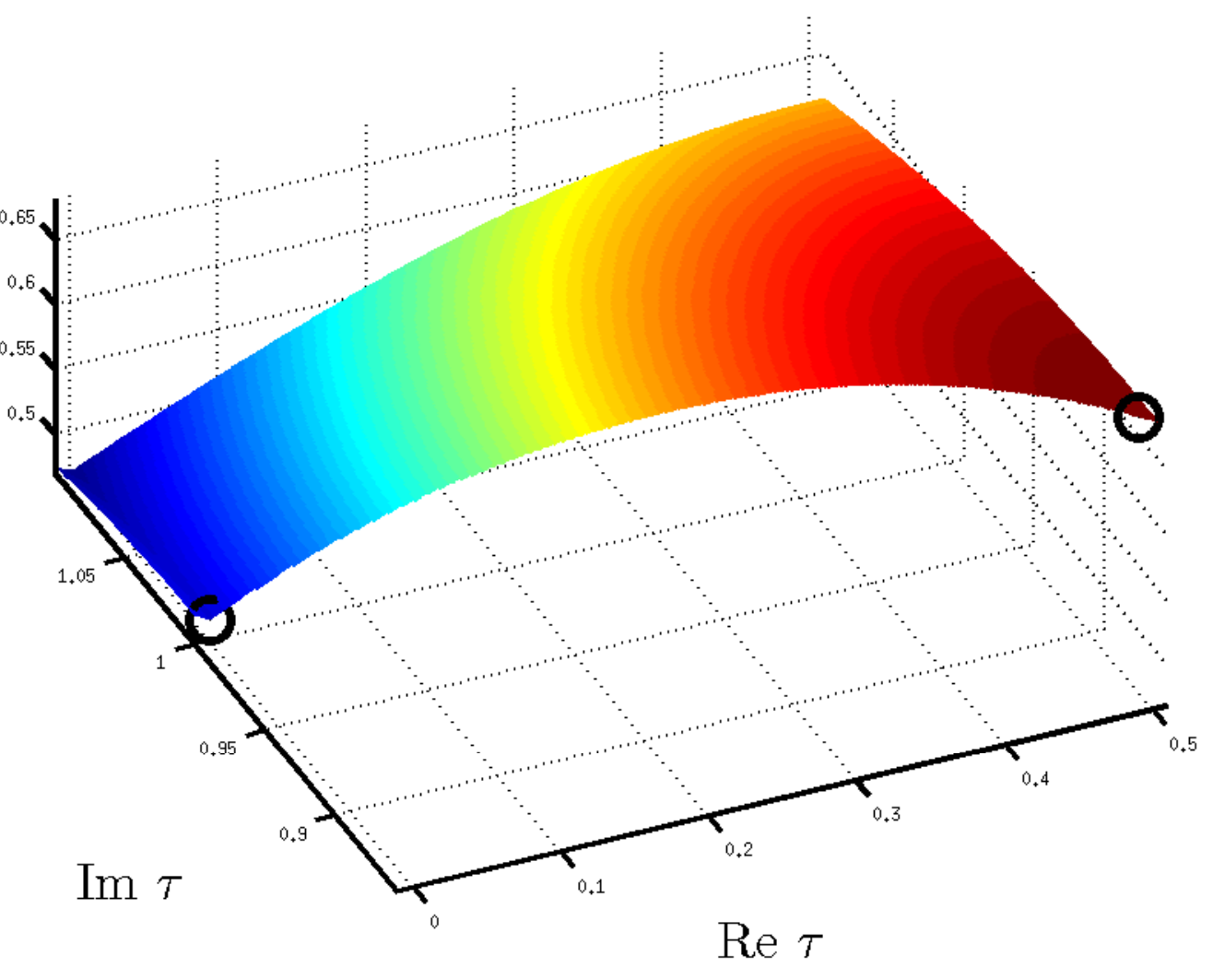}}
	\caption{Plots of the function $\gamma^{approx}(\tau)$. Computed in Matlab
	on a uniform grid with step size 0.01. The plot on the right is the function 
	plotted only on the Poincar\'{e} strip. The circled points are $\tau = e^{i\pi/2}$
	and $\tau = e^{i\pi/3}$.}
	\label{fig:plot}
\end{figure}
%
\DETAILS{ \begin{figure}[h!]
	\centering 
	{\includegraphics[width=3.1in]{hi-color-full.pdf}}
	{\includegraphics[width=3.1in]{hi-color-poin.pdf}}
	\caption{Plots of the function $\gamma^{approx}(\tau)$. Computed in Matlab
	on a uniform grid with step size 0.01. The plot on the right is the function 
	plotted only on the Poincar\'{e} strip. The circled points are $\tau = e^{i\pi/2}$
	and $\tau = e^{i\pi/3}$.}
	\label{fig:plot}
\end{figure}


\begin{figure}[h!]
	\centering 
	\includegraphics[width=3.1in]{contour.pdf}
	\caption{Plot of the gradient of $\gamma^{appr}_k(\tau)$, computed on the same
	grid as in the previous figure. The two marks indicate the points $\tau = e^{i\pi/2}$
	and $\tau = e^{i\pi/3}$. }
	\label{fig:contour}
\end{figure}

\begin{figure}[h!]
	\centering 
	\includegraphics[width=5.1in]{zeroset.png}
	\caption{Plot of the zero set of $\gamma^{appr}(\tau)$. }
	\label{fig:contour}
\end{figure} }

Calculations of   \cite{ST3} show that  $\gamma(\tau)>0$  for all equilateral lattices, $|\tau|=1$, and  is negative for  $|\tau| \ge 1.3$. Though Abrikosov lattices are not as rigid under finite energy perturbations, as for gauge-periodic ones, they are still surprisingly stable.

  It is convenient to consider $\g ( \tau)$ as a modular function on  $\Pi^+$.  To this end, recalling 
  that we identify $\R^2$ with $\C$, via the map $(x_1, x_2)\ra x_1+i x_2$, 
we can choose a basis in $\LAT$  so that $\LAT=r  (\Z+\tau\Z)$, where $\tau\in \C$, $\Im \tau>0$, and $r>0$. By   the quantization condition \eqref{quant-cond} with $n=1$, $r:=\sqrt{\frac{2\pi }{\im\tau b} }$. 
Denote $\LAT_\tau=\sqrt{\frac{2\pi }{\im\tau } }  (\Z+\tau\Z)$. 
The dual to it is $\cL^*_\tau=\sqrt{\frac{2\pi}{\im\tau} } i (\Z-\tau\Z)$.  (The dual, or reciprocal, lattice, $\LAT^*$,  of $\LAT$ consists of all vectors $s^* \in \R^2$ such that $s^* \cdot s \in 2\pi\Z$,  for all $s \in \LAT$.)    
\DETAILS{Let  $\Om_\tau$ is a fundamental cell of the lattice  $\LAT_\tau=\sqrt{\frac{2\pi n}{\im\tau } }  (\Z+\tau\Z)$  and  let  $\Om^*_\tau$ 
be  a fundamental cell of $\cL^*_\tau$, 
  chosen so that $\Omega^*_\tau$ is invariant under reflections, $k \ra -k$.}  
  We identify the dual group, $\hat \LAT_\tau$, with a fundamental cell, $\Om^*_\tau$, of the dual lattice  $\cL^*_\tau$,  chosen so that $\Omega^*_\tau$ is invariant under reflections, $k \ra -k$.  This  identification given explicitly by 
 $\chi(s)\ra   \chi_k(s) = e^{ik\cdot s}\leftrightarrow k$. 
 
 Then the functions  $\g( \tau)$ and $\g_k( \tau) =\g_{\chi_k}(  \tau) ,\ 
 \chi_k(s) = e^{ik\cdot s},\   k\in \Omega^*_\tau$, are defined  in \eqref{gamma} for all  $\tau, \Im\tau>0$. Since they are independent of the choice of a basis in $\LAT_\tau$, they are invariant under action  of  the modular group  $SL(2, \Z)$, $\g( g\tau)= \g( \tau),\ \forall g\in SL(2, \Z)$, i.e. they  modular functions on  $\Pi^+$.   
%

The numerics mentioned above are based on the following  explicit representation of  the functions    $\g_k( \tau)$: 
    \begin{theorem}\label{thm:gamma} The functions    $\g_k( \tau)$ admit the explicit representation 
  \begin{align}\label{gamk-series}
\g_k( \tau)& =2\sum_{t\in \LAT_{\tau}^*} e^{- 
\frac12  |t|^2  } \cos [  \im (  \bar k t)]  + |\sum_{t\in \LAT_{\tau}^*} e^{- 
\frac12   |t+k| ^2     +   i  \im (  \bar k t) }|- \sum_{t\in \LAT_{\tau}^*} e^{- 
\frac12 |t|^2  } .
\end{align}
\end{theorem}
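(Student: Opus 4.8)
The plan is to derive the formula \eqref{gamk-series} by computing each of the three inner-product quantities appearing in the definition \eqref{gamma} of $\g_\chi(\tau)$ in terms of the Fourier data of the functions $\phi_\chi$ solving \eqref{phichi-eqs}. First I would recall from the proof of Proposition~\ref{prop:null-set} the explicit description of the one-dimensional solution space of \eqref{phi-eqs-resc} (the case $n=1$): after the substitution $\theta(z) = e^{\frac{b}{2}(ix_1x_2 - x_2^2)}\phi(x)$ with $z = x/r$, the solution $\phi = \phi_0$ corresponds to a theta function with quasi-periodicity $\theta(z+1)=\theta(z)$, $\theta(z+\tau) = e^{-2iz}e^{-i\tau z}\theta(z)$, which on the normalized lattice $\LAT_\tau$ is essentially Jacobi's $\theta_1$ (or its shift). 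The character-twisted solutions $\phi_\chi$, $\chi_k(s)=e^{ik\cdot s}$, are obtained from $\phi_0$ by translating the vortex center: $\phi_k(x) = e^{i\,\mathrm{(phase)}}\phi_0(x - \xi_k)$ for an appropriate $\xi_k \in \R^2$ depending linearly on $k$ (the shift that converts the character twist into a recentering of the lattice-periodic problem). This is the key structural input: \emph{all} the $\phi_\chi$ are gauge-translates of the single function $\phi_0$.

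Next I would use the magnetic translation / coherent-state algebra from the $\alpha,\alpha^*$ formalism to reduce the three averages $\lan|\phi_0|^2|\phi_k|^2\ran$, $\lan\phi_0^2\bar\phi_k\bar\phi_{k^{-1}}\ran$, and $\lan|\phi_0|^4\ran$ to Gaussian sums over the dual lattice. The standard computational device here is that the function $|\phi_0(x)|^2$, being a lattice-periodic positive function annihilated in the appropriate sense by the lowest-Landau-level projection, has a Fourier expansion over $\LAT_\tau^*$ whose coefficients are Gaussians $e^{-\frac14|t|^2}$ times a phase — this is the classical computation underlying the Abrikosov $\beta$-function, and it yields $\beta(\tau) = \frac12\g_0(\tau)$ with $\g_0(\tau) = \sum_{t\in\LAT_\tau^*} e^{-\frac12|t|^2}$, matching \eqref{gamk-series} at $k=0$ as a consistency check. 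Plugging in $\phi_k(x)=(\text{phase})\,\phi_0(x-\xi_k)$ and using $\xi_k \wedge t = \im(\bar k t)$ (up to normalization of the identification $\R^2\cong\C$) produces the $\cos[\im(\bar k t)]$ factor in the first term; the term $\lan\phi_0^2\bar\phi_k\bar\phi_{k^{-1}}\ran$ involves the product of four LLL functions centered at $0,0,\xi_k,-\xi_k$, whose overlap integral is again a Gaussian theta-sum, giving the middle term $\sum_t e^{-\frac12|t+k|^2 + i\,\im(\bar k t)}$ after completing the square; and $\lan|\phi_0|^4\ran$ is the special case $k=0$ of the first computation.

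The main obstacle I anticipate is bookkeeping the normalizations and phases consistently: the normalization $\lan|\phi_\chi|^2\ran_{\cT_\tau}=1$, the precise relation between the character $\chi_k$ and the center shift $\xi_k$, the scaling $\LAT_\tau = \sqrt{2\pi/\im\tau}(\Z+\tau\Z)$ versus $\LAT_\tau^* = \sqrt{2\pi/\im\tau}\,i(\Z-\tau\Z)$, and the factors of $\frac12$ versus $\frac14$ in the Gaussian exponents (which depend on whether one works with $|\phi|^2$ or with $\phi$ itself and on the value $b=1$ forced by the rescaling). Getting the exponent to come out as exactly $\frac12|t|^2$ rather than $\frac14|t|^2$ requires care: in \eqref{gamk-series} the sums are over $\LAT_\tau^*$ with Gaussian weight $e^{-\frac12|t|^2}$, which reflects that the \emph{product} of two LLL functions (hence the convolution of their transforms, each carrying $e^{-\frac14|\cdot|^2}$) is being expanded. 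I would organize this by first establishing the single identity for the Fourier coefficients of $|\phi_0|^2$ rigorously (via residues/quasi-periodicity of the theta function, or via the operator identity $e^{\frac b2(ix_1x_2-x_2^2)}\alpha e^{-\frac b2(ix_1x_2-x_2^2)} = \partial_{x_1}+i\partial_{x_2}$ from Proposition~\ref{prop:null-set}), then deriving all three averages as corollaries by translation and by Parseval on $\cT_\tau$. Everything else is routine Gaussian summation.
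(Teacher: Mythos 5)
Your proposal is correct and follows essentially the same route as the paper: the paper represents $\phi_k$ via the theta function with characteristic $q$ (linear in $k$), given by an explicit Fourier series, which is your ``magnetic translate of $\phi_0$'' in different clothing, and the stated formula then follows by exactly the Gaussian lattice summation you describe (the detailed summation being deferred to \cite{ST3}). Your consistency checks --- $\beta(\tau)=\frac12\g_0(\tau)=\sum_{t\in \LAT_{\tau}^*}e^{-\frac12|t|^2}$ at $k=0$, and the $e^{-\frac14|t|^2}$ versus $e^{-\frac12|t|^2}$ exponent bookkeeping for $|\phi_0|^2$ versus $\lan|\phi_0|^4\ran$ --- are precisely the right ones.
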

Our computations show also that  
 \begin{itemize}
\item $ \g_k( \tau)$ is minimized at  $k \approx \sqrt{\frac{2\pi }{\im\tau b} } (\frac{1}{2} - \frac{1}{2\sqrt{3}} i)$ at the point $\tau = e^{i \pi /3}$, and a value of
 $k \approx \sqrt{\frac{2\pi }{\im\tau b} } (\frac{1}{2} + i\frac{1}{2})$ for $\tau = e^{i \pi/2}$, which corresponds to vertices of the corresponding Wigner-Seitz cells. 
\end{itemize}
Interestingly,  in \cite{ST3}, we show that  the points  $k\in \frac12\cL^*_\tau$ 
  are critical points of the function $\g_{k}( \tau)$ in $k$.
  \DETAILS{ and we  {\it conjecture} that 
\begin{itemize}

\item   The points  $k\in\frac12\cL^*_\tau$ 
are the only critical points of the function $\g_{k}( \tau)$ in $k$.  
\end{itemize}
If the latter conjecture is true, this would considerably simplify proving the first conjecture.}  It is easy to see that $k = 0$ is a point of maximum of   $\g_k( \tau)$ in  $k\in \Om^*_\tau$.

\medskip


\noindent
{\bf Remark}.
  We think of  $\g_{k}(\tau) $ as the 'Abrikosov beta function with characteristic' (while $\beta(\tau) $ is defined in terms of the standard theta function,  $\g_{k}(\tau) $ is defined in terms of theta functions with finite characteristics,  see below).

\DETAILS{2) As for the function $\beta(\tau)$, 
the property \eqref{gam-propert} has the origin in the fact that the function $\g(\tau)$ which appears in the eigenvalue asymptotics does not depend on a choice of a basis in $\LAT_\tau$. }
\DETAILS{Now, the parameter $\tau$ in the formulae above determines the basis   $\sqrt{\frac{2\pi}{\im\tau} }(1 ,\  \tau)$  of the lattice $\LAT_\tau$ and any two bases in $\LAT_\tau$  are related by a modular map  $\sqrt{\frac{2\pi}{\im\tau} } (\nu_1,\ \nu_2)\ra  \sqrt{\frac{2\pi}{\im\tau} }(\nu_1',\ \nu_2')=\sqrt{\frac{2\pi}{\im\tau} }(\al\nu_1+\beta,\ \g\nu_2 +\del)$, where  $\al, \beta, \g, \del\in \Z$,  and 
$\al \del-\beta \g=1$ (with the matrix  $\left( \begin{array}{cc} \al & \beta \\ \g & \del \end{array} \right)$, an element  of the modular group  $SL(2, \Z)$). In particular, if we replace the basis  $\sqrt{\frac{2\pi}{\im\tau} }(1, \tau)$  by the basis $\sqrt{\frac{2\pi}{\im\tau} }( \g \tau+\del,\  \al \tau+\beta )$, our results should not change. Under this map, the shape parameter is mapped as $ \tau\ra g\tau:=\frac{\al\tau+\beta}{\g\tau+\del}$.}   



\subsection{The key ideas of approach} \label{sec:approach}
    Let   $u_\om = (\Psi_\om, A_\om),\ \om:=  (\tau, b, 1),$  be  an Abrikosov lattice solution. As usual, one begins with the  Hessian $L_\om:=\E''(u_\om)$ of the energy functional $\E$ at $u_\om$.  To begin with, due to the fact that  the solution $u_\om$ of \eqref{GLE} breaks the gauge invariance, the operator $L_\om$ has the  gauge zero modes,  $L_\om  G_{ \g'} =0$, where  $ G_{\gamma'} := (i\gamma'\Psi_\om, \nabla\gamma')$.  Then the stability of the static solution $u_\om$ is decided by the sign of the 
infimum $\mu (\om, \kappa):= \inf_{v \in H^1_\perp}   \lan v, L_\om v \ran_{L^2}/ \|v\|_{L^2}^2$,  on the subspace $H^1_\perp$ of the Sobolev space $H^1_{\textrm{cov}} $, which is the orthogonal complement of the these zero modes $ G_{\gamma'}$.  

The key idea of the proof of the first part of Theorem \ref{thm:stability} stems from the observation that since the  Abrikosov lattice solution $u_\om = (\Psi_\om, A_\om)$ is gauge periodic (or equivariant) w.r.to  the lattice $\LAT_\om$, 
i.e. satisfies \eqref{gauge-per'} - \eqref{cocycle-cond},
 the linearized map $L_\om$ commutes with magnetic translations,  
 \begin{align}\label{rhot'}\rho_s = T_s^{\rm mag-trans}  \oplus T^{\rm trans}_s,\ 
  \quad  \forall s\in \cL_\om,\end{align}
 where  $T_s^{\rm mag-trans} = 
(T^{gauge}_{g_s})^{-1} T^{\rm trans}_s$ is the group of magnetic translations and, recall,  $T^{\rm trans}_s$ denotes translation by $s$, which, due to \eqref{cocycle-cond}, give a unitary group representation of $\LAT_\om $. (Note that \eqref{gauge-per'} implies that $u_\om$ is invariant under the magnetic translations, $T_s^{\rm mag-trans} u_\om = u_\om$.) 
 Therefore $L_\om$ is unitary equivalent to a fiber integral over the dual group, $\hat\LAT_\om$, of the group of lattice translations, 
\begin{align}\label{L-deco} L_\om \approx \int_{\hat\LAT_\om}^\oplus L_{\om \chi} \hat{d\chi}\ \quad \mbox{acting on}\ \quad  \int_{\hat\LAT_\om}^\oplus \cH_{ \chi} \hat{d\chi}.\end{align}
Here $\hat{d\chi}$ is the usual Lebesgue measure normalized so that $\int_{\hat\LAT_\om} \hat{d\chi} = 1$, $L_{\om \chi}$ is the restriction of $L_{\om}$ to 
$\cH_{\chi }$ and $\cH_{\chi }$ is the set of all functions, $v_k$, from $L^2_{\rm loc}(\R^2;  \C\times \R^2)$, 
satisfying the gauge-periodicity conditions
	\begin{equation}\label{bc'}\rho_s v_\chi (x) = \chi (s) v_\chi (x),\ \forall s\in\ 
	\LAT_\om,\end{equation}
 where 
  $\chi : \LAT_\om \to U(1)$ are   the characters acting on $v = (\xi, \alpha)$ as the multiplication operators
\begin{align*}\chi (s)v = 
(\chi (s)\xi, \chi (s)\alpha).\end{align*}
 Furthermore, $\cH_{\chi }$  is endowed with the  inner product $  \lan v, v' \ran_{L^2} = \frac{1}{|\cT_\om|}\int_{\cT_\om} \re\bar{\xi}\xi' + \bar{\alpha}\alpha' ,$ where  $\cT_\om:= \R^2/\LAT_\om$ is a $2-$torus 
 and $v = (\xi,  \alpha),\  v' = (\xi', \alpha')$. 
 	The  inner product in $\int_{\hat \LAT_\om}^\oplus \cH_{\chi } \hat{d\chi }$ is given by $\lan v , w \ran_{\cH }:=\frac{1}{|\hat\LAT_\om|} \int_{\hat\LAT_\om}\lan v_\chi , w_\chi \ran_{\cH_\chi }  \hat{d\chi }$.

 The decomposition \eqref{L-deco} implies that   the smallest spectral point, $\mu_\om ( \kappa)$, of $L_\om$ is given by 
 $\mu_\om ( \kappa)=\inf_{k\in \Omega^*_\om} \mu_{\om, \chi }(\kappa)$, where $\mu_{\om, \chi }(\kappa)$ are the smallest eigenvalues of $L_{\om \chi }$. The spectral analysis of fibers $L_{\om \chi },\ \chi \in \hat\LAT_\om^*,$  using a natural perturbation parameter $\e$  defined as
\begin{equation}\label{eps} 
	\e = \sqrt{\frac{\kappa^2 - b}{\kappa^2[(2\kappa^2 - 1)\beta(\tau) + 1]}},
\end{equation} gives 
leads to the following expression for $\mu_{\om, k}(\kappa)$ 
The lowest eigenvalue, $\mu_{\om, \chi }(\kappa)$, of the operator $L_{\chi } $ 
	on the subspace $\cH_{\chi \perp}$ of $\cH_{\chi }$, which is the orthogonal complement of the the $\chi -$fibers of the gauge zero modes $ G_{\gamma'} := (i\gamma'\Psi_\om, \nabla\gamma')$, are of the form
		\begin{equation}\label{muk}
		\mu_{\om, \chi }(\kappa) = 
		 (\kappa^2  - \frac{1}{2}) \g_\chi ( \tau) \e^2 + O(\e^3), \end{equation}
	for $\chi \ne 1$, 	 with   the functions $\g_\chi ( \tau)$ defined in \eqref{gamma} 
	and  \eqref{phichi-eqs}. 
	
The linear result above gives the linearized (energetic) stability	of $u_\om$, if $\mu_{\om, \chi }(\kappa)>0$, and the instability, if $\mu_{\om, \chi }(\kappa) < 0$. To lift the  stability part to the (nonlinear) asymptotic stability, we use the functional   $\Lambda_{u_\om}(v)$, given in \eqref{Lambda}. It has the following explicit representation
	\begin{equation}	\label{Lambda-expres}
	\Lambda_\om(v) = \frac{1}{2}\lan v, L_\om v \ran_{L^2} + R_\om(v),
	\end{equation}
	where  $L_\om:=\E''(u_\om)$ is  the  Hessian of the energy functional $\E$ at $u_\om$, $\lan v, v' \ran_{L^2}$ is the $L^2$ inner product, \eqref{L2-inner-product}, and  $R_\om(v)$ is given  by
	\begin{align}	\label{remainder}
	R_\om(v) &= \int (|\alpha|^2 + \kappa^2|\xi|^2)\Re(\bar{\Psi}_\om\xi) \notag \\ &- \alpha \cdot \Im(\bar{\xi}\COVGRAD{A_\om}\xi) + \frac{1}{2}(|\alpha|^2 + \frac{\kappa^2}{2}|\xi|^2) |\xi|^2.
	\end{align}
Using this expression, we obtain appropriate differential inequalities for  $\Lambda_{u_\om}(v)$, which imply the asymptotic stability.  $\Box$ 

\medskip


\noindent
{\bf Remark}.
$\mathscr{H}_\chi $ can be thought of as the space of  $L^2-$sections of the vector bundle $[\R^2\times  (\C\times \R^2)]/\cL$, with the group $\cL$, which acts on $\R^2\times (\C\times \R^2)$ as $s (x, v)= (x + s,  \chi_k (s) \tau_{g_s(x)}^{-1}  v)$, where $\tau_{\al} = e^{i\al} \oplus e^{-i\al} \oplus \one \oplus \one$, for $\al\in \R$.

Before proceeding, we recall that for  $\LAT $, considered as a group of lattice translation, the dual group $\hat \LAT$  
 is the group of all continuous homomorphisms from  $\LAT $ to $U(1)$, i.e. the group of characters, $\chi_k(s) : \LAT \to U(1)$). ($\hat \LAT$ can be identified with the fundamental cell  $\Om^*$ of the dual lattice  $\LAT^* $, with the  identification given explicitly by identifying $k \in \Omega^* $ with the character $\chi_k : \LAT \to U(1)$ given by
 $   \chi_k(s) = e^{ik\cdot s}.$

Now, we derive the explicit representation \eqref{gamk-series} and the uniqueness for \eqref{phichi-eqs}. 
 As in \eqref{theta-phi}, we introduce the new function $\theta_{q}(z, \tau)$, by the equation 
\begin{align}\label{phik-thetaq}
\phi_k(x)= c_0 
e^{\frac{\pi}{2 \im\tau}(z^2 -|z|^2)} \theta_{q}(z, \tau), \end{align} 
 where 
 $c_0$ is such that  	 $\lan |\phi_k|^2 \ran_{\Om_\tau} = 1$,  and 
   $x_1+i x_2=\sqrt{\frac{2\pi}{\im\tau} }    z \quad \mbox{and}  \quad k=\sqrt{\frac{2\pi}{\im\tau} }   i q.$ 
Then, again as above, we can show that  the functions $\theta_{q}(z, \tau)= e^{-\frac{\pi}{2 \im\tau}(z^2 -|z|^2)} \phi_k(x)$ 
are entire functions  (i.e. they solve $
	\bar\p \theta_q = 0$) and  satisfy the periodicity conditions
	\begin{align}\label{thetaq-per1}	&\theta_q(z + 1, \tau) = e^{-2\pi i a  } \theta_q(z, \tau), \\
	&\theta_q(z + \tau, \tau) = e^{- 2\pi i b}e^{-i\pi\tau-2\pi i z} \theta_q(z, \tau), \label{thetaq-per2}	\end{align}
where $a, b$ are real numbers defined by $q=-a\tau +b$.
 This shows that $\theta_{q}(z, \tau)$ are the theta functions with characteristics $q$, and this characteristics is determined by $k$, which  in physics literature is called (Bloch) quasimomentum. Moreover, it has the following series expansion
	 \begin{align}\label{thetaq} \theta_{q}(z, \tau):= 
e^{\pi i (a^2\tau -2 ab-2 a  z)} \sum_{m=-\infty}^{\infty} e^{2\pi i  q m} e^{\pi i m^2\tau}  e^{2\pi i m z} .	\end{align}


\DETAILS{\section{Vortex dynamics}

One of the themes of recent mathematical studies of the
Ginzburg-Landau equations is an attempt to describe solutions
in terms of the basic localized structures -- namely vortices
-- yielding a sort of finite-dimensional reduction. In
particular, for various dynamical Ginzburg-Landau equations,
one wants to understand the structure and dynamics of the
vortices in solutions.}

\section{Multi-vortex dynamics}
\label{sec:vdyn}

Configurations containing several vortices are not, in
general, static solutions. Heuristically, this is due
to an effective inter-vortex interaction, which causes
the vortex centers to move. It is natural, then, to seek
an effective description of certain solutions of time-dependent
Ginzburg-Landau equations in terms of the vortex locations
and their dynamics -- a kind of finite-dimensional reduction.
In recent years, a number of works have addressed this problem,
from different angles, and in different settings. We will
first describe results along these lines from~\cite{GS2} for
magnetic vortices in $\R^2$, and then mention some other
approaches and results.


\paragraph{Multi-vortex configurations.}

Consider test functions describing several vortices, with the
centers at points $z_1$, $z_2, \ldots, z_m$, and with degrees
$n_1, n_2, \ldots, n_m$, ``glued together''. The simplest example
is
$  \vz = (\Psi_{\zb,\chi}, A_{\zb, \chi}),$ with
\[
  \Psi_{\zb,\chi}(x) = e^{i\chi(x)} \prod_{j=1}^m
  \Psi^{\nj}(x-z_j),\] 
\[  A_{\zb,\chi}(x) = \sum_{j=1}^m A^{\nj} (x-z_j) +
  \nabla \chi(x) \; ,
\]
where $\zb = (z_1, z_2, \dots, z_m) \in \R^{2m}$, and
$\chi$ is an arbitrary real-valued function yielding the gauge
transformation. Since vortices are exponentially localized,
for large inter-vortex separations such test functions are
approximate -- but not exact --
solutions of the stationary Ginzburg-Landau equations.
We measure the inter-vortex distance by
\[
  R(\zb) := \min_{j \not= k} |z_j-z_k|,
\]
and introduce the associated small parameter
$\e = \e(\zb) := R(\zb)^{-1/2} e^{-R(\zb)}$.


\paragraph{Dynamical problem.}

Now consider a time-dependent Ginzburg-Landau equation
with an initial condition close to the function
$\vzo = (\psi_{\zb_0,\chi_0}, A_{\zb_0,\chi_0})$ describing
several vortices glued together (if $\kappa > 1/2$, we take
$n_j = \pm 1$ since the $|n| \geq 2 -$ vortices are then
unstable by Theorem~\ref{thm:stab}), and ask the following
questions:
\begin{itemize}
\item
Does the solution at a later time $t$
describe well-localized vortices at some locations
$\zb = \zb(t)$ (and with a gauge transformation $\chi = \chi(t)$)?
\item
If so, what is the dynamical law of the vortex
centers $\zb(t)$ (and of $\chi (t)$)?
\end{itemize}


\paragraph{Vortex dynamics results.}

This section gives a brief description of the vortex dynamics
results in~\cite{GS2}.

\bigskip

\noindent
{\it Gradient flow (we take, for simplicity, $\g=1, \s=1$ in \eqref{GES}).}
Consider the gradient flow equations~\eqref{GES}
with initial data $(\psi(0),A(0))$ close (in the
energy norm) to some multi-vortex configuration $\vzo$. Then
\be
\la{eq:close}
  (\Psi(t),A(t)) = \vzt + O(\e \log^{1/4}(1/\e)),
\end{equation}
and the vortex dynamics is governed by the system
\be
\la{eq:seff}
  \g_{n_j} \dot{z}_j = -\nabla_{z_j} W(\zb) + O(\e^2 \log^{3/4}(1/\e)).
\end{equation}
Here $\g_n >0$, and $W(\zb)$ is the effective vortex interaction
energy, of order $\e$, given below. 

These statements hold for only as long as the path
$\zb(t)$ does not violate a condition of large separation,
though in the {\it repulsive case}, when
$\lam > 1/2$ and $n_j = +1$ (or $n_j = -1$) for all $j$,
the above statements hold for all time $t$.

\bigskip

\noindent
{\it Maxwell-Higgs equations.}
For the Maxwell-Higgs equations~\eqref{MHeq}
with initial data $(\Psi(0),A(0))$ close (in the energy norm)
to some $\vzo$ (and with appropriately small initial momenta),
\be
\la{eq:close2}
  \|(\Psi(t),A(t)) - \vzt\|_{H^1} +
  \|(\p_t \psi(t),\p_t A(t)) - \p_t \vzt\|_{L^2}
  = o(\sqrt{\e})
\end{equation}
with
\be
\la{eq:heff}
  \gamma_{n_j}\ddot{z}_j = -\nabla_{z_j} W(\zb(t)) + o(\e)
\end{equation}
for times up to (approximately) order
$\frac{1}{\sqrt{\e}}\log\left(\frac{1}{\e}\right)$.


\bigskip

\noindent
{\it The effective vortex interaction energy.} The interaction energy of a multi-vortex configuration is defined as
\be
\la{W}   W(\zb) = \E(\vz) - \sum_{j=1}^n \E(\Psi^{(n_j)},A^{(n_j)}).
\end{equation}
For $\kappa > 1/2$, we have for large $R(\zb)$,
$$W(\zb) \sim (const) \sum_{j \not= k} n_j n_k
  \frac{e^{-|z_j-z_k|}}{\sqrt{|z_j-z_k|}}.$$

\paragraph{Some ideas behind the proofs (Maxwell-Higgs case).}
\begin{itemize}
\item
{\it Multi-vortex manifold.}
Multi-vortex configurations $\vz$ comprise an (infinite-
dimensional, due to gauge transformations) manifold
\[
  M := \{ \vz \; | \; \zb \in \R^{2m}, \chi \in H^2 \},
\]
(together with appropriate momenta in the Maxwell-Higgs case)
made up of approximate static solutions. The interaction energy \eqref{W} of a multi-vortex configuration gives rise to a reduced Hamiltonian system on $M$,
using the restriction of the natural symplectic form to $M$
-- this is the leading-order vortex motion law.
\item
{\it Symplectic orthogonality and effective dynamics.}
The {\it effective dynamics} on $M$ is determined by
demanding that the deviation of the solution from
$M$ be symplectically orthogonal to the tangent space to
$M$. Informally,
\[
  (\Psi(t), A(t)) - \vzt \;\; \perp \;\;
  \J \; T_{\vzt} M.
\]
The tangent space is composed of infinitesimal (approximate)
symmetry transformations -- that is, independent motions of the
vortex centers, and gauge transformations.
\item
{\it Stability and coercivity.}
The manifold $M$ inherits a stability property from the stability
of its basic building blocks -- the $n$-vortex solutions --
as described in Section~\ref{sec:stab}. 
The stability property is reflected in the fact that
the linearized operator around a multi-vortex
\[
  L_{\zb,\chi} = \E''(\vz)
\]
is coercive in directions symplectically orthogonal
to the tangent space of $M$:
\[
  \xi \; \perp \; \J \; T_{\vz} M \;
  \implies \; \langle \xi, L_{\zb,\chi} \xi \rangle > 0.
\]
\item
{\it Approximately conserved Lyapunov functionals.}
Thus the quadratic form $\langle \xi, L_{\zb,\chi} \xi \rangle$, where $\xi:=(\Psi(t), A(t)) - \vzt$, controls
the deviation of the solution from the multi-vortex manifold,
and furthermore is approximately conserved -- this gives
long-time control of the deviation. Finally, approximate
conservation of the reduced energy $W(\vzt)$ is used to control
the difference between the effective dynamics, and the
leading-order vortex motion law.
\end{itemize}

\appendix

 \section{Parametrization of the equivalence classes $[\LAT]$} \label{sec:param}
In this appendix, we present some standard results about lattices. We show that  lattice shapes can be parametrized by  points $\tau$ in  the fundamental domain,  $\Pi^+/SL(2, \Z)$, of  the modular group  $SL(2, \Z)$ acting on the Poicar\'e half-plane $\Pi^+$.
%
\DETAILS{ and  use translations and  rotations, if necessary, to bringing a lattice $\LAT$, satisfying the quantization condition \eqref{quant-cond},  into the form $$\LAT=
r (\Z+\tau\Z),$$ where $r>0$ and $\tau\in \C$  is a complex number with $\Im \tau>0$. 
Now, the parameter $\tau$ determines the basis   $(r ,\  r\tau)$  of the lattice $\LAT_\tau$ and any two bases in $\LAT_\tau$  are related by a modular map  $ (\nu_1,\ \nu_2)\ra (r\nu_2',\ r\nu_1')=r(\al\nu_1+\beta,\ \g\nu_2 +\del)$, where  $\al, \beta, \g, \del\in \Z$,  and 
$\al \del-\beta \g=1$ (with the matrix  $\left( \begin{array}{cc} \al & \beta \\ \g & \del \end{array} \right)$, an element  of the modular group  $SL(2, \Z)$). In particular, replacing the basis  $r(1, \tau)$  by the basis $r( \g \tau+\del,\  \al \tau+\beta )$, results in the shape parameter $\tau$ being mapped as $ \tau\ra g\tau:=\frac{\al\tau+\beta}{\g\tau+\del}$. Hence any quantity, $f$, which depends of the lattice, and not on 
the choice a basis in the lattice, is invariant under  the modular group  $SL(2, \Z)$ and therefore is determined entirely by its values on 
 the fundamental domain,}
 %

Every lattice in $\R^2$ can be written as $\LAT=\Z\nu_1+\Z\nu_2$, where  $ (\nu_1,\ \nu_2)$ is a  basis in $\R^2$.  Given a  basis $ (\nu_1,\ \nu_2)$  in $\R^2$ and   identifying $\R^2$ with $\C$, via the map $(x_1, x_2)\ra x_1+i x_2$, we define the complex number $\tau=\nu_2/\nu_1$, called the shape parameter. 
We can choose a basis so that  $\Im\tau>0$, which we assume from now on.  
 Clearly, $\tau$ is independent of translations, rotations and dilatations of the lattice and therefore depends on its equivalence class only.

 Any two bases, $ (\nu_1,\ \nu_2)$ and $ (\nu_1',\ \nu_2')$ span the same lattice $\LAT$  iff they are related 
 as $ (\nu_1',\ \nu_2')=(\al\nu_1+\beta,\ \g\nu_2 +\del)$, where  $\al, \beta, \g, \del\in \Z$,  and 
$\al \del-\beta \g=1$ (i.e. the matrix  $\left( \begin{array}{cc} \al & \beta \\ \g & \del \end{array} \right)$, an element  of the modular group  $SL(2, \Z)$). Under this map, the shape parameter $\tau=\nu_2/\nu_1$ is being mapped into  $\tau'=\nu'_2/\nu'_1$ as $ \tau\ra \tau'=g\tau$, where $g\tau:=\frac{\al\tau+\beta}{\g\tau+\del}$. Thus, up to rotation and dilatation, the lattices are in one-to-one correspondence with points $\tau$ in the fundamental domain, $\Pi^+/SL(2, \Z)$, of  the modular group  $SL(2, \Z)$ acting on   the Poicar\'e half-plane  $\Pi^+$. 
Explicitly  (see Fig. \ref{fig:PoincareStrip}),   
\begin{align}\label{fund-domSL2Z} \{\tau\in \C: \Im\tau > 0,\ |\tau| \geq 1,\ -\frac{1}{2} < \Re\tau \leq \frac{1}{2} \}. 
\end{align} 


Furthermore, any quantity, $f$, which depends of the lattice equivalence classes, can be thought of as a function of $\tau,\ \Im\tau>0$, 
 invariant under  the modular group  $SL(2, \Z)$ and therefore is determined entirely by its values on  the fundamental domain, \eqref{fund-domSL2Z}.

 \section{Automorphy factors} \label{sec:gs}  
 
  We list some important properties of $g_s$: 
\begin{itemize}
\item   If $(\Psi, A)$ satisfies \eqref{gauge-per'} with  $g_s(x)$, then $T^{gauge}_\chi (\Psi, A)$ satisfies \eqref{gauge-per'} with  $g_s(x)\ra g'_s(x) $, where 
\begin{equation}\label{gs-equiv}
    g'_s(x) = g_s(x) + \chi (x + s) - \chi(x).
\end{equation}
 
\item 	
The functions $g_s(x)  = \frac{b}{2} s \wedge x + c_s$, where $b$ satisfies $b |\Omega|\in 2\pi\Z$ 
  and $c_s$ are numbers satisfying $ c_{s+t} - c_s - c_t - \frac{1}{2} b s \wedge t \in 2\pi\Z$, satisfies  \eqref{cocycle-cond}.

  \item By the cocycle condition \eqref{cocycle-cond},  for any basis $\{\nu_1, \nu_2\}$ in $\cL$,  the quantity
\begin{equation}\label{gs-equiv}
    c(g_s) =\frac{1}{2\pi} (g_{\nu_2}(x+\nu_1) - g_{\nu_2}(x) - g_{\nu_1}(x+\nu_2) + g_{\nu_1}(x)) 
\end{equation}
 is independent of $x$ and of the choice of the basis $\{\nu_1, \nu_2\}$ and is an integer. 
 
\item   Every exponential  $g_s$ satisfying the cocycle condition \eqref{cocycle-cond}   is equivalent to  the exponent 
\begin{equation}\label{gs-spec} \frac{b}{2} s \wedge x + c_s,\end{equation} for 
  $b$ and  $c_s$   satisfying $b |\Omega|= 2\pi c(g_s)$ and 
  \begin{equation}\label{cs-eq}  c_{s+t} - c_s - c_t - \frac{1}{2} b s \wedge t \in 2\pi\Z.\end{equation}

\item 	The condition \eqref{cocycle-cond} implies 
the magnetic flux quantization \eqref{flux-quant}: 
 \begin{equation}\label{flux-cs}\frac{1}{2\pi} \int_\Omega \Curl A =\deg \Psi  
=c(g_s). \end{equation}
\end{itemize}

Indeed, the first, second and third statements are straightforward.  

For the fourth property, see e.g. \cite{Eil, Odeh, Takac, TS2}, though in these papers it is formulated differently. In the present formulation it was shown  by A. Weil and generalized in  \cite{Gun1}. 

To prove the  fifth statement,  
we note that  by Stokes' theorem,  the magnetic flux through a lattice cell $\Om$ is   $\int_\Omega \Curl A = \int_{\p\Omega} A$, 
is given by \begin{align*}
  \int_0^1  &\big[ \nu_1 \cdot (A(a\nu_1 + \nu_2) - A(a\nu_1)) - \nu_2 \cdot (A(a\nu_2 + \nu_1) - A(a\nu_2))  \big] da \\
 &= \int_0^1 \big[ \nu_1 \cdot \nabla g_{\nu_2}(a\nu_1) - \nu_2 \cdot \nabla g_{\nu_1}(a\nu_2) \big] da, 
 \end{align*}
which, by \eqref{cocycle-cond},  gives
	$\int_\Omega \CURL A 	= g_{\nu_2}(\nu_1) - g_{\nu_2}(0) - g_{\nu_1}(\nu_2) + g_{\nu_1}(0) \in 2\pi\Z. $ 

\begin{itemize}
\item The exponentials  $g_s$ satisfying the cocycle condition \eqref{cocycle-cond} are classified by the irreducible representation of the group of lattice translations. 
\item For a family $g_s$ of functions satisfying \eqref{cocycle-cond}, there exists 
a continuous pair $(\Psi, A)$ satisfying \eqref{gauge-per'} with this family.
 \end{itemize}   
The first property follows from the fact that  $c_s$'s satisfying $c_{s+t} - c_s - c_t - \frac{1}{2} b s \wedge t \in 2\pi\Z$ 
are classified by the irreducible representation of the group of lattice translations. 

For the second property,   given a family $g_s$ of functions satisfying \eqref{cocycle-cond},   
 equivariant functions  $u=(\Psi, A)$ for $g_s$ are identified with {\it sections of the vector bundle}
  $$\R^2\times (\C\times \R^2)/\cL,$$
  with the base manifold $\R^2/\cL=\Om$ and the projection 
 $p: [(x, u)]\ra [x]$, where $[(x, u)]$ and $[x]$ are the equivalence classes of $(x, u)$ and $x$, under the action of the group $\cL$ on $\R^2\times  (\C\times \R^2)$ and on  $\R^2$, given by $$s: (x, u)\ra (x + s,  T^{\rm gauge}_{g_s(x)} u)\ \text{  and }\  s: x\ra x+s,$$
  respectively.



\medskip
\noindent {\bf Remark.}    In algebraic geometry and number theory, $e^{ig_s(x)}$ is called the automorphy factor and the factors  $e^{ig'_s(x)}$ and $e^{ig_s(x)}$ satisfying $g'_s(x) = g_s(x) + \chi (x + s) - \chi(x)$, for some $\chi (x)$, are said to be equivalent.  
A function $\Psi$ satisfying $T^{\rm trans}_s \Psi =e^{i g_s} \Psi$ is called $e^{i g_s}-$theta function. 



\DETAILS{\subsubsection{Other vortex dynamics results}

\donothing{
The landmark previous developments are summarized in the table
below

\begin{center}
\begin{tabular}{|l|l|l|l|} \hline
\hfill Type of&Superfluid&Superconductor&Higgs\\
\hfill Eqns&&&\\
\cline{1-1}
Type of\hfill &&& \\
Results\hfill &&&\\ \hline
Nonrigorous\qquad &Onsager `49 & Perez-Rubinstein &Manton `82\\
             &           & '83 $(\lambda\gg\frac12 )$
        & $(\lambda \approx \frac12 )$\\
             &           & E '84 $(\lambda\gg\frac12 )$&\\
\hline
Rigorous    &Colliander-&Demoulini-& Stuart `94\\
            &Jerrard '00&Stuart '97& $(\lambda\approx\frac12)$\\
            &F.-H. Lin-Xin `00 &$(\lambda\approx\frac12)$&\\ \hline
\end{tabular}
\end{center}
}

A wide variety of vortex dynamics results appear in the
literature, both for models which include the magnetic field,
and also for non-magnetic versions such as the
{\it Gross-Pitaevski equation} describing superfluids
\be
\label{eq:gp}
  i \frac{\p \psi}{\p t}
  = -\Delta \psi + \frac{1}{\e^2}(|\psi|^2-1)\psi.
\end{equation}

The first works on vortex dynamics were non-rigorous
derivations of motion laws for point or line vortices in
certain scaling limits. These include, for Ginzburg-Landau
equations without the magnetic component,
\cite{on,fe,ne,pr,rps,pr,em,e,os,PapTom2,StrTom},
and for magnetic vortices, \cite{m,AH,pr,e}.

Pioneering rigorous vortex dynamics results
in the non-magnetic case include those of \cite{cj,lx,js,l}.
Consider for example equation~\eqref{eq:gp}, and let
$\psi^\epsilon$ be the solution with a ``low energy'' initial
condition. Then \cite{cj,lx} show that as $\epsilon \to 0$,
the ``renormalized'' energy density
$$
  \frac{1}{|\log \e|} \left( \frac{1}{2} |\nabla \psi^{\e}|^2
  + \frac{1}{4\e^2}(|\psi^{\e}|^2-1)^2 \right)
$$
converges weakly to a sum of $\delta$-functions located at
points $\zb(t) := \big( z_1 (t) , \dots , z_k (t)\big)$
which solve the Hamiltonian equation
$\dot{z} = J \nabla H(z)$
with appropriate initial conditions and Hamiltonian $H$.
\donothing{
Also \cite{cj} prove the
Bethuel-Br\'ezis-H\'elein type result
$\forall\rho>0$, as $\epsilon\to0 $
$$
  \min\limits_{\alpha\in [0,2\pi]}||\psi^\epsilon
  -e^{i\alpha} H_{\zb(t)}||_{H^1(T^2_\rho)}\to 0
$$
where $H_{\zb}$ is the Bethuel-Br\'ezis-H\'elein canonical
harmonic map with singularities at $z_1, \ldots, z_N$ and
$T^2_\rho=T^2 / \cup_i B_\rho(z_i)$,
and \cite{lx} show that the rescaled linear momentum
$Im(\bar{\psi}_\e \nabla \psi_\e)$
converges (on the time-scale $O(1)$) to a
solution of an incompressible Euler equation.
}
These results describe the dynamics of the vortex
centers, but say less about the vortex {\it structure} of the
solutions. More recently, this type of analysis was pushed further
in~\cite{BOS} for the gradient flow, and
in~\cite{JSp} for the Gross-Pitaevskii equation.
Analogous work for the magnetic case was done in~\cite{dl, Spirn, ss2}.

Another direction of enquiry concerns vortex dynamics in the
(near) self-dual regime $\kappa \approx 1/2$, where vortices
are (nearly) non-interacting. The non-rigorous groundwork
for this problem was laid in \cite{m,AH}, and some
rigorous results were obtained in \cite{s,ds}.

Finally, we remark that all of the above-described vortex
dynamics results are essentially finite-time results.
It is of great interest to understand truly long-time
behaviour of solutions, especially for dispersive problems
such as the Gross-Pitaevskii equation~\eqref{eq:gp}, where
radiation, and its interaction with vortices, should play
a key role. Non-rigorous results concerning vortices
and radiation were obtained in~\cite{os}. While rigorous
results are still lacking, some first steps were made
in~\cite{GNT} which determines the long-time dynamics
of low energy (near ground-state) solutions.


\noindent
\paragraph{Related time-dependent equations.}
The general framework developed to study vortex dynamics
(described in Section~\ref{sec:vdyn} below) applies also
to coupled Schr\"odinger and Maxwell equations
\begin{equation}
\begin{split}
  & \g \p_t \psi = \Delta_A \psi + \kappa(1-|\psi|^2)\psi  \\
  & \p_t^2 A = -curl^2 A + Im(\bar{\psi} \nabla_A \psi)
\end{split}
\end{equation}
with $Re \g \geq 0$,
or to the Chern-Simons variant of these equations,
though the implementation for $Re \g = 0$
requires some additional technical steps.

\section{Perspectives}

In this review, we described recent results on existence and dynamics
of vortices for the Ginzburg-Landau equations on $\R^2$ (corresponding
to bulk superconductors which are homogeneous in one direction).
These results address key physical phenomena in superconductors, and
involve subtle and beautiful mathematics. They also point toward
many promising directions of future research, a few of which we
simply list here:
\begin{itemize}
\item
vortices in bounded and unbounded domains in $\R^2$, or
on two-dimensional manifolds
\item
three-dimensional structures
\item
superconductivity in two-dimensional films
\item
superconducting currents
\item
Abrikosov lattices with higher fluxes per cell
\item
profile and stability of Abrikosov lattices, and nucleation
dynamics
\item stationary states of several vortices with a discrete symmetry group which is a subgroup of $O(2)$ (cf. \cite{os2})
\item
time-asymptotic behaviour of time-dependent Ginzburg-Landau
equations, in particular asymptotic stability of vortices and slowly decaying multivortex states (e.g. slowly spiraling vortices, cf. \cite{os, bao})
\item
dynamics of line vortices (for non-rigorous results
see~\cite{e,pirub})
\item
derivation of the Ginzburg-Landau equations from
the Bardeen-Cooper-Schrieffer model;
corrections to the Ginzburg-Landau model (\cite{Ov});
microscopic fluctuations around stationary solutions
(see e.g. \cite{HRW} and references therein for computation of the density of states in the vortex cores)
\item
related physical phenomena, such as vortices in the
Ginzburg-Landau theory of the fractional Hall effect
(\cite{ChenSpirn, ds2}) and of liquid crystals
(\cite{BaumannPhillips}); Landau-Lifshitz equations of
ferromagnetism (\cite{PapTom,GusSh,GNT2}); and nonlinear
sigma-models (\cite{KomPap}). 
\end{itemize}
This list of open problems suggests that the area of research centered around vortex dynamics is at the onset of a very promising development in which interesting mathematics 
is inspired by theoretically and practically important physical problems.}



\begin{thebibliography}{9999}

\bibitem{Abr} A.A. Abrikosov,
On the magnetic properties of superconductors of the second group.
{\it Soviet Physics, JETP} {\bf 5} 1174-1182 (1957).


\bibitem{ABN}
    A.~Aftalion, X.~Blanc, and F.~Nier,
    Lowest Landau level functional and Bargmann spaces for Bose
    Einsein condensates.
    \emph{J. Fun. Anal.} \textbf{241} (2006) 661-702.

\bibitem{AS}
    A.~Aftalion and S.~Serfaty,
    Lowest Landau level approach in superconductivity for the Abrikosov lattice close to $H\sb {c\sb 2}$,
     \emph{Selecta Math.} (N.S.) 13 (2007), 183--202.


\bibitem{ABS1}  S. Alama,  L. Bronsard and E. Sandier,
On the shape of interlayer vortices in the Lawrence--Doniach
model.  \emph{Trans. AMS} vol. 360 (2008), no. 1, pp. 1--34.
%
\bibitem{ABS2} S. Alama, L. Bronsard, and E. Sandier, Periodic Minimizers of the  Anisotropic Ginzburg--Landau Model.  \emph{Calc. Var. Partial Differential Equations}  vol. 36  (2009),  no. 3, 399--417.


\bibitem{ABS3}  S. Alama,  L. Bronsard and E. Sandier, Minimizers of the Lawrence--Doniach functional with oblique magnetic fields, \emph{Comm. Math. Phys.} (to appear).


\bibitem{Al}
    Y.~Almog,
    On the bifurcation and stability of periodic solutions of the Ginzburg-Landau equations in the plane,
    \emph{SIAM J. Appl. Math.} 61 (2000), 149--171.




\bibitem{Al2}
    Y.~Almog,
    Abrikosov lattices in finite domains.
    \emph{Comm. Math. Phys.} \textbf{262} (2006) 677-702.


\bibitem{AH}
M. Atiyah, N. Hitchin,
{\it The geometry and dynamics of magnetic monopoles}.
Princeton Univ. Press (1988).


\bibitem{AS}   H. Aydi, E. Sandier,
Vortex analysis of the periodic Ginzburg-Landau model, \emph{Ann. Inst. H. Poincar\'e Anal. Non Lin\'eaire}  26  (2009),  no. 4, 1223--1236.



\bibitem{BGT}
    E.~Barany, M.~Golubitsky, and J.~Turksi,
    Bifurcations with local gauge symmetries in the Ginzburg-Landau
    equations.
    \emph{Phys. D} \textbf{67} (1993) 66-87.

\bibitem{bao} Y. ~Zhang, W. ~Bao and Q. ~Du, The dynamics and interaction of quantized vortices in Ginzburg-Landau-Schr\"odinger equation.   \emph{SIAM J. Appl. Math.} \textbf{ 67}, No. 6, (2007) 1740-1775.



\bibitem{BaumannPhillips}
P. Baumann, D. Phillips, Q. Shen,
Singular limits in polymerized liquid crystals.
\emph{Proc. Roy. Soc. Edinburgh} \textbf{A 133} (2003)
no. 1, 11-34.

\bibitem{BC}
M.S. Berger, Y. Y. Chen,
Symmetric vortices for the nonlinear Ginzburg-Landau
equations of superconductivity, and the
nonlinear desingularization phenomenon.
\emph{J. Fun. Anal.} {\bf 82} (1989) 259-295.

\bibitem{bbh}
F. Bethuel, H. Brezis, F. H\'elein,
{\it Ginzburg-Landau Vortices} Birkhauser (1994).

\bibitem{BOS}
F. Bethuel, G. Orlandi, D. Smets,
Dynamics of multiple degree Ginzburg-Landau vortices.
\emph{Comm. Math. Phys.} \textbf{272} (2007) no. 1, 229-261.

\bibitem{Bog} E. B. Bogomol'nyi,
The stability of classical solutions.
{\it Yad. Fiz.} {\bf 24} 861-870 (1976).


\bibitem{bm}
J. Burzlaff, V. Moncrief,
The global existence of time-dependent vortex solutions.
{\it J. Math. Phys.} {\bf 26} (1985) 1368-1372.

\bibitem{Ch}
    S.~J.~Chapman,
    Nucleation of superconductivity in decreasing fields,
	\emph{European J. Appl. Math.}  5  (1994),  449--468.

\bibitem{cho}
S.J. Chapman, S.D. Howison, J.R. Ockendon,
Macroscopic models for superconductivity.
\emph{SIAM Rev.} {\bf 34} (1992) no.4, 529-560.

\bibitem{ChenSpirn}
R.M. Chen, D. Spirn,
Symmetric Chern-Simons vortices.
\emph{Comm. Math. Phys.} \textbf{285} (2009) 1005-1031.

\bibitem{cj} J. Colliander, R. Jerrard, Vortex dynamics for the Ginzburg-Landau Schr\"odinger equation. \emph{Int. Math. Res. Not.} {\bf 7} (1998) 333-358.

\bibitem{ComSau} M. Comte, M. Sauvageot, On the hessian of the energy form in the Ginzburg-Landau model of superconductivity.
\emph{Reviews in Math. Phys.}
Vol. 16, No. 4 (2004) 421-450.


\bibitem{ds}
S. Demoulini, D. Stuart,
Gradient flow of the superconducting Ginzburg-Landau
functional on the plane.
\emph{Comm. Anal. Geom.} {\bf 5} (1997) 121-198.

\bibitem{ds2}
S. Demoulini, D. Stuart, Adiabatic limit and the slow motion of vortices
     in a Chern-Simons-Schroedinger system.
\emph{Comm. Math. Phys.} {\bf } (2009)

\bibitem{dl} Q.~Du, F.-H. Lin, Ginzburg-Landau vortices: dynamics, pinning, and hysteresis. \emph{SIAM J. math. Anal.} {\bf  28} (1997), no 6, 1265-1293.


\bibitem{Dutour}
	M.~Dutour,
	\emph{Phase diagram for Abrikosov lattice},
	J. Math. Phys.  42  (2001), 4915--4926.

\bibitem{Dutour2}
	M.~Dutour,
	\emph{ Bifurcation vers l'tat d'Abrikosov et diagramme
des phases}, Thesis Orsay,
http://www.arxiv.org/abs/math-ph/9912011.


\bibitem{e}
W. E, Dynamics of vortices in Ginzburg-Landau
theories with applications to superconductivity.
\emph{Physica D} {\bf 77} (1994) 383-404.



	
\bibitem{E}	Weinan E,  \emph{Dynamics of vortices in Ginzburg-landau theories with applications to superconductivity}, Physica D 77 (1994), 383-404.




\bibitem{Eil}
    G.~Eilenberger,
    Zu Abrikosovs Theorie der periodischen L\"osungen der GL-Gleichungen f\"ur Supraleiter 2.
   \emph{Zeitschrift f\"ur Physik}  180 (1964), 32--42.





   
 \bibitem{FHSS} R.L. Frank, C. Hainzl,  R. Seiringer, J.P. Solovej, Microscopic derivation of Ginzburg-Landau theory, Journal of
AMS Volume 25, Number 3, July 2012, 667Ð713. 

   




\bibitem{Helffer}
S. Fournais, B. Helffer,
\emph{Spectral Methods in Surface Superconductivity.}
Prog. Nonlin. Diff. Eqns. \textbf{77} (2010).

\bibitem{GL}
    V.L. Ginzburg and L.D. Landau,
    On the theory of superconductivity.
    \emph{Zh. Ekso. Theor. Fiz.} \textbf{20} 1064-1082 (1950).

\bibitem{Gorkov}
L.P. Gorkov,
\emph{Soviet Physics JETP} \textbf{36} (1959) 635.

\bibitem{GE}
L.P. Gork'ov, G.M. Eliashberg,
Generalization of the Ginzburg-Landau equations for non-stationary
problems in the case of alloys with paramagnetic impurities.
\emph{Soviet Physics JETP} {\bf 27} no.2 (1968) 328-334.



\bibitem{Gun1} R. C. Gunning, The structure of factors of automorphy. American Journal of Mathematics, Vol. 78, No. 2 (Apr., 1956), pp. 357-382.

\bibitem{Gun2}  R. C. Gunning, Riemann Surfaces and Generalized Theta Functions, Springer, 1976.



\bibitem{G} S. Gustafson,
Dynamical stability of magnetic vortices.
\emph{Nonlinearity} {\bf 15} (2002) 1717-1728.




\bibitem{GS}
S. Gustafson, I.M. Sigal,
The stability of magnetic vortices.
\emph{Comm. Math. Phys.} {\bf 212} (2000) 257-275.


\bibitem{GS1} S.~J. Gustafson and I.~M. Sigal. \newblock {\em Mathematical Concepts of Quantum Mechanics}. \newblock Universitext. Springer-Verlag, Berlin, 2003.


\bibitem{GS2}
S. Gustafson, I.M. Sigal,
Effective dynamics of magnetic vortices.
\emph{Adv. Math.} \textbf{199} (2006) 448-498.





\bibitem{GST}
    S.~J.~Gustafson, I.~M.~Sigal and T. Tzaneteas,
    \emph{Statics and dynamics of magnetic vortices and of Nielsen-Olesen (Nambu) strings},
    J. Math. Phys. 51, 015217 (2010).






\bibitem{GT}
S. Gustafson, F. Ting.
Dynamic stability and instability of pinned
fundamental vortices.
\emph{J. Nonlin. Sci.} \textbf{19}
(2009) 341-374.

\bibitem{HRW}
H.F. Hess, R.B. Robinson, J.V. Waszczak,
STM spectroscopy of vortex cores and the flux lattice.
\emph{Physica B} \textbf{169} (1991) 422-431.


\bibitem{jr} L. Jacobs, C. Rebbi,
Interaction of superconducting vortices.
\emph{Phys. Rev.} {\bf B19} (1979) 4486-4494.

\bibitem{JT}
    A.~Jaffe and C.~Taubes,
    \emph{Vortices and Monopoloes: Structure of Static Gauge Theories}.
    Progress in Physics 2. Birkh\"auser, Boston, Basel, Stuttgart (1980).

\bibitem{j}
R. Jerrard,
Vortex dynamics for the Ginzburg-Landau wave equation.
\emph{Calc. Var. Partial Diff. Eqns.} {\bf 9} (1999) no.8, 683-688.

\bibitem{js}
R. Jerrard, M. Soner,
Dynamics of Ginzburg-Landau vortices.
\emph{Arch. Rational Mech. Anal.} {\bf 142} (1998) no.2, 99-125.

\bibitem{JSp}
R. Jerrard, D. Spirn,
Refined Jacobian estimates and Gross-Piaevsky vortex dynamics.
\emph{Arch. Rat. Mech. Anal.} {\bf 190} (2008) 425-475.

\bibitem{KRA}
    W.H.~Kleiner, L.~M.~Roth, and S.~H.~Autler,
    Bulk solution of Ginzburg-Landau equations for Type II
    Superconductors: Upper Critical Field Region.
    \emph{Phys. Rev.} \textbf{133} A1226-A1227 (1964).

\bibitem{KomPap}
S. Komineas, N. Papanicolaou,
Vortex dynamics in two-dimensional antiferromagnets.
\emph{Nonlinearity} \textbf{11} (1998) 265-290.

\bibitem{Lash}
    G.~Lasher,
    Series solution of the Ginzburg-Landau equations for the Abrikosov
    mixed state.
    \emph{Phys. Rev.} \textbf{140} A523-A528 (1965).



\bibitem{lx} F.-H. Lin, J. Xin,
On the incompressible fluid limit and the vortex motion law
of the nonlinear Schr\"odinger equation.
\emph{Comm. Math. Phys.} {\bf 200} (1999) 249-274.






\bibitem{m}
N. Manton,
A remark on the scattering of BPS monopoles.
\emph{Phys. Lett.} {\bf B 110} (1982) no.1, 54-56.

\bibitem{ms}
N. Manton and P. Sutcliffe, \emph{Topological Solitons}, Cambridge Monographs on Mathematical Physics, Cambridge University Press, 2004


\bibitem{NV}
S. Nonnenmacher, A. Voros,
Chaotic eigenfunctions in phase space.
\emph{J. Stat. Phys.} \textbf{92} (1998) 431-518.

\bibitem{Odeh}
    F.~Odeh,
    Existence and bifurcation theorems for the Ginzburg-Landau
    equations.
    \emph{J. Math. Phys.} \textbf{8} 2351-2356 (1967).

\bibitem{On}
L. Onsager,
Statistical hydrodynamics.
\emph{Nuovo Cimento V-VI (Suppl.)} {\bf 2} (1949) 279.

\bibitem{Ov}
Yu. N. Ovchinnikov,
Structure of the superconducting state of superconductors
near the critical field $H_{c2}$ for values of the
Ginzburg-Landau parameter $\kappa$ close to unity.
\emph{JETP} \textbf{85} (4) (1997) 818-823.
\bibitem{Ov2}
Yu. N. Ovchinnikov,
Generalized Ginzburg-Landau equation and properties of
superconductors for values of Ginzburg-Landau parameter
$\kappa$ close to 1.
\emph{JETP} \textbf{88} (2) (1999) 398-405.


\bibitem{os2}
Yu. Ovchinnikov, I.M. Sigal, Symmetry breaking solutions to the Ginzburg-Landau equations. \emph{JETP} {\bf } (2004), 1090-1108.







\bibitem{pr}
L. Peres, J. Rubinstein,
Vortex dynamics in $U(1)$ Ginzburg-Landau models.
\emph{Physics D} {\bf 64} (1993) 299-309.

\bibitem{pirod}
L.M. Pismen, D. Rodriguez,
Mobility of singularities in dissipative
Ginzburg-Landau equations.
\emph{Phys Rev. A} {\bf 42} (1990) 2471.

\bibitem{rps}
L.M. Pismen, D. Rodriguez, L. Sirovich,
\emph{Phys. Rev. A} {\bf 44} (1991) 798.

\bibitem{pirub}
L.M. Pismen and J. Rubinstein,
Motion of vortex lines in the Ginzburg-Landau model.
\emph{Physics D} {\bf 47} (1991) 353-360.

\bibitem{p}
B. Plohr,
Princeton thesis (1980).

\bibitem{PapTom}
N. Papanicolaou, T.N. Tomaras,
Dynamics of magnetic vortices,
\emph{Nucl Phys.} \textbf{B360} (1991) 425-462;
Dynamics of interacting magnetic vortices in a model
Landau-Lifshitz equation.
\emph{Phys. D} \textbf{80} (1995) 225-245.

\bibitem{PapTom2}
N. Papanicolaou, T.N. Tomaras,
On dynamics of vortices in a nonrelativistic Ginzburg-Landau model.
\emph{Phys. Lett.} \textbf{179} (1993) 33-37.




\bibitem{Rub}
    J.~Rubinstein,
    Six Lectures on Superconductivity.
    \emph{Boundaries, Interfaces, and Transitions.
    CRM Proc. Lec. Notes} \textbf{13} (1998) 163-184.

\bibitem{SS}
     E.~Sandier and S.~Serfaty,
    \emph{Vortices in the Magnetic Ginzburg-Landau Model}.
    Progress in Nonlinear Differential Equations and their Applications, Vol 70, Birkh\"auser (2007).

\bibitem{ss2}
     E.~Sandier and S.~Serfaty, Gamma-convergence of gradient flows with applications to Ginzburg-Landau. \emph{Comm. Pure Appl. Math.} \textbf{ 57} (2004), no 12, 1627-1672.

\bibitem{Sau} M. Sauvageot,  Classification of symmetric vortices for the Ginzburg-Landau equation. \emph{Diff. Int. Equations} 19 (2006), no. 7, 721Ð760. 


\bibitem{Schmidt} A. Schmidt, A time depednet Ginzburg-Landau equation and its application to the problem of resistivity in the mixed state. Phys. Kondens. Materie \textbf{ 5} (1966), 302-317.

\bibitem{st}
     I.M.~Sigal and F.~Ting, Pinning of magnetic vortices. \emph{Algebra and Analysis} \textbf{ 16} (2004) 239-268.
     

\bibitem{ST1}  I.~M. Sigal and T.~Tzaneteas.
\newblock  Abrikosov vortex lattices at weak magnetic fields, \emph{Journal of Functional Analysis}, 263 (2012) 675Ð-702, arXiv, 2011.



\bibitem{ST2} I.~M.~Sigal and T.~Tzaneteas.  Stability of Abrikosov lattices under gauge-periodic perturbations, \emph{Nonlinearity} 25 (2012) 1Ð-24, arXiv.

\bibitem{ST3} I.~M.~Sigal and T.~Tzaneteas.  On stability of Abrikosov lattices, arXiv 2013.


\bibitem{Sob} 
A. V. Sobolev,  Quasi-classical asymptotics for the Pauli operator, \emph{Comm. Math. Phys.} 194 (1998), no. 1, 109�134.

\bibitem{Spirn1}
D. Spirn,
Vortex dynamics of the full time depedent the Ginzburg-Landau equations.
\emph{Comm. Pure Appl. Math.} \textbf{55} (2002) 537-581.



\bibitem{Spirn}
D. Spirn,
Vortex dynamics of the Ginzburg-Landau-Schr\"odinger equations.
\emph{SIAM. J. Math. Anal.} \textbf{34} (2003) 1435-1476.

\bibitem{StrTom}
G.N. Stratopoulos, T.M. Tomaras,
Vortex pairs in charged fluids.
\emph{Phys. Rev. B} \textbf{N17} (1996) 12493-12504.



\bibitem{s} D. Stuart:   Dynamics of Abelian Higgs vortices in
the near Bogomolny regime,  \emph{Commun. Math. Phys.} {\bf 159} (1994) 51-91.


 \bibitem{Takac}
    P.~Tak\'a\v{c},
    Bifurcations and Vortex Formation in the Ginzburg-Landau
    Equations.
    \emph{Zeit. Angew. Math. Mech.} \textbf{81} 523-539 (2001).
    
    
   
\bibitem{Taub1} C. Taubes:   Arbitrary $n$-vortex solutions to the
first order Ginzburg-Landau equations.
\emph{Comm. Math. Phys.} {\bf 72} (1980) 277.

\bibitem{Taub2} C. Taubes:   On the equivalence of the first and second
order equations for gauge theories.
\emph{Comm. Math. Phys.} {\bf 75} (1980) 207.




\bibitem{TingWei}   F. Ting and J.-C. Wei:  Finite-energy degree-changing non-radial magnetic vortex
solutions to the Ginzburg-Landau equations,  \emph{Commun. Math. Phys.} 2013 (to appear). 

\bibitem{Tink}     M.~Tinkham,   \emph{Introduction to Superconductivity},     McGraw-Hill Book Co., New York, 1996.



\bibitem{Tilley}
    D.~R.~Tilley and J.~Tilley,
    \emph{Superfluidity and Superconductivity}.
    3rd edition. Institute of Physics Publishing, Bristol and Philadelphia (1990).




\bibitem{TS}
T.~Tzaneteas and I.~M. Sigal,
\newblock  Abrikosov lattice solutions of the Ginzburg-Landau equations.
\newblock In {\em Spectral Theory and Geometric Analysis}. Contemporary Mathematics,  535 (2011), 195-213, AMS,  arXiv.



\bibitem{TS2} T.~Tzaneteas and I.~M. Sigal
\newblock  On Abrikosov lattice solutions of the Ginzburg-Landau equations, \emph{Mathematical Modelling of Natural Phenomena}, 2013  (to appear) \newblock  arXiv1112.1897, 2011.

     


\bibitem{Wein} E. Weinberg,
Multivortex solutions of the Ginzburg-Landau equations.
\emph{Phys. Rev. D} {\bf 19} (1979) 3008-3012.

\bibitem{Witten}
E. Witten,
From superconductors and four-manifolds to weak interactions.
\emph{Bull. AMS} \textbf{44} no. 3 (2007) 361-391.



















\end{thebibliography}
\end{document}